    \tikzstyle{mathbox} = [inner sep=0pt, anchor=base]
    \tikzstyle{every picture}+=[remember picture]
\def\tens#1{\ensuremath{\mathsf{#1}}}	
\def\vec#1{\ensuremath{\mathchoice	
                     {\mbox{\boldmath$\displaystyle\mathbf{#1}$}}
                     {\mbox{\boldmath$\textstyle\mathbf{#1}$}}
                     {\mbox{\boldmath$\scriptstyle\mathbf{#1}$}}
                     {\mbox{\boldmath$\scriptscriptstyle\mathbf{#1}$}}}}
\newtheorem{theorem}{Theorem}[section]
\newtheorem{lemma}{Lemma}[section]
\newtheorem{definition}{Definition}[section]
\title{PageRank for evolving link structures
}
\author{
 Christopher Engstr{\"o}m,\\
Division of Applied Mathematics\\
School of Education, Culture and Communication (UKK)\\
M{\"a}lardalen University,\\
christopher.engstrom@mdh.se\\[0.5cm] 
Sergei Silvestrov\\
Division of Applied Mathematics\\
School of Education, Culture and Communication (UKK)\\
M{\"a}lardalen University\\
sergei.silvestrov@mdh.se}
\begin{document}

\maketitle

\date{}


\begin{abstract}
In this article we will look at the PageRank algorithm used as part of the ranking process of different Internet pages in search engines by for example Google. This article has its main focus in the understanding of the behavior of PageRank as the system dynamically changes either by contracting or expanding such as when adding or subtracting nodes or links or groups of nodes or links. In particular we will take a look at link structures consisting of a line of nodes or a complete graph where every node links to all others.

We will look at PageRank as the solution of a linear system of equations and do our examination in both the ordinary normalized version of PageRank as well as the non-normalized version found by solving the linear system. We will see that it is possible to find explicit formulas for the PageRank in some simple link structures and using these formulas take a more in-depth look at the behavior of the ranking as the system changes. \\
{\bf Keywords}{ PageRank\and Random walk \and Graphs \and Linear system}\\
{\bf MSC codes} 05C50  \and 15A18 \and 15A51 \and 65C40
\end{abstract}

\section{Introduction}
\label{intro}
PageRank is a method in which we can rank nodes in different link structures such as Internet pages on the Web in order of "importance" given the link structure of the complete system. It is important that the method is extremely fast since there is a huge number of Internet pages. It is also important that the algorithm returns the most relevant results first since very few people will look through more than a couple of pages when doing a search in a search engine. \cite{Brin1998107} 

While PageRank was originally constructed for use in search engines, there are other uses of PageRank or similar methods, for example in the EigenTrust algorithm for reputation management to decrease distribution of unauthentic files in P2P networks. \cite{Kamvar:2003:EAR:775152.775242}

Calculating PageRank is usually done using the Power method which can be implemented very efficiently, even for very large systems. The convergence speed of the Power method and it's dependence on certain parameters have been studied to some extent. For example the Power method on a graph structure such as that created by the Web will converge with a convergence rate of $c$, where $c$ is one of the parameters used in the definition \cite{ilprints582}, and the problem is well conditioned unless $c$ is very close to $1$ \cite{ilprints597}. However since the number of pages on the Web is huge, extensive work has been done in trying to improve the computation time of PageRank even further. One example is by aggregating webpages that are "close" and are expected to have a similar PageRank as in \cite{5399514}. Another method used to speed up calculations is found in \cite{Kamvar200451} where they do not compute the PageRank of pages that have already converged in every iteration. Other methods to speed up calculations include removing "dangling nodes" before computing PageRank and then calculate them at the end or explore other methods such as using a power series formulation of PageRank \cite{FAndersson_art_PR}.

There are also work done on the large scale using PageRank and other measures in order to learn more about the Web, for example looking at the distribution of PageRank both theoretically and experimentally such as in \cite{Dhyani:2003:DVS:942051.942054}.

While the theory behind PageRank is well understood from Perron-Frobenius theory for non-negative irreducible matrices \cite{berman1994nonnegative,gantmacher1959theory,lancaster1969theory} and the study of Markov chains \cite{NorrisMC,Ryden2000m}, how PageRank is affected from changes in the the system or parameters is not as well known. 


In this artcle we start by giving a short introduction on PageRank and some notation and definitions used throughout the article. 
We will look at PageRank as the solution to a linear system of equations and what we can learn using this representation. Looking at some common graph structures we want to gain a better understanding of the changes in PageRank as the graph structure changes. This could for example be used in finding good approximations of PageRank of certain structures in order to speed up calculations further. 
We will look at both the "ordinary" normalized version of PageRank as well as a non-normalized version we get by solving the linear system. We will see how this non-normalized version corresponds to the probabilities of a random walk through the graph and how we can use this to find the PageRank of some systems using this perspective rather than solving the system or computing the dominant eigenvector. 
Mainly two different structures, first a simple line in Sect.~\ref{SimpleLine_sec} and later a complete graph in Sect.~\ref{CCompGraph} will be examined. In both cases we will see that we can find explicit expressions for the PageRank depending on the number of nodes. In both cases of the "ordinary" PageRank as well as a non-normalized version expressions for the PageRank will be found for both the structure itself as well as the PageRank after doing some simple modifications. The last graph structure we will look at is when we combine the simple line with the complete graph by adding a link between them in Sect.~\ref{conn_line_complete}.
In Sect.~\ref{closer_look_formula} and Sect.~\ref{formula_norm} we will take a closer look at the found formulas for some of the examples mainly by looking at partial derivatives of the PageRank. We will see one of the possible reasons why $c$ is usually choosen to be around $c \approx 0.85$. PageRank for some nodes increases extremely fast while for some other nodes decreases extremely fast for larger $c$, while for lower $c$ the difference in PageRank between nodes is smaller the lower $c$ gets and the initial weight vector have a much larger influence on the final ranking. 
Last we take a short look at what happens when changing the weight vector $\vec{V}$ present in the PageRank formulation as well as giving a short comparison of the differences and similarities between normalized and non-normalized PageRank. 

\section{Calculating PageRank} \label{sec: calcP}

Starting with a number of nodes (Internet pages) and the non-negative matrix $\tens{A}$ with every element $a_{ij} \ne 0$ corresponding to a link from node $i$ to node $j$. The value of element $a_{ij} = 1/n$ where $n$ is the number of outgoing links from node $i$. An example of a graph and corresponding matrix can be seen in Fig.~\ref{exGraph}. 
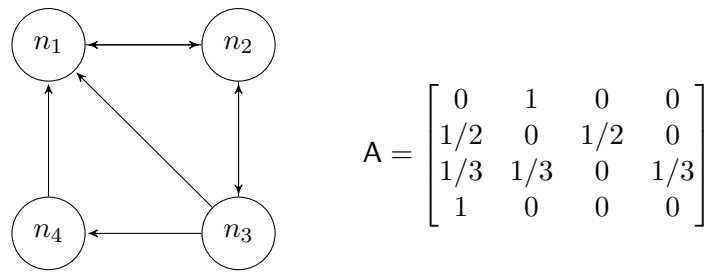
\begin{figure} [!hbt]
\begin{center}
 \begin{tikzpicture}[->,shorten >=1pt,auto,node distance=2.5cm,on grid,>=stealth',
every state/.style={circle,,draw=black}]
\node[state] (A) {$n_1$};
\node[state] (B) [ right=of A] {$n_2$};
\node[state] (C) [below=of B] {$n_3$};
\node[state] (D) [below=of A] {$n_4$};
\path (A) edge node {} (B)
(B) edge node {} (A)
	edge node {} (C)
(C) edge node {} (A)
	edge node {} (B)
	edge node {} (D)
(D) edge node {} (A);
\node [right=1.5cm, yshift = 1.25cm ,text width=4cm] at (C){
 \begin{equation*} \tens{A} = 
\begin{bmatrix}
0 & 1 & 0 & 0  \\
1/2 & 0 & 1/2 & 0 \\
1/3 & 1/3 & 0 & 1/3 \\
1 & 0 & 0 & 0\\
  \end{bmatrix} 
\end{equation*}
};
\end{tikzpicture}
\end{center}
\caption{Directed graph and corresponding matrix system matrix $\tens{A}$}
\label{exGraph}
\end{figure}

Note that we by convention do not allow a node to link to itself. We also need that no nodes have zero outgoing links (dangling nodes) resulting in a row with all zeros. For now we assume that none of these dangling nodes are present in the link matrix. This means that every row will sum to one in the link matrix $\tens{A}$.

The PageRank vector $\vec{R}$ we want for ranking the nodes (pages) is the eigenvector corresponding to the eigenvalue one of matrix $\tens{M}$: 
\begin{displaymath}\tens{M} = c\tens{A}^\top +(1-c) \vec{u}\vec{e}^\top\end{displaymath}
where $0<c<1$, usually $c = 0.85$, $\tens{A}$ the link matrix, $\vec{e}$ a column vector of the same length as the number of nodes ($n$) filled with ones and $\vec{u}$ is a column vector of the same length with elements $u_i$, $0\le u_i \le1$ such that $||\vec{u}||_1 = 1$. For $\vec{u}$ we will usually use the uniform vector (all elements equal) with $u_i = 1/n$ where $n$ is the number of nodes. The result after calculating the PageRank of the example matrix for the system in Fig.~\ref{exGraph} can be seen below: 
\begin{displaymath}\vec{R} \approx \left [ \begin{array}{c}
0.3328 \\
0.3763\\
0.1974 \\
0.0934  \end{array} \right ]\end{displaymath}
This can be seen as a random walk where we start in a random node depending on the weightvector $\vec{u}$. Then with a probability $c$ we go to any of the nodes linked to from that node and with a probability $1-c$ we instead go to a random (in the case of uniform $\vec{u}$) new node. The PageRank vector can be seen as the probability that you after a long time is located in the node in question.\cite{FAndersson_art_PR} More on why an eigenvector with eigenvalue $1$ always exists can be seen in for example \cite{A-25bilEig}.

\paragraph{Role of $c$.} Looking at the formula it is not immediately obvious why we demand $0<c<1$ and what role $c$ holds. We can easily see what happens at the limits, if $c=0$ the PageRank is decided only by the initial weights $\vec{u}$. However if $c = 1$ the weights have no role and the algorithm used for calculating PageRank might not even converge. As $c$ increases, nodes further and further away have an impact on the PageRank of individual nodes. And the opposite for low $c$, the lower $c$ is the more important is the immediate surrounding of a node in deciding its PageRank. The parameter  $c$ is also a very important factor in how fast the algorithms used to calculate PageRank converges, the higher $c$ is the slower the algorithm will converge.

\paragraph{Handling of dangling nodes.} If $\tens{A}$ contains dangling nodes, corresponding row no longer sums to one and there therefor will probably not be any eigenvector with eigenvalue equal to one. The method we use in order to fix this is to instead assume that the dangling nodes link to all nodes equally (or according to some other desired distribution). This gives us: $\tens{T} = \tens{A}+\vec{g}\vec{w}^\top$, where $\vec{g}$ is a column vector with elements equal to one for a dangling node and zero for all other nodes. Here $\vec{w}$ is the distribution according to how we make the dangling nodes link to other nodes (usually uniform or equal to $\vec{u}$). In this work we always use $\vec{w}=\vec{u}$ to simplify calculations. 

There are other ways to handle dangling nodes, for example by adding one new node linking only to itself and let all dangling nodes link to this node. Assuming $\vec{w}=\vec{u}$ these methods should be essentially the same apart from implementation \cite{Bianchini:2005:IP:1052934.1052938}.

\section{Notation and definitions} \label{sec: not}
Here we give some notes on the notation used through the rest of the article in order to clarify which variation of PageRank is used as well as some overall notation and the definition of some common important link structures. We will repeatedly use the $L^1$ norm in comparing the size of different vectors or (parts of) matrices.

First some overall notation:
\begin{itemize}
	\item	$S_G$:	The system of nodes and links for which we want to calculate PageRank, contains the system matrix $\tens{A}_G$ as well as a weight vector $\vec{v}_G$. Subindex $G$ can be either a capital letter or a number in the case of multiple systems.
	\item	$n_G$:	The number of nodes in system $S_G$.
	\item	$\tens{A}_G$:	System matrix where a zero element $a_{ij}$ means there is no link from node $i$ to node $j$. Non-zero elements are equal to $1/r_i$ where $r_i$ is the number of links from node $i$. Size $n_G\times n_G$.
	\item	$\vec{v}_G$:	Non-negative weight vector, not necessary with sum one. Size $n_G\times 1$.
	\item $\vec{u}_G$: 	The weight vector $\vec{v}_G$ normalized such that $||\vec{u}_G||_1 = 1$. We note that $\vec{u}_G$ is proportional to $\vec{v}_G$ ($\vec{u}_G \propto \vec{v}_G$). Size $n_G\times 1$.
	\item	$c$:	Parameter $0<c<1$ for calculating PageRank , usually $c=0.85$.
	\item $\vec{g}_G$: Vector with elements equal to one for dangling nodes and zero for all other in $S_G$. Size $n_G\times 1$.
	\item $\tens{M}_G$: Modified system matrix, $\tens{M}_G = c(\tens{A}_G+\vec{g}_G\vec{u}_G^\top)^\top+(1-c)\vec{u}_G\vec{e}^\top$ used to calculate PageRank, where $\vec{e}$ is the unit vector. Size $n_G\times n_G$.
	\item $S$:	Global system made up of multiple disjoint subsystems $S = S_1 \cup S_2 \ldots \cup S_N$, where $N$ is the number of subsystems.
	\item $\vec{V}$:	Global weight vector for system $S$, $\vec{V} = [\vec{v}_1^\top ~ \vec{v}_2^\top ~ \ldots ~ \vec{v}_N^\top]^\top$, where $N$ is the number of subsystems.
\end{itemize}
In the cases where there is only one possible system the subindex $G$ will often be omitted. For the systems making up $S$ we define disjoint systems in the following way. 
\begin{definition}
Two systems $S_1$, $S_2$ are disjoint if there are no paths from any nodes in $S_1$ to $S_2$ or from any nodes in $S_2$ to $S_1$.
\end{definition}

From earlier we saw how we could calculate PageRank for a system $S$, now we make the assumption that $\vec{w}=\vec{u}$ both since it simplifies calculations, but also since using two different weight vectors for essentially the same thing seems like it could create more problems and unexpected behavior than what you actually could gain from it.

We will use three different ways to define the different versions of PageRank using the notation: 
$$\vec{R}^{(t)}_{G}[S_H\rightarrow S_I, S_J\rightarrow S_K. \ldots]$$ 
where $t$ is the type of PageRank used, $S_G \subseteq S$ is the nodes in the global system $S$ for which $\vec{R}$ is the PageRank. Often $S_G = S$ and we write it as $\vec{R}^{(t)}_{S}$. In the last part within brackets we write possible connections between otherwise disjoint subsystems in $S$, for example an arrow to the right means there are links from the left system to the the right system. How many and what type of links however needs to be specified for every individual case. In more complicated examples there may be arrows pointing in two directions or a number above the arrow notifying how many links we have between the systems. 

We will sometimes give the formula for a specific node $j$ in this case it will be noted as $\vec{R}^{(t)}_{G,j}[S_H\rightarrow S_I,S_J\rightarrow S_K. \ldots]$. When it is obvious which system to use (for example when only one is specified) and there are no connections between systems $S_G$ as well as the brackets with connections between systems will usually be omitted resulting in $\vec{R}^{(t)}_{j}$.  It should be obvious when this is the case. When normalizing the resulting elements such that their sum equal to one we get the traditional PageRank:
\begin{definition}
$\vec{R}^{(1)}_{G}$ for system $S_G$ is defined as the eigenvector with eigenvalue one to the matrix $\tens{M}_G = c(\tens{A}_G+\vec{g}_G\vec{u}_G^\top)^\top+(1-c)\vec{u}_G\vec{e}^\top$. 
\end{definition}
Note that we always have $||\vec{R}^{(1)}||_1 = 1$ and that non-zero elements in $\vec{R}^{(1)}_{G}$ are all positive. The fact that $||\vec{R}^{(1)}||_1 = 1$ is generally not the case in our other versions of PageRank. When instead setting up the resulting equation system and solving it we get the second definition, the result is multiplied with $n_G$ in order to get multiplication with the one vector in case of uniform $\vec{u}_G$.  
\begin{definition}
$\vec{R}^{(2)}_{G}$ for system $S_G$ is defined as $\vec{R}^{(2)}_{G} = (\tens{I}-c\tens{A}_G^\top)^{-1}n_G\vec{u}_G $ 
\end{definition}
We note that generally $||\vec{R}^{(2)}||_1 \neq 1$ as well as $\vec{R}^{(2)}_{G} \neq n_G \vec{R}^{(1)}_{G}$ unless there are no dangling nodes in the system. However the two versions of PageRank are proportional to each other ($\vec{R}^{(2)}_G \propto \vec{R}^{(1)}_G$). Last we have the third way to define PageRank which we define in order to make it possible to use the power method but still be able to compare PageRank between different subsystems $S_G, S_H,\ldots $ without any additional computations as well as simplifying the work when updating the system.
\begin{definition}
$\vec{R}^{(3)}_{G}$ for system $G$ is defined as:
$$\vec{R}^{(3)}_{G} = \frac{\vec{R}^{(1)}_{G}||\vec{v}_G||_1}{d_G}$$
$$d_G = 1 - \sum{c\tens{A}_G^\top} \vec{R}^{(1)}_{G}$$
where $\vec{v}_G$ is the part of the global weight vector $\vec{V}$ belonging to the nodes in system $S_G$
\cite{CEngstrom:MThesis}.
\end{definition}
A closer look at $\vec{R}^{(3)}_{G}$ is left for a later article. The definition of $\vec{R}^{(3)}_{G}$ is included here only for completeness.  
\begin{definition}
A simple line is a graph with $n_L$ nodes where node $n_L$ links to node $n_{L-1}$ which in turn links to node $n_{L-2}$ all the way until node $n_2$ link to node $n_1$.
\end{definition}
The link matrix $\tens{A}_L$ and graph for system $S_L$ consisting of a simple line with $5$ nodes can be seen in Fig.~\ref{GraphLine}: 
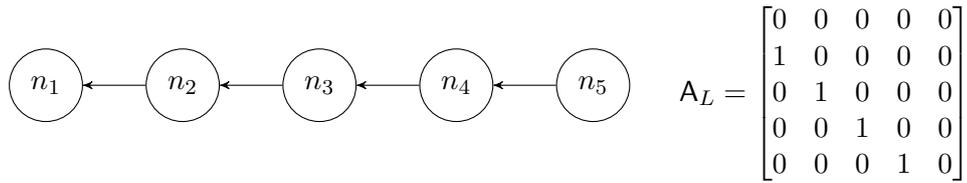
\begin{figure} [!hbt]
\begin{center}
 \begin{tikzpicture}[->,shorten >=0pt,auto,node distance=1.8cm,on grid,>=stealth',
every state/.style={circle,,draw=black}]
\node[state] (A) {$n_1$};
\node[state] (B) [ right=of A] {$n_2$};
\node[state] (C) [right=of B] {$n_3$};
\node[state] (D) [right=of C] {$n_4$};
\node[state] (E) [right=of D] {$n_5$};
\path (B) edge node {} (A)
(C) edge node {} (B)
(D) edge node {} (C)
(E) edge node {} (D);
\node [right=1cm,yshift = 0.1cm, text width=3cm] at (E){
 \begin{equation*} \tens{A}_L = 
\begin{bmatrix}
0 & 0 & 0 & 0 & 0 \\
1 & 0 & 0 & 0 & 0\\
0 & 1 & 0 & 0 & 0\\
0 & 0 & 1 & 0 & 0\\
0 & 0 & 0 & 1 & 0
  \end{bmatrix} 
\end{equation*}
};
\end{tikzpicture}
\end{center}
\caption{The simple line with 5 nodes and corresponding system matrix}
\label{GraphLine}
\end{figure}
\begin{definition}
A complete graph is a group of nodes in which all nodes in the group links to all other nodes in the group. 
\end{definition}
The link matrix $\tens{A}_G$ for system $S_G$ consisting of a complete graph with $5$ nodes can be seen in Fig.~\ref{GraphComplete}
\begin{figure} [!hbt]
\begin{center}
 \begin{tikzpicture}[->,shorten >=0pt,auto,node distance=2.1cm,on grid,>=stealth',
every state/.style={circle,,draw=black,,minimum size=17pt}]
\foreach \name/\angle/\text in {A/162/n_1, B/90/n_2, C/18/n_3, D/-54/n_4, E/-126/n_5}
\node[state, xshift=6cm,yshift=.5cm] (\name) at (\angle:1.5cm) {$\text$};
\foreach \from/\to in {A/B,A/C,A/D,A/E,B/C,B/D,B/E,C/D,C/E,D/E}
\draw [<->] (\from) -- (\to);
\node [right=1cm ,yshift = -0.75cm, text width=4cm] at (C){
 \begin{equation*} \tens{A}_G = \frac{1}{4}
\begin{bmatrix}
0 & 1 & 1 & 1 & 1 \\
1 & 0 & 1 & 1 & 1\\
1 & 1 & 0 & 1 & 1\\
1 & 1 & 1 & 0 & 1\\
1 & 1 & 1 & 1 & 0
  \end{bmatrix} 
\end{equation*}
};
\end{tikzpicture}
\end{center}
\caption{A complete graph with five nodes and corresponding system matrix}
\label{GraphComplete}
\end{figure}
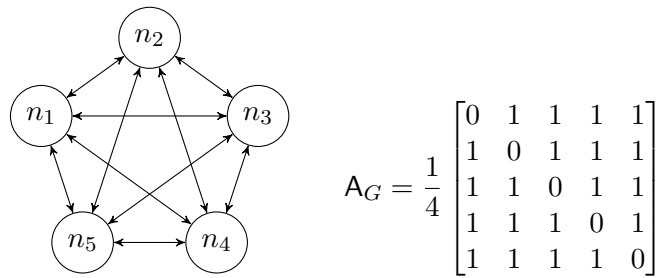

\section{Changes in PageRank when modifying some common structures}
Looking at some simple structures and how PageRank changes as we change them, the goal is to learn something in how and why the rank changes as it does. This in an attempt to answer questions such as: How do I connect my two sites or within my one site in such a way that I won't get any undesired results? In all these examples we will assume uniform $\vec{u}$ (which means we can multiply the inverse $(\tens{I}-c\tens{A}^\top)^{-1}$ with the one vector in order to get $\vec{R}^{(2)}$). 

In this article we will look at two methods two calculate PageRank ($\vec{R}^{(2)}$), while maybe not as useful for calculating PageRank for large systems we use these ways in hope that we can learn something about the behavior of different typical systems or structures within a system.
From earlier we have:
\begin{equation}
\vec{R}^{(1)} = \tens{M}\vec{R}^{(1)}  = (c(\tens{A}+\vec{g}\vec{u}^\top)^\top+(1-c)\vec{u}\vec{e}^\top)\vec{R}^{(1)}
\end{equation}
Calculating the dominant eigenvector $\vec{R}^{(1)}$ is the same as solving the linear system: 
\begin{equation}
\vec{R}^{(1)} = \tens{M}\vec{R}^{(1)} \Leftrightarrow (c\tens{A}^\top-\tens{I})\vec{R}^{(1)} = -(c\vec{u}\vec{g}^\top+(1-c)\vec{u}\vec{e}^\top)\vec{R}^{(1)}
\end{equation}
Since every column of $\vec{u}\vec{g}^\top$ is either equal to $\vec{u}$ or zero and all columns equal to $\vec{u}$ for $\vec{u}\vec{e}^\top$ we can see that $-(c\vec{u}\vec{g}^\top+(1-c)\vec{u}\vec{e}^\top)\vec{R}^{(1)}$ will be proportional to $\vec{u}$. This can be written as: $(c\tens{A}^\top-\tens{I})\vec{R}^{(1)} = k\vec{u}$. 

We choose $k=-n$ in order to get $k\vec{u}$ equal to the one vector in the case of uniform $\vec{u}$, the minus sign to get positive rank and solving the system we get: 
\begin{equation}
\vec{R}^{(2)} = (\tens{I}-c\tens{A}^\top)^{-1}n\vec{u}
\end{equation}
To get the rank to sum to one it is a simple matter of normalizing the result. 
$\vec{R}^{(1)} = \vec{R}^{(2)}/||\vec{R}^{(2)}||_1$ \cite{Bianchini:2005:IP:1052934.1052938}. We note the similarity with this formulation of PageRank (solution to $\vec{R}^{(2)} = c\tens{A}^T\vec{R}^{(2)} +n\vec{u}$) with the one for the potential of a Markov chain with a discounted cost  (solution to $\vec{R}^{(2)} = \alpha \tens{A}\vec{R}^{(2)} +c$), where $0 <\alpha<1$ is the discount factor and $c$ is a cost vector. \cite{NorrisMC} 

Note that we do not need to take any care of the dangling nodes when calculating the PageRank in this way although it is a lot slower than using the Power method or other conventional methods of calculating PageRank. Although we do not need to change $\tens{A}$ for dangling nodes, the result when doing so is changed (but still proportional to $\vec{R}^{(1)}$). We will never change $\tens{A}$ for dangling nodes when solving the linear system and only use the version defined above. 
Note that while solving the equation system is slow it could be possible to get to this non-normalized version of PageRank using another PageRank algorithm, such as using a power series formulation as in \cite{FAndersson:MThesis}. 

The following theorem explains how PageRank ($\vec{R}^{(2)}$) can be computed and how it can be interpret from a probabilistic viewpoint using random random walks on a graph and hitting probabilities. 
\begin{theorem}\label{thm:prProb}
Consider a random walk on a graph described by $c\tens{A}$ described as before. We walk to a new node with probability $c$ and stop with probability $1-c$.

PageRank $\vec{R}^{(2)}$ of a node when using uniform $\vec{u}$ can be written:
\begin{equation}
\vec{R}^{(2)}_j = \left(\sum_{e_i \in S,e_i\neq e_j}{P(e_i \rightarrow e_j)}+1\right) \left(\sum_{k=0}^{\infty}{(P(e_j \rightarrow e_j))^k}\right)
\end{equation}
where $P(e_i \rightarrow e_j)$ is the probability to hit node $e_j$ in a random walk starting in node $e_i$ described as above. This can be seen as the expected number of visits to $e_j$ if we do multiple random walks, starting in every node once.
\end{theorem}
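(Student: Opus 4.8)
The plan is to start from the closed form $\vec{R}^{(2)} = (\tens{I}-c\tens{A}^\top)^{-1}n\vec{u}$ derived above, specialise to uniform $\vec{u}$ so that $n\vec{u}=\vec{e}$ is the all--ones vector, and expand the resolvent as a Neumann series. Since $\tens{A}$ is nonnegative with every row sum equal to $1$ (or $0$ at a dangling node), the matrix $c\tens{A}$ has all row sums at most $c<1$, so $\rho(c\tens{A}^\top)=\rho(c\tens{A})\le c<1$ and $(\tens{I}-c\tens{A}^\top)^{-1}=\sum_{m=0}^{\infty}(c\tens{A}^\top)^m$. Reading off the $j$-th component and transposing inside the sum gives
\[
\vec{R}^{(2)}_j=\sum_{m=0}^{\infty}\sum_{e_i\in S}\big[(c\tens{A})^m\big]_{ij};
\]
all terms are nonnegative, so the order of summation is immaterial.

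Next I would give the probabilistic reading of the matrix entries. One step of the walk in the statement moves from $e_i$ to $e_j$ with probability $c\,a_{ij}$, so $\big[(c\tens{A})^m\big]_{ij}$ is exactly the probability that the killed walk started at $e_i$ sits at $e_j$ after $m$ steps without having stopped earlier. Summing over $m$ therefore produces $G(i,j):=\sum_{m\ge 0}\big[(c\tens{A})^m\big]_{ij}$, the expected number of visits to $e_j$ made by a walk started at $e_i$, and $\vec{R}^{(2)}_j=\sum_{e_i\in S}G(i,j)$ — this is the "one walk from each node" interpretation promised at the end of the theorem.

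The core step is a first--passage decomposition of $G$ via the strong Markov property. Let $P(e_i\rightarrow e_j)$ denote the probability that the killed walk started at $e_i$ ever reaches $e_j$; for $i=j$ this is the probability of returning to $e_j$ at some time $\ge 1$. For $i\neq j$, every visit to $e_j$ is preceded by a first visit to $e_j$, and restarting at that first hit (independence from the strong Markov property) yields $G(i,j)=P(e_i\rightarrow e_j)\,G(j,j)$. For $i=j$, the number of visits equals $1$ plus the number of returns, and the number of returns is geometric with "success" probability $P(e_j\rightarrow e_j)$, so $G(j,j)=\sum_{k=0}^{\infty}\big(P(e_j\rightarrow e_j)\big)^{k}$. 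Substituting back,
\[
\vec{R}^{(2)}_j=G(j,j)+\sum_{\substack{e_i\in S\\ e_i\neq e_j}}G(i,j)=\Big(1+\sum_{\substack{e_i\in S\\ e_i\neq e_j}}P(e_i\rightarrow e_j)\Big)\Big(\sum_{k=0}^{\infty}\big(P(e_j\rightarrow e_j)\big)^{k}\Big),
\]
which is the claimed formula.

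I expect the only real obstacle to be making the strong Markov step airtight with the killing present: one must verify that conditioning on the first hitting time of $e_j$, together with the event that the walk has not been killed before it, leaves the subsequent walk an independent copy started from $e_j$, and that successive returns to $e_j$ really do form an i.i.d.\ geometric pattern even though the walk may be killed in between consecutive returns. The remaining ingredients — convergence of the Neumann series, the Tonelli--type reordering of the nonnegative double sum, and the closing algebra — are routine.
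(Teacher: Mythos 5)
Your proposal is correct and follows essentially the same route as the paper: expand $(\tens{I}-c\tens{A}^\top)^{-1}$ as a Neumann series, interpret $[(c\tens{A})^m]_{ij}$ as the survival-and-location probability of the killed walk, and sum to get expected visit counts. In fact you supply more detail than the paper does on the final step --- the first-passage/strong-Markov decomposition $G(i,j)=P(e_i\rightarrow e_j)\,G(j,j)$ and the geometric form of $G(j,j)$ --- which the paper simply asserts as an implication, so your write-up is a strictly more complete version of the same argument.
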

\begin{proof}
$(c\tens{A}^\top)^k_{ij}$ is the probability to be in node $e_i$ starting in node $e_j$ after $k$ steps. Multiplying with the unit vector $\vec{e}$ (vector with all elements equal to one) therefor gives the sum of all the probabilities to be in node $e_i$ after $k$ steps starting in every node once. The expected total number of visits is the sum of all probabilities to be in node $e_i$ for every step starting in every node:
\begin{equation}
\vec{R}^{(2)}_j =\left( \left(\sum_{k=0}^{\infty}{(c\tens{A}^\top)^k}\right ) \vec{e}\right)_j 
\end{equation}
$\sum_{k=0}^{\infty}{(c\tens{A}^\top)^k}$ is the Neumann series of $(\tens{I}-c\tens{A}^\top)^{-1}$ which is guaranteed to converge since $c\tens{A}^\top$ is non-negative and have column sum $<1$. If $\vec{u}$ is uniform we get by the definition:
\begin{equation}
\begin{split}
\vec{R}^{(2)} &= (\tens{I}-c\tens{A}^\top)^{-1}n\vec{u} = (\tens{I}-c\tens{A}^\top)^{-1}\vec{e} = \left(\sum_{k=0}^{\infty}{(c\tens{A}^\top)^k}\right ) \vec{e}\\
 \Rightarrow \vec{R}^2_j &=  \left(\sum_{e_i \in S,e_i\neq e_j}{P(e_i \rightarrow e_j)}+1\right) \left(\sum_{k=0}^{\infty}{(P(e_j \rightarrow e_j))^k}\right) 
\end{split}
\end{equation}
\qed 
\end{proof}

\subsection{Changes in the simple line} \label{SimpleLine_sec}
Using the simple line as defined earlier we recall that we had the link matrix with an image of the system in Fig.~\ref{GraphLine}
\begin{displaymath} \tens{A} = \left[ \begin{array}{ccccc}
0 & 0 & 0 & 0 & 0 \\
1 & 0 & 0 & 0 & 0\\
0 & 1 & 0 & 0 & 0\\
0 & 0 & 1 & 0 & 0\\
0 & 0 & 0 & 1 & 0
\end{array} \right]
\end{displaymath}
By setting up the system of equations we get the inverse $(\tens{I}-c\tens{A}^\top)^{-1}$ as: 
\begin{displaymath} (\tens{I}-c\tens{A}^\top)^{-1} = \left[ \begin{array}{ccccc}
1 & c & c^2 & c^3 & c^4 \\
0 & 1 & c & c^2 & c^3\\
0 & 0 & 1 & c & c^2\\
0 & 0 & 0 & 1 & c \\
0 & 0 & 0 & 0 & 1
\end{array} \right]
\end{displaymath}
Note that this needs only to be multiplied with $n\vec{u}$ or a multiple of $\vec{u}$ for us to get a meaningful ranking. This gives us $\vec{R}^{(2)}$ (for uniform $\vec{u}$):
\begin{displaymath}\vec{R}^{(2)} = [1+c+c^2+c^3+c^4,1+c+c^2+c^3,1+c+c^2,1+c,1]^\top\end{displaymath}
If wanted to get the common normalized ranking $\vec{R}^{(1)}$ we need to normalize the result to sum to one. Looking at the elements $a_{ij}$ of $(\tens{I}-c\tens{A}^\top)^{-1}$ and considering the example with a random walk through the graph, we can see the value of every element $a_{ij}$ as the probability to get from node $e_j$ to node $e_j$. In the case where the link matrix contain nodes with paths back to itself we will later see that it is actually not the probability to get there but the sum of all probabilities to get from $e_j$ to $e_i$ corresponding to Theorem \ref{thm:prProb}. We can motivate this further by looking at the same line but adding a link back from the first node to the second node. 

\subsubsection{The simple line with node one linking to node two}
Letting node one link to node two in the earlier example gives us the graph in Fig.~\ref{GraphLine2}.
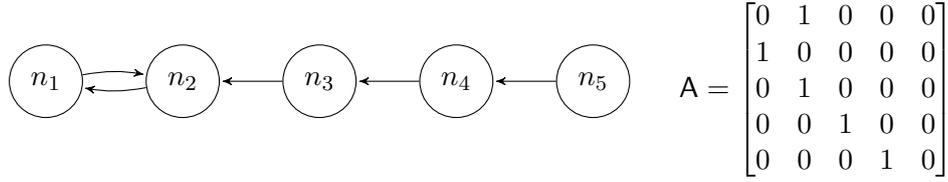
\begin{figure} [!hbt]
\begin{center}
 \begin{tikzpicture}[->,shorten >=1pt,auto,node distance=1.8cm,on grid,>=stealth',
every state/.style={circle,,draw=black}, bend angle=10]
\node[state] (A) {$n_1$};
\node[state] (B) [ right=of A] {$n_2$};
\node[state] (C) [right=of B] {$n_3$};
\node[state] (D) [right=of C] {$n_4$};
\node[state] (E) [right=of D] {$n_5$};
\path (A) edge [bend left] node {} (B)
(B) edge [bend left] node {} (A)
(C) edge node {} (B)
(D) edge node {} (C)
(E) edge node {} (D);
\node [right=1cm,yshift = 0.1cm ,text width=3cm] at (E){
 \begin{equation*} \tens{A} = 
\begin{bmatrix}
0 & 1 & 0 & 0 & 0 \\
1 & 0 & 0 & 0 & 0\\
0 & 1 & 0 & 0 & 0\\
0 & 0 & 1 & 0 & 0\\
0 & 0 & 0 & 1 & 0
  \end{bmatrix} 
\end{equation*}
};
\end{tikzpicture}
\end{center}
\caption{Simple line where the first node links to the second and corresponding system matrix}
\label{GraphLine2}
\end{figure}
The resulting inverse can be written: 
\begin{displaymath} (\tens{I}-c\tens{A}^\top)^{-1} = \left[ \begin{array}{ccccc}
s & sc & sc^2 & sc^3 & sc^4 \\
sc & s & sc & sc^2 & sc^3\\
0 & 0 & 1 & c & c^2\\
0 & 0 & 0 & 1 & c \\
0 & 0 & 0 & 0 & 1
\end{array} \right]
\end{displaymath}
Where $s = \sum_{k=0}^\infty{c^{2k}} = \frac{1}{1-c^2}$ is the sum of all the probabilities of getting from node $1$ or $2$ back to itself.  

From this we can see that the following observations seem to be true.
\begin{itemize}
\item[(i)]	The sum of a column $c_j$ is at most $\sum_{k=0}^\infty{c^k} = \frac{1}{1-c}$ when using uniform $\vec{u}$, with equality if there are no paths to any dangling node from node $j$ and node $j$ is not a dangling node itself. 
\item[(ii)] A diagonal element is equal to one if the node have no paths leading back to itself.
\item[(iii)]	Setting one element in $u_i$ to zero only effects the influence of a random walk starting in the corresponding node. 			
\item[(iiii)]	Every non zero element in the same row can be written as the diagonal element on the same line times the sum of probabilities of getting from all other nodes to the node corresponding to the current line.  			
\item[(iiiii)] Each element $e_{ij}$ of $(\tens{I}-c\tens{A}^\top)^{-1}$ contains the sum of probabilities of all paths starting in node $j$ and ending in node $i$. When doing a random walk by choosing a random link with probability $c$ and stopping with probability $1-c$.   
\end{itemize}
Which is consistent with the statement that the normalized PageRank $\vec{R}^{(1)}_j$ of a node is the probability that a surfer that starts in a random node (page) and keeps clicking links with probability $c$ or starts at a new random page with probability (1-c) is in a given node. However here we can explicitly see all the probabilities and their influence on the ranking. \cite{A-25bilEig}

\subsubsection{Removing a link between two nodes}
When removing a link between two nodes in the simple line we end up with two smaller disjoint lines instead. We note that these could be calculated separately and we would still have the same relation between them. This is interesting since when using the "Power method" or straight calculating $\vec{R}^{(1)}$ this is not possible since more nodes in a system obviously means a lower mean rank since we in that case normalize the result to one.

Especially in the inverse $(\tens{I}-c\tens{A}^\top)^{-1}$ we see that when we remove one link, we remove all the elements in the upper right corresponding to paths from nodes above the removed link to all the ones below it. An example of what the new inverse looks like when removing the link between the third node and the second node in Fig.~\ref{GraphLine} can be seen below:
\begin{displaymath} (\tens{I}-c\tens{A}^\top)^{-1} = \left[ \begin{array}{ccccc}
1 & c & 0 & 0 & 0 \\
0 & 1 & 0 & 0 & 0\\
0 & 0 & 1 & c & c^2\\
0 & 0 & 0 & 1 & c \\
0 & 0 & 0 & 0 & 1
\end{array} \right]
\end{displaymath}
With PageRank: $\vec{R}^{(2)}= [1+c, 1, 1+c+c^2,1+c,1]$ and normalizing constant $N = 5+3c+c^2$, when using a uniform $\vec{u}$

\subsubsection{Adding a new node pointing at one node in the simple line}
A more interesting example is when looking at what happens when we add a single new node, linking to one other node in the simple line. 
Since we make no changes in the line that part of the inverse will stay the same. We will however add a new row and column. The non diagonal element of the new column can be found immediately as $c$ times the column corresponding to the node our new node links to. This since we got the probability $c$ to get to that node instead of $1$ when we start in it. At last we need to add the one at the new element in the diagonal. An example of what the inverse looks like after adding a new node pointing at node $3$ in Fig.~\ref{GraphLine} can be seen below.

\begin{displaymath} (\tens{I}-c\tens{A}^\top)^{-1} = \left[ \begin{array}{cccccc}
1 & c & c^2 & c^3 & c^4 & c^3 \\
0 & 1 & c & c^2 & c^3 & c^2\\
0 & 0 & 1 & c & c^2 & c\\
0 & 0 & 0 & 1 & c  & 0\\
0 & 0 & 0 & 0 & 1 & 0 \\
0 & 0 & 0 & 0 & 0 & 1 
\end{array} \right]
\end{displaymath}
From this easy example we can immediately get an expression for the PageRank of a simple line with one or more added nodes linking to any of the nodes in the simple line. 
\begin{theorem}\label{thm:line_add_node}
The PageRank of a node $e_i$ belonging to the line in a system containing a simple line with one outside node linking to one of the nodes in the line when using uniform weight vector $\vec{u}$ can be written: 
\begin{equation}
\begin{split}
\vec{R}^{(2)}_{i} &= \sum_{k=0}^{n_L-i}{c^k}+b_{ij} = \frac{1-c^{n_L-i+1}}{1-c}+b_{ij}\\
b_{ij} &= \left\{ \begin{split} c^{j+1-i},~  j \geq i \\
0,~  j < i \end{split} \right. 
\end{split}
\end{equation}
where $n_L$ is the number of nodes in the line and the new node link to node $j$. The new node has rank $1$. After normalization we get the PageRank of node $i$ as: 
\begin{equation}
\vec{R}^{(1)}_i= \frac{\frac{1-c^{n_L-i+1}}{1-c}+b_{ij}}{n_L+1+(n_L-1)c+(n_L-2)c^2+\ldots + c^{n_L-1}+\frac{1-c^j}{1-c}}
\end{equation}
where $\vec{R}^{(1)}_i,\vec{R}^{(2)}_i$ is the PageRank of one of the nodes in the original line, $L$ is the number of nodes on the line, $j$ is the number of the node linked to be the new node.

Additionally adding new nodes linking to the line means adding additional $b_{ij}$ parts and adding the corresponding part $\frac{1-c^j}{1-c}$ to the normalizing constant. 
\end{theorem}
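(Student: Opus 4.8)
The plan is to avoid inverting $\tens{I}-c\tens{A}^\top$ by brute force and instead read off every row sum from the probabilistic description already established in Theorem~\ref{thm:prProb}. For uniform $\vec{u}$ that theorem gives $\vec{R}^{(2)}_i = ((\tens{I}-c\tens{A}^\top)^{-1}\vec{e})_i = \sum_{k=0}^{\infty}((c\tens{A}^\top)^k\vec{e})_i$, so $\vec{R}^{(2)}_i$ equals the sum, over all nodes $e_\ell$ of the augmented system, of the total weight of the directed walks from $e_\ell$ to $e_i$, where a walk of length $m$ contributes weight $c^{m}$. The structural fact that makes everything finite is that the augmented graph — the line $n_L\to n_{L-1}\to\cdots\to 1$ together with one extra source node that links only to node $j$ — is acyclic; hence between any ordered pair of nodes there is at most one directed walk, each entry of $(\tens{I}-c\tens{A}^\top)^{-1}$ is either $0$ or a single power $c^{m}$ with $m$ the length of that walk, and the Neumann series collapses to a path count. (Alternatively one could establish the displayed $6\times 6$ inverse pattern in general by a block-triangular computation, but the probabilistic route is shorter.)

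First I would compute $\vec{R}^{(2)}_i$ for a node $e_i$ on the line by classifying the possible starting nodes of walks ending at $e_i$. A line node $e_\ell$ with $\ell\ge i$ reaches $e_i$ by the unique descending walk of length $\ell-i$, contributing $c^{\ell-i}$; summing $\ell$ from $i$ to $n_L$ gives $\sum_{k=0}^{n_L-i}c^k = \frac{1-c^{n_L-i+1}}{1-c}$. The extra node reaches $e_i$ precisely when $i\le j$, along the walk through $e_j$ of length $1+(j-i)=j+1-i$, contributing $c^{j+1-i}$, which is exactly $b_{ij}$; when $i>j$ there is no such walk and the contribution is $0$, again matching $b_{ij}$. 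No node $e_\ell$ with $\ell<i$ can reach $e_i$. Summing the three cases gives $\vec{R}^{(2)}_i=\frac{1-c^{n_L-i+1}}{1-c}+b_{ij}$. Nothing links into the extra node, so the only walk ending there is the empty walk from itself and $\vec{R}^{(2)}_{\text{new}}=c^{0}=1$.

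Next I would pass to $\vec{R}^{(1)}=\vec{R}^{(2)}/\|\vec{R}^{(2)}\|_1$, so it remains to add the entries. I would split $\|\vec{R}^{(2)}\|_1 = \sum_{i=1}^{n_L}\vec{R}^{(2)}_i + 1$ into the geometric part $\sum_{i=1}^{n_L}\sum_{k=0}^{n_L-i}c^k$ and the $b$-part $\sum_{i=1}^{n_L}b_{ij}$. Interchanging the order of summation, $c^k$ is generated once for each $i\le n_L-k$, that is $(n_L-k)$ times, so the geometric part equals $n_L+(n_L-1)c+(n_L-2)c^2+\cdots+c^{n_L-1}$; and $\sum_{i=1}^{n_L}b_{ij}=\sum_{i=1}^{j}c^{j+1-i}=c+c^2+\cdots+c^{j}$. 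Adding the final $1$ from the rank of the extra node and dividing $\vec{R}^{(2)}_i$ by the result gives the stated $\vec{R}^{(1)}_i$. The closing remark about attaching several outside nodes then follows by the same path count (or a one-line induction on the number of added nodes): distinct new nodes are mutually non-interacting sources, so each one contributes an extra summand $b_{ij}$ to every line entry $\vec{R}^{(2)}_i$, an extra block $c+\cdots+c^{j}$ to the normalising constant, and a $1$ for its own rank.

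I do not expect a genuine obstacle: once Theorem~\ref{thm:prProb} and acyclicity are invoked, the proof is just an enumeration of at most one walk per ordered pair of nodes. The step most prone to slips is the arithmetic in $\|\vec{R}^{(2)}\|_1$ — keeping the multiplicities $(n_L-k)$ correct in the double sum and correctly evaluating $\sum_{i=1}^{j}c^{j+1-i}$ as $c+c^2+\cdots+c^{j}$ rather than $1+c+\cdots+c^{j-1}$ — so before writing the general argument I would cross-check the whole computation against the explicit case $n_L=5,\ j=3$ supplied by the $6\times 6$ inverse displayed just above the theorem.
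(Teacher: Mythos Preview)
Your approach coincides with the paper's first proof: invoke Theorem~\ref{thm:prProb}, read off the hitting probabilities $c^{\ell-i}$ along the line and $c^{j+1-i}$ from the extra node (the paper does not name acyclicity explicitly but uses it implicitly, since the return-probability factor is silently set to~$1$), then divide by $\|\vec{R}^{(2)}\|_1$; the paper also supplies a second proof via the block-inverse Lemma~\ref{lem:blockwise}, which you allude to but do not need. Your careful evaluation of the $b$-contribution to the normaliser as $\sum_{i=1}^{j}c^{j+1-i}=c+c^2+\cdots+c^{j}=\dfrac{c(1-c^j)}{1-c}$ is correct and in fact catches a slip in the displayed denominator of the theorem, where the paper writes $\dfrac{1-c^j}{1-c}$ --- the cross-check against the $6\times 6$ example you propose ($n_L=5$, $j=3$) confirms your value.
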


\begin{proof}
From Theorem~\ref{thm:prProb} PageRank for a node when using uniform $\vec{u}$ can be written as: 
$$\vec{R}^{(2)}_i = \left(\sum_{e_j \in S,e_j\neq e_i}{P(e_j \rightarrow e_i)}+1\right) \left(\sum_{k=0}^{\infty}{(P(e_i \rightarrow e_i))^k}\right)$$
where $P(e_j \rightarrow e_i)$ is the probability to hit node $e_i$ starting in node $e_j$. When we consider a random walk on a graph given by $c\tens{A}$ described as before. We walk to a new node with probability $c$ and stop with probability $1-c$.

The probability of getting to any node $e_i$ in the line from any other node $e_j$ in the line once is: 
\begin{equation}
P(e_j\rightarrow e_i) = c^{j-i}, \quad j > i 
\end{equation}
and zero otherwise. Summation over all $j > i$ gives
\begin{equation}
\sum_{e_j \in S,e_j\neq e_i}{P(e_j \rightarrow e_i)}+1=\sum_{k=1}^{n_L-i}{c^k}+1 = \frac{1-c^{n_L-i+1}}{1-c}
\end{equation}
where $L$ is the number of nodes in the line. With the first part shown we need to show that the single outside node linking to node $e_j$ adds $b_{ij} =c^{j+1-i},  ~ j \ge i$. We get this probability in the same way by instead looking at the line created by the first $j$ nodes plus the extra node added linking to node $j$. We get the probability to reach node $e_j$ as $c$ and then $c^2$ for the next and so on. If $ i > j$, $e_i$ does not belong to this line, and we obviously cannot reach it from $e_j$ hence $b_{ij} = 0, ~ i>j$.

Last the PageRank of the "outside" node linking to a node in the line is obviously $1$ since no node links to it. 
The normalized PageRank is found by dividing $\vec{R}^{(2)}$ with $||\vec{R}^{(2)}||_1$.
\qed 
\end{proof}

We also give a proof using matrices but first we will need the following lemma for blockwise inversion used repeatedly throughout the article.. We note that we label the blocks from $\tens{B}$ to $\tens{E}$ rather than from $\tens{A}$ to $\tens{D}$ in order to avoid confusion with the system matrix $\tens{A}$. 

\begin{lemma}\label{lem:blockwise}
\begin{equation}
\begin{bmatrix}
\tens{B}& \tens{C}\\
\tens{D}& \tens{E} \end{bmatrix}^{-1} =  \begin{bmatrix}
(\tens{B}-\tens{C}\tens{E}^{-1}\tens{D})^{-1} & -(\tens{B}-\tens{C}\tens{E}^{-1}\tens{D})^{-1}\tens{C}\tens{E}^{-1} \\
-\tens{E}^{-1}\tens{D}(\tens{B}-\tens{C}\tens{E}^{-1}\tens{D})^{-1} &~~~ \tens{E}^{-1}+\tens{E}^{-1}\tens{D}(\tens{B}-\tens{C}\tens{E}^{-1}\tens{D})^{-1}\tens{C}\tens{E}^{-1} 
\end{bmatrix}
\end{equation}
where $\tens{B},\tens{E}$ is square and $\tens{E},(\tens{B}-\tens{C}\tens{E}^{-1}\tens{D})$ are nonsingular.
\end{lemma}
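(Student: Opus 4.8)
The plan is to prove the identity by factoring the block matrix into a product of three easily invertible block matrices built around the Schur complement $\tens{S} := \tens{B} - \tens{C}\tens{E}^{-1}\tens{D}$, and then inverting that product; this both verifies the formula and shows where it comes from.

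First I would establish the factorization
\[
\begin{bmatrix} \tens{B} & \tens{C} \\ \tens{D} & \tens{E} \end{bmatrix}
= \begin{bmatrix} \tens{I} & \tens{C}\tens{E}^{-1} \\ \tens{0} & \tens{I} \end{bmatrix}
\begin{bmatrix} \tens{S} & \tens{0} \\ \tens{0} & \tens{E} \end{bmatrix}
\begin{bmatrix} \tens{I} & \tens{0} \\ \tens{E}^{-1}\tens{D} & \tens{I} \end{bmatrix},
\]
which follows by multiplying out the right-hand side and using only $\tens{E}\tens{E}^{-1} = \tens{I}$ and $\tens{S} + \tens{C}\tens{E}^{-1}\tens{D} = \tens{B}$. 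The two outer factors are unit block-triangular, hence invertible with inverse obtained by negating the off-diagonal block, and the middle factor is block-diagonal, invertible exactly under the stated hypotheses that $\tens{E}$ and $\tens{S} = \tens{B} - \tens{C}\tens{E}^{-1}\tens{D}$ are nonsingular (squareness of $\tens{B}$ and $\tens{E}$ is what makes all the blocks conformable). Inverting the product in reverse order then gives
\[
\begin{bmatrix} \tens{B} & \tens{C} \\ \tens{D} & \tens{E} \end{bmatrix}^{-1}
= \begin{bmatrix} \tens{I} & \tens{0} \\ -\tens{E}^{-1}\tens{D} & \tens{I} \end{bmatrix}
\begin{bmatrix} \tens{S}^{-1} & \tens{0} \\ \tens{0} & \tens{E}^{-1} \end{bmatrix}
\begin{bmatrix} \tens{I} & -\tens{C}\tens{E}^{-1} \\ \tens{0} & \tens{I} \end{bmatrix},
\]
and carrying out these two block multiplications produces the blocks $\tens{S}^{-1}$, $-\tens{S}^{-1}\tens{C}\tens{E}^{-1}$, $-\tens{E}^{-1}\tens{D}\tens{S}^{-1}$ and $\tens{E}^{-1} + \tens{E}^{-1}\tens{D}\tens{S}^{-1}\tens{C}\tens{E}^{-1}$, which is precisely the claimed matrix after substituting $\tens{S} = \tens{B} - \tens{C}\tens{E}^{-1}\tens{D}$.

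There is no real mathematical obstacle here; the only thing to watch is bookkeeping — keeping the three factors in the correct order, since the blocks need not commute, and checking that $\tens{C}$ and $\tens{D}$ have the conformable shapes that make $\tens{C}\tens{E}^{-1}\tens{D}$ a square matrix of the size of $\tens{B}$. A quicker but less illuminating alternative would be to skip the factorization altogether and simply multiply the asserted inverse by $\begin{bmatrix}\tens{B} & \tens{C} \\ \tens{D} & \tens{E}\end{bmatrix}$, verifying directly that the four resulting blocks collapse to $\tens{I}$, $\tens{0}$, $\tens{0}$, $\tens{I}$ using $\tens{S}\tens{S}^{-1} = \tens{I}$ and $\tens{E}\tens{E}^{-1} = \tens{I}$.
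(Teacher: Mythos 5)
Your proof is correct, but it takes a genuinely different route from the one in the paper. The paper proves Lemma~\ref{lem:blockwise} by brute force: it multiplies $\begin{bmatrix}\tens{B}&\tens{C}\\ \tens{D}&\tens{E}\end{bmatrix}$ by the asserted inverse and checks, block by block, that the four products collapse to $\tens{I},\tens{0},\tens{0},\tens{I}$ --- exactly the ``quicker but less illuminating alternative'' you mention in your last sentence. Your main argument instead factors the matrix as $\tens{U}\,\mathrm{diag}(\tens{S},\tens{E})\,\tens{L}$ with $\tens{S}=\tens{B}-\tens{C}\tens{E}^{-1}\tens{D}$ and unit block-triangular $\tens{U},\tens{L}$, and inverts the product in reverse order; the factorization is easily checked and the block multiplications you describe do yield the stated inverse. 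What your approach buys is that it \emph{derives} the formula rather than merely verifying it, makes transparent why exactly the nonsingularity of $\tens{E}$ and of the Schur complement $\tens{S}$ is the right hypothesis (the two triangular factors are always invertible, so invertibility hinges on the block-diagonal middle factor), and automatically produces a two-sided inverse, whereas the paper's computation as written only checks one side (which suffices for square matrices, but is left implicit). The paper's approach buys brevity of justification at the cost of offering no insight into where the expression comes from. Both are complete proofs.
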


\begin{proof}
To prove the Lemma it is enough to show that:
\begin{equation}
 \begin{bmatrix}
\tens{B}& \tens{C}\\
\tens{D}& \tens{E} \end{bmatrix}  \begin{bmatrix}
(\tens{B}-\tens{C}\tens{E}^{-1}\tens{D})^{-1} & -(\tens{B}-\tens{C}\tens{E}^{-1}\tens{D})^{-1}\tens{C}\tens{E}^{-1} \\
-\tens{E}^{-1}\tens{D}(\tens{B}-\tens{C}\tens{E}^{-1}\tens{D})^{-1} &~~~ \tens{E}^{-1}+\tens{E}^{-1}\tens{D}(\tens{B}-\tens{C}\tens{E}^{-1}\tens{D})^{-1}\tens{C}\tens{E}^{-1} 
\end{bmatrix}= I 
\end{equation}
Looking at the result blockwise we get: 
\begin{equation}
\begin{split}
\tens{B}(\tens{B}-\tens{C}\tens{E}^{-1}\tens{D})^{-1} -\tens{C}\tens{E}^{-1}\tens{D}(\tens{B}-\tens{C}\tens{E}^{-1}\tens{D})^{-1}\\
= (\tens{B}-\tens{C}\tens{E}^{-1}\tens{D})(\tens{B}-\tens{C}\tens{E}^{-1}\tens{D})^{-1} = \tens{I}
\end{split}
\end{equation}
\begin{equation}
\begin{split}
-\tens{B}(\tens{B}-\tens{C}\tens{E}^{-1}\tens{D})^{-1}\tens{C}\tens{E}^{-1} +\tens{C}(\tens{E}^{-1}+\tens{E}^{-1}\tens{D}(\tens{B}-\tens{C}\tens{E}^{-1}\tens{D})^{-1}\tens{C}\tens{E}^{-1})  \\
= \tens{C}\tens{E}^{-1} - (\tens{B}-\tens{C}\tens{E}^{-1}\tens{D})(\tens{B}-\tens{C}\tens{E}^{-1}\tens{D})^{-1}\tens{C}\tens{E}^{-1} = \tens{C}\tens{E}^{-1} -\tens{I}\tens{C}\tens{E}^{-1} = \tens{0}
\end{split}
\end{equation}
\begin{equation}
\begin{split}
\tens{D}(\tens{B}-\tens{C}\tens{E}^{-1}\tens{D})^{-1} -\tens{E}\tens{E}^{-1}\tens{D}(\tens{B}-\tens{C}\tens{E}^{-1}\tens{D})^{-1} \\
= \tens{D}(\tens{B}-\tens{C}\tens{E}^{-1}\tens{D})^{-1}-\tens{D}(\tens{B}-\tens{C}\tens{E}^{-1}\tens{D})^{-1} = \tens{0}
\end{split}
\end{equation}
\begin{equation}
\begin{split}
-\tens{D}(\tens{B}-\tens{C}\tens{E}^{-1}\tens{D})^{-1}\tens{C}\tens{E}^{-1} +\tens{E}(\tens{E}^{-1}+\tens{E}^{-1}\tens{D}(\tens{B}-\tens{C}\tens{E}^{-1}\tens{D})^{-1}\tens{C}\tens{E}^{-1})\\
= -\tens{D}(\tens{B}-\tens{C}\tens{E}^{-1}\tens{D})^{-1}\tens{C}\tens{E}^{-1} + \tens{I} + \tens{D}(\tens{B}-\tens{C}\tens{E}^{-1}\tens{D})^{-1}\tens{C}\tens{E}^{-1} = \tens{I}
\end{split}
\end{equation}
This gives:
\begin{equation}
\begin{split}
\left[  \begin{array}{cc}
\tens{B}& \tens{C}\\
\tens{D}& \tens{E} \end{array} \right]  \left[ \begin{array}{cc}
(\tens{B}-\tens{C}\tens{E}^{-1}\tens{D})^{-1} & -(\tens{B}-\tens{C}\tens{E}^{-1}\tens{D})^{-1}\tens{C}\tens{E}^{-1} \\
-\tens{E}^{-1}\tens{D}(\tens{B}-\tens{C}\tens{E}^{-1}\tens{D})^{-1} &~~~ \tens{E}^{-1}+\tens{E}^{-1}\tens{D}(\tens{B}-\tens{C}\tens{E}^{-1}\tens{D})^{-1}\tens{C}\tens{E}^{-1} 
\end{array} \right]  \\
= \left[\begin{array}{cc}
\tens{I}& \tens{0}\\
\tens{0}& \tens{I} \end{array} \right] = \tens{I} 
\end{split}
\end{equation}
Furthermore we need that $\tens{E}$ and $(\tens{B}-\tens{C}\tens{E}^{-1}\tens{D})$ is nonsingular in order for the matrix to be invertible. \cite{B_Dennis_Matrix}
\qed 
\end{proof}
When using Lemma \ref{lem:blockwise} we will denote the individual blocks off the inverse matrix as described in Definition \ref{def:blockInv}
\begin{definition}\label{def:blockInv}
Given a block matrix $\tens{M}$  we denote the inverse as:
\begin{equation} \tens{M}  = \begin{bmatrix}
\tens{M}_{1,1} & \tens{M}_{1,2} & \ldots & \tens{M}_{1,n} \\
\tens{M}_{2,1} & \tens{M}_{2,2} & \ldots & \tens{M}_{2,n} \\
\vdots & \vdots & \ddots &\vdots \\
\tens{M}_{m,1} & \tens{M}_{m,2} & \ldots & \tens{M}_{m,n} \\
\end{bmatrix}, ~~ \tens{M}^{-1} = \begin{bmatrix}
\tens{M}_{1,1}^{\text{inv}} & \tens{M}_{1,2}^{\text{inv}} & \ldots & \tens{M}_{1,n}^{\text{inv}} \\
\tens{M}_{2,1}^{\text{inv}} & \tens{M}_{2,2}^{\text{inv}} & \ldots & \tens{M}_{2,n}^{\text{inv}} \\
\vdots & \vdots & \ddots &\vdots \\
\tens{M}_{m,1}^{\text{inv}} & \tens{M}_{m,2}^{\text{inv}} & \ldots & \tens{M}_{m,n}^{\text{inv}} \\
\end{bmatrix}
\end{equation}
\end{definition}
We can now give a matrix proof of Theorem \ref{thm:line_add_node} as well.
\begin{proof}[Proof of Theorem \ref{thm:line_add_node}]
We let $\tens{B}$ be the part of the matrix $(\tens{I}-c\tens{A}^\top) $ corresponding to the nodes in the line which gives: 
\begin{equation}(\tens{I}-c\tens{A}^\top) = \begin{bmatrix}
\tens{B} & \tens{C} \\
\tens{0} & \tens{1}
\end{bmatrix} 
\end{equation}
We write 
\begin{equation}
(\tens{I}-c\tens{A}^\top)^{-1} = \begin{bmatrix}
\tens{B}^{\text{inv}} & \tens{C}^{\text{inv}} \\
\tens{D}^{\text{inv}} & \tens{E}^{\text{inv}}
\end{bmatrix}
\end{equation}
Using Lemma \ref{lem:blockwise} for blockwise inverse we get $\tens{B}^{\text{inv}} = (\tens{B}-\tens{C}\tens{E}^{-1}\tens{D})^{-1} = \tens{B}^{-1}$. Since $\tens{B}$ is the matrix for the simple line found earlier we get: 
\begin{equation}
\tens{B}^{\text{inv}} =  (\tens{I}-c\tens{A}^\top)^{-1} = \left[ \begin{array}{ccccc}
1 & c & c^2 & \ldots & c^{L-1}\\
0 & 1 & c & \ldots & c^{L-2}\\
0 & 0 & 1 & \ldots & c^{L-3}\\
\vdots & \vdots & \ddots & \ddots & \vdots \\
0 & \ldots & \ldots & 0 & 1
\end{array} \right]
\end{equation}
where $L$ is the total number of nodes in the line. $\tens{C} = [0 \ \ldots \ c \ 0 \ldots \ 0]^\top$ where the non-zero element $c$ is at position $j$ gives:
\begin{equation}
\tens{C}^{\text{inv}} = -\tens{B}^{\text{inv}}\tens{C}\tens{E}^{-1} =  -\tens{B}^{\text{inv}}\tens{C} = [c^j \ c^{j-1} \ \ldots \ c \ 0 \ldots \ 0]^\top
\end{equation}
Last, since $\tens{D} = \tens{0}$ we get $\tens{D}^{\text{inv}}=\tens{0},\ \tens{E}^{\text{inv}} = 1$. Since the weight vector $\vec{u}$ is uniform we get the PageRank of a node as the sum of corresponding row in $(\tens{I}-c\tens{A}^\top)^{-1}$. For the nodes in the line we get PageRank:
\begin{equation}
\begin{split}
\vec{R}^{(2)}_{i} &= \sum_{k=0}^{n_L-i}{c^k}+b_{ij} = \frac{1-c^{n_L-i+1}}{1-c}+b_{ij} \\
b_{ij} &= \left\{ \begin{split}
c^{j+1-i},~j \geq i\\
0,~j < i
\end{split} \right.
\end{split}
\end{equation}
where the sum is the sum of the first $n_L$ values and $b_{ij}$ is the value on the last column. For the last row we obviously get sum $1$. 

We get the normalized PageRank $\vec{R}^{(1)}$ by dividing $\vec{R}^{(1)} = \vec{R}^{(2)}/||\vec{R}^{(2)} ||_1$. 

\qed 
\end{proof}

\subsection{Changes in a complete graph}\label{CCompGraph}
Complete graphs or similar structures are common both as parts of a site and as a way between different sites to try and gain a better rank. An image of a complete graph with five nodes can be seen in Fig.~\ref{GraphComplete}. We recall that the system matrix for this system is: 
\begin{displaymath} \tens{A} = \frac{1}{4}\left[ \begin{array}{ccccc}
0 & 1 & 1 & 1 & 1 \\
1 & 0 & 1 & 1 & 1\\
1 & 1 & 0 & 1 & 1\\
1 & 1 & 1 & 0 & 1\\
1 & 1 & 1 & 1 & 0
\end{array} \right]\end{displaymath}
Using this we get the inverse of this system as: 

\begin{displaymath} (\tens{I}-c\tens{A}^\top)^{-1} = \left[ \begin{array}{ccccc}
\frac{3c-4}{c^2+3c-4} & \frac{-c}{c^2+3c-4} & \frac{-c}{c^2+3c-4} &\frac{-c}{c^2+3c-4} &\frac{-c}{c^2+3c-4} \\
\frac{-c}{c^2+3c-4} & \frac{3c-4}{c^2+3c-4} & \frac{-c}{c^2+3c-4} &\frac{-c}{c^2+3c-4} &\frac{-c}{c^2+3c-4} \\
\frac{-c}{c^2+3c-4} & \frac{-c}{c^2+3c-4} & \frac{3c-4}{c^2+3c-4} &\frac{-c}{c^2+3c-4} &\frac{-c}{c^2+3c-4} \\
\frac{-c}{c^2+3c-4} & \frac{-c}{c^2+3c-4} & \frac{-c}{c^2+3c-4} &\frac{3c-4}{c^2+3c-4} &\frac{-c}{c^2+3c-4} \\
\frac{-c}{c^2+3c-4} & \frac{-c}{c^2+3c-4} & \frac{-c}{c^2+3c-4} &\frac{-c}{c^2+3c-4} &\frac{3c-4}{c^2+3c-4} \\
\end{array} \right]
\end{displaymath}
After normalization we will obviously end up with $\vec{R}^{(1)}_i =1/5$ as PageRank for every node $i$. However since there is not any dangling nodes in the complete graph all the nodes will have maximum influence on the PageRank of the system. Additionally since they only point to each other they will not share any of it with the outside in the case of a bigger link matrix with a part of it being a complete graph. This makes a complete graph similar to a dangling node in that it will not increase the rank of anyone else, but with the addition of having a higher rank in itself since it can increase its own rank to a certain extent.   

Trying to find an expression for the elements in the inverse $(\tens{I}-c\tens{A}^\top)^{-1}$ for the complete graph we formulate the following lemma: 

\begin{lemma}\label{lem:complete}
The diagonal element $a_d$ of the inverse  $(\tens{I}-c\tens{A}^\top)^{-1}$ of the complete graph with $n$ nodes is: 
\begin{equation}
 a_{d} = \frac{(n-1)-c(n-2)}{(n-1)-c(n-2)-c^2}
\end{equation}
The non diagonal elements $a_{ij}$ can be written as: 
\begin{equation}
a_{ij} = \frac{c}{(n-1)-c(n-2)-c^2}
\end{equation}
\end{lemma}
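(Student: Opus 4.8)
The plan is to exploit the high symmetry of the complete graph and reduce the problem to inverting a matrix of the form $\alpha\tens{I}+\beta\tens{J}$, where $\tens{J}$ is the all-ones matrix. First I would observe that for the complete graph on $n$ nodes the link matrix is $\tens{A} = \frac{1}{n-1}(\tens{J}-\tens{I})$, which is symmetric, so $\tens{A}^\top = \tens{A}$ and
\begin{equation}
\tens{I}-c\tens{A}^\top = \left(1+\frac{c}{n-1}\right)\tens{I} - \frac{c}{n-1}\tens{J} = \alpha\tens{I}+\beta\tens{J},\qquad \alpha = 1+\frac{c}{n-1},\quad \beta = -\frac{c}{n-1}.
\end{equation}
Since every matrix of this form commutes with $\tens{J}$ and satisfies $\tens{J}^2 = n\tens{J}$, its inverse (when it exists) must again be of the form $x\tens{I}+y\tens{J}$, so the diagonal entries of the inverse are all equal (to $x+y$) and the off-diagonal entries are all equal (to $y$), which is already consistent with the shape of the claimed answer.

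Next I would pin down $x$ and $y$ by imposing $(\alpha\tens{I}+\beta\tens{J})(x\tens{I}+y\tens{J}) = \tens{I}$. Expanding and using $\tens{J}^2 = n\tens{J}$ gives $\alpha x\,\tens{I} + (\alpha y + \beta x + n\beta y)\,\tens{J} = \tens{I}$, hence $x = 1/\alpha$ and $y = -\beta x/(\alpha+n\beta) = -\beta/\bigl(\alpha(\alpha+n\beta)\bigr)$. A short computation then yields $\alpha+n\beta = 1-c$ and $\alpha(\alpha+n\beta) = \frac{(n-1+c)(1-c)}{n-1} = \frac{(n-1)-c(n-2)-c^2}{n-1}$, so that $a_{ij} = y = \frac{c}{(n-1)-c(n-2)-c^2}$, and combining $x+y$ over the common denominator $(n-1+c)(1-c)$ gives $a_d = x+y = \frac{(n-1)-c(n-2)}{(n-1)-c(n-2)-c^2}$. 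I would also note that the inverse exists, since $\alpha(\alpha+n\beta)$ vanishes only when $c=1$ or $n-1+c=0$, neither of which happens for $0<c<1$ and $n\ge 2$.

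An equivalent and perhaps even shorter route, given that the lemma already states the formulas, is pure verification: form the matrix $\tens{X}$ whose diagonal entries are the claimed $a_d$ and whose off-diagonal entries are the claimed $a_{ij}$, and check $(\tens{I}-c\tens{A}^\top)\tens{X}=\tens{I}$ by comparing one diagonal and one off-diagonal entry of the product (everything else follows by symmetry). One could also, following the style used elsewhere in the paper, peel off a single node with Lemma~\ref{lem:blockwise} and induct on $n$, but that is messier than the $\tens{I}$/$\tens{J}$ argument. In all approaches there is no genuine obstacle; the only thing to watch is the algebraic bookkeeping, in particular the identity $(n-1+c)(1-c) = (n-1)-c(n-2)-c^2$, which is exactly what makes the diagonal and off-diagonal denominators in the statement coincide.
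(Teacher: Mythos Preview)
Your argument is correct and complete: the identification $\tens{I}-c\tens{A}^\top=\alpha\tens{I}+\beta\tens{J}$ with $\tens{J}^2=n\tens{J}$ reduces the inversion to a two-variable linear system, and your algebra checks out, including the key factorisation $(n-1+c)(1-c)=(n-1)-c(n-2)-c^2$ and the invertibility check.

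However, this is a genuinely different route from the paper. The paper gives two proofs. The first is probabilistic: it interprets the diagonal entry as $\sum_{k\ge 0}P(e_d\to e_d)^k$ via Theorem~\ref{thm:prProb}, computes the one-step return probability $P(e_d\to e_d)$ as a geometric series $\frac{c^2}{n-1}\sum_{k\ge 0}\bigl(\tfrac{c(n-2)}{n-1}\bigr)^k$, then sums again; the off-diagonal entry is obtained similarly as $P(e_i\to e_j)\cdot a_d$. The second is a direct Gauss--Jordan elimination on the generic matrix with $1$'s on the diagonal and a common value $a$ off the diagonal. Your $\tens{I}/\tens{J}$ decomposition is more concise than either and makes the equality of all diagonal (resp.\ off-diagonal) entries immediate from the outset, whereas the paper's Gauss--Jordan argument has to appeal to symmetry of the pivot choice. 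What the paper's probabilistic proof buys, and yours does not, is the explicit identification of $a_d$ and $a_{ij}$ with hitting-probability sums, which is reused throughout the later sections; if you adopt your proof you lose that interpretive link, but the lemma itself is fully established.
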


\begin{proof}
The diagonal element is the sum of the probabilities of all paths to node $e_d$ from itself. This can be written as a geometric sum: ${a_d = \sum_{k=0}^{\infty}{P(e_d\rightarrow e_d)}^k}$, where $P(e_d\rightarrow e_d)$ is the probability of getting from node $e_d$ to node $e_d$. The probability $P(e_d\rightarrow e_d)$ can be written as: 
\begin{equation}
\begin{split}
P(e_d\rightarrow e_d) &= \frac{c^2}{n-1}+\frac{c^3(n-2)}{(n-1)^2}+\frac{c^4(n-2)^2}{(n-1)^3}+\ldots \\
&= \frac{c^2}{n-1} \sum_{k=0}^{\infty}{\left( \frac{c(n-2)}{n-1}\right)^k} = \frac{c^2}{(n-1)-c(n-2)}
\end{split}
\end{equation}
This gives: 
\begin{equation}
a_d = \sum_{k=0}^{\infty}{\left(\frac{c^2}{(n-1)-c(n-2)}\right)^k} = \frac{(n-1)-c(n-2)}{(n-1)-c(n-2)-c^2} 
\end{equation}
For non-diagonal elements $e_{ij}$ we get $e_{ij}=P(e_i\rightarrow e_j)a_d$, where $P(e_i\rightarrow e_j)$ is the probability of getting from node $e_i$ to node $e_j$ where $e_i \neq e_j$. This probability can be written as: 
\begin{equation}
\begin{split}
P(e_i\rightarrow e_j) &= \frac{c}{n-1}+\frac{c^2(n-2)}{(n-1)^2}+\frac{c^3(n-2)^2}{(n-1)^3}+\ldots \\
&= \frac{c}{n-1} \sum_{k=0}^{\infty}{\left( \frac{c(n-2)}{n-1}\right)^k} =  \frac{c}{(n-1)-c(n-2)}
\end{split}
\end{equation}
This gives:
\begin{equation}
a_{ij} = \frac{c}{(n-1)-c(n-2)} \frac{(n-1)-c(n-2)}{(n-1)-c(n-2)-c^2} = \frac{c}{(n-1)-c(n-2)-c^2}
\end{equation}
\qed 
\end{proof}

We give a matrix proof of Lemma~\ref{lem:complete} as well:

\begin{proof}[Proof of Lemma~\ref{lem:complete}]
We consider a general matrix $\tens{A}$ of the form: 
\begin{displaymath}
\tens{A} = \begin{bmatrix}
1 & a & a & \ldots & a\\
a & 1 & a & \ldots & a\\
a & a & 1 & \ddots & a\\
\vdots & \vdots & \ddots & \ddots & \vdots \\
a & a & a & \ldots & 1\\
\end{bmatrix} 
\end{displaymath}
We use Gauss-Jordan elimination to find the inverse $\tens{A}^{-1}$: 
\begin{displaymath}
\begin{bmatrix}
1 & a & a & \ldots & a & 1 & 0 & 0 & \ldots & 0 \\
a & 1 & a & \ldots & a & 0 & 1 & 0 & \ldots & 0 \\
a & a & 1 & \ddots & a &0 & 0 & 1 & \ddots & 0 \\
\vdots & \vdots & \ddots & \ddots & \vdots &  \vdots & \vdots & \ddots & \ddots & 0 \\
a & a & a & \ldots & 1 & 0 & 0 & 0 & \ldots & 1
\end{bmatrix}
 \end{displaymath}
We add $-a r_1$ where $r_1$ is the first row to every other row to eliminate the elements below $1$ on the first column. 
\begin{displaymath}
\begin{bmatrix}
1 & a & a & \ldots & a & 1 & 0 & 0 & \ldots & 0 \\
0 & 1-a^2 & a-a^2  & \ldots & a-a^2  & -a & 1 & 0 & \ldots & 0 \\
0 & a-a^2 & 1-a^2 & \ddots & a-a^2  &-a & 0 & 1 & \ddots & 0 \\
\vdots & \vdots & \ddots & \ddots & \vdots &  \vdots & \vdots & \ddots & \ddots & 0 \\
0 & a-a^2 & a-a^2 & \ldots & 1-a^2 & -a & 0 & 0 & \ldots & 1
\end{bmatrix} \end{displaymath}
Next we eliminate the values to the right of the $1$ on the first row. We add $-k\sum_{i=2}^{n}{r_i}$, where $r_i$ is row $i$ to the first row giving the equation: 
\begin{equation}
\begin{split}
a = -k (1-a^2 + (n-2)(a-a^2))\\
\Rightarrow k = \frac{-a}{ (1-a^2 + (n-2)(a-a^2))}
\end{split}
\end{equation}
This gives:
$$\begin{bmatrix}
1 & 0 & 0 & \ldots & 0 & 1-(n-1)ak & k & k & \ldots & k \\
0 & 1-a^2 & a-a^2  & \ldots & a-a^2  & -a & 1 & 0 & \ldots & 0 \\
0 & a-a^2 & 1-a^2 & \ddots & a-a^2  &-a & 0 & 1 & \ddots & 0 \\
\vdots & \vdots & \ddots & \ddots & \vdots &  \vdots & \vdots & \ddots & \ddots & 0 \\
0 & a-a^2 & a-a^2 & \ldots & 1-a^2 & -a & 0 & 0 & \ldots & 1
\end{bmatrix} $$  

We are now done calculating the first row of the inverse $\tens{A}^{-1}$. We get the other rows using the same calculations if we start with another pivot element. For the inverse matrix we get diagonal elements $d = 1-(n-1)ak$ and for all other elements $e = k$, where $n$ is the total number of rows giving a inverse like below: 
\begin{displaymath}
\tens{A}^{-1} = \begin{bmatrix}
 1-(n-1)ak & k & k & \ldots & k \\
 k & 1-(n-1)ak & k & \ldots & k \\
 k & k & 1-(n-1)ak & \ddots & \vdots \\
  \vdots & \vdots & \ddots & \ddots & k \\
k& k & \ldots  & k&  1-(n-1)ak  
\end{bmatrix} 
\end{displaymath} 
Calculating for $a = -c/(n-1)$ as for a complete graph gives: 
\begin{equation}
k = \frac{-a}{ (1-a^2 + (n-2)(a-a^2))} = \frac{c}{ (n-1)-(n-2)c-c^2}
\end{equation}
\begin{equation}
\begin{split}
d &= 1-(n-1)ak = \frac{ (n-1)-(n-2)c-c^2 -(n-1)(-c)/(n-1)c}{ (n-1)-(n-2)c-c^2}\\
 &=\frac{(n-1)-(n-2)c}{ (n-1)-(n-2)c-c^2}
\end{split}
\end{equation}
And the proof is complete.
\qed 
\end{proof}

Using this we immediately get the PageRank (before normalization) of elements in a complete graph with uniform $\vec{u}$:

\begin{theorem} \label{thm:complete}
Given a complete graph with $n>1$ nodes, PageRank $\vec{R}^{(2)}$ before normalization can be written as:
\begin{equation}
\vec{R}^{(2)}_i = \frac{1}{1-c}
\end{equation}
\end{theorem}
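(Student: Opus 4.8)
The plan is to read $\vec{R}^{(2)}$ off directly from the inverse $(\tens{I}-c\tens{A}^\top)^{-1}$, whose entries are already supplied in closed form by Lemma~\ref{lem:complete}. Since $\vec{u}$ is uniform we have $n\vec{u}=\vec{e}$, so $\vec{R}^{(2)}=(\tens{I}-c\tens{A}^\top)^{-1}\vec{e}$ and the $i$-th component of $\vec{R}^{(2)}$ is just the sum of the entries in row $i$ of the inverse. In a complete graph every row of the inverse consists of one diagonal entry $a_d$ and $n-1$ off-diagonal entries $a_{ij}$, hence
\begin{equation}
\vec{R}^{(2)}_i = a_d + (n-1)a_{ij} = \frac{(n-1)-c(n-2)}{(n-1)-c(n-2)-c^2} + (n-1)\frac{c}{(n-1)-c(n-2)-c^2}.
\end{equation}

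The only remaining step is an algebraic simplification: combining the two fractions gives numerator $(n-1)-c(n-2)+c(n-1)=(n-1)+c$, and one checks that $(1-c)\big((n-1)+c\big)=(n-1)-c(n-2)-c^2$, which is exactly the denominator; therefore $\vec{R}^{(2)}_i = 1/(1-c)$, independent of $n$ and of the node $i$, as the symmetry of the complete graph suggests.

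A cleaner route, which bypasses Lemma~\ref{lem:complete} entirely, is to use the Neumann series from the proof of Theorem~\ref{thm:prProb}: for the complete graph the link matrix $\tens{A}$ is doubly stochastic (each column, like each row, sums to $1$), so $\tens{A}^\top\vec{e}=\vec{e}$ and hence $(c\tens{A}^\top)^k\vec{e}=c^k\vec{e}$. Summing the geometric series yields $\vec{R}^{(2)}=\sum_{k=0}^{\infty}(c\tens{A}^\top)^k\vec{e}=\big(\sum_{k=0}^{\infty}c^k\big)\vec{e}=\frac{1}{1-c}\vec{e}$. I would probably present this short argument and then note consistency with Lemma~\ref{lem:complete}.

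There is essentially no obstacle here: the real work (inverting the matrix) was already done in Lemma~\ref{lem:complete}, and what remains is bookkeeping. The only points needing a word of care are that the row sum is the correct object precisely because $\vec{u}$ is uniform, and that $n>1$ guarantees the denominator $(n-1)-c(n-2)-c^2=(1-c)\big((n-1)+c\big)$ is nonzero (indeed positive for $0<c<1$), so the expressions in Lemma~\ref{lem:complete} are well defined.
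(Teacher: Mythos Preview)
Your first argument is exactly the paper's proof: invoke Lemma~\ref{lem:complete}, sum one row of $(\tens{I}-c\tens{A}^\top)^{-1}$ to get $a_d+(n-1)a_{ij}$, and simplify to $1/(1-c)$.

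Your second route, via the Neumann series and the observation that $\tens{A}$ is doubly stochastic so $\tens{A}^\top\vec{e}=\vec{e}$, is genuinely different from what the paper does and is in fact cleaner: it avoids Lemma~\ref{lem:complete} altogether and makes the answer obvious in one line. The paper's approach has the advantage that it reuses machinery (the explicit inverse) that is needed anyway for the subsequent theorems where links are added or removed, whereas your doubly-stochastic shortcut is specific to the unperturbed complete graph and does not generalize to those variants. Both are correct; presenting the short argument first and then noting consistency with the lemma, as you suggest, is a sensible exposition.
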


\begin{proof}
From Lemma \ref{lem:complete} We already have the inverse $(\tens{I}-c\tens{A}^\top)^{-1}$, We then find the PageRank by summation of any row of the matrix (since all rows have equal sum). 
\begin{equation}
\begin{split}
\vec{R}^{(2)}_i = a_d+(n-1)a_{ij}, \quad i \neq j \\
=\frac{(n-1)-c(n-2)+c(n-1)}{(n-1)-c(n-2)-c^2} = \frac{c+(n-1)}{(n-1)-c(n-2)-c^2} = \frac{1}{1-c}
\end{split}
\end{equation}
\qed 
\end{proof}

We do note that since we have no dangling nodes all the probability from a node in the complete graph is distributed within the complete graph. Also the size of the graph is irrelevant for the individual nodes as long as none are linked to from outside sources and it consists of at least two nodes. In the $\vec{R}^{(1)}$ sense the size obviously changes the result since we would increase the overall number of nodes in the system by increasing the size of the complete graph. Two things is important to note however: The higher ones own PageRank before joining the complete graph (probability of getting there from outside nodes) the more gain there is by joining a small complete graph in order to maximize the probability of returning to itself. In the same way if a node have a very low rank it gains much by joining a large complete graph of nodes with higher rank than itself.

\subsubsection{Adding a link out of a complete graph}

If we want to see how the complete graph changes when adding one link from one node (node one) out of the complete graph we end up with the following system matrix for the nodes in the complete graph: 

\begin{displaymath} (\tens{I}-c\tens{A}^\top) = \left[ \begin{array}{ccccc}
1 & -c/4 & -c/4 & -c/4 & -c/4 \\
-c/5& 1& -c/4 & -c/4 & -c/4 \\
-c/5& -c/4 & 1 & -c/4 &-c/4 \\
-c/5& -c/4 & -c/4 & 1 & -c/4 \\
-c/5& -c/4 & -c/4 & -c/4 & 1 \\
\end{array} \right]
\end{displaymath}
After taking the inverse and multiplying with $-1$ we get: 
\begin{displaymath} (\tens{I}-c\tens{A}^\top)^{-1} = \end{displaymath}
\begin{displaymath}\left[ \begin{array}{ccccc}
\frac{15c-20}{s} & \frac{-5c}{s} & \frac{-5c}{s} &\frac{-5c}{s} &\frac{-5c}{s} \\
\frac{-4c}{s} &\frac{12c^2+40c-80}{(c+4)s} & -\frac{4c(5+c)}{(c+4)s} &-\frac{4c(5+c)}{(c+4)s}&-\frac{4c(5+c)}{(c+4)s} \\
\frac{-4c}{s} &-\frac{4c(5+c)}{(c+4)s} & \frac{12c^2+40c-80}{(c+4)s} &-\frac{4c(5+c)}{(c+4)s} &-\frac{4c(5+c)}{(c+4)s} \\
\frac{-4c}{s} &-\frac{4c(5+c)}{(c+4)s} &-\frac{4c(5+c)}{(c+4)s} &\frac{12c^2+40c-80}{(c+4)s} &-\frac{4c(5+c)}{(c+4)s} \\
\frac{-4c}{s} &-\frac{4c(5+c)}{(c+4)s} &-\frac{4c(5+c)}{(c+4)s} &-\frac{4c(5+c)}{(c+4)s} &\frac{12c^2+40c-80}{(c+4)s} \\
\end{array} \right]
\end{displaymath}
where $s = 4c^2+15c-20$

We find the expression for the PageRank in a complete graph  with one node linking out to be the following assuming uniform $\vec{u}$. 

\begin{theorem}\label{thm:complete_out}
The PageRank of the nodes in a complete graph with the first node linking out of the complete graph, the PageRank can be written as:
\begin{equation}\label{eq:complete_out_1}
\vec{R}^{(2)}_1 = \frac{n(n-1)+nc}{n(n-1)-(n-1)c^2-n(n-2)c}
\end{equation}
\begin{equation}
\vec{R}^{(2)}_i = {\frac { \left( c+n \right)  \left( n-1 \right) }{n(n-1)-(n-1)c^2-n(n-2)c }},\quad n \geq i > 1 
\end{equation}
where $n$ is the number of nodes in the complete graph and node one links out of the complete graph.
\end{theorem}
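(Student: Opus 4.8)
The plan is to follow the same route as the probabilistic proof of Lemma~\ref{lem:complete}, built on Theorem~\ref{thm:prProb}. First I would fix the setup: node one now has $n$ outgoing links ($n-1$ inside the complete graph plus the single link leaving it), while every other node still has $n-1$ outgoing links, all inside the graph. Since nothing outside the complete graph links back into it, the rows of $(\tens{I}-c\tens{A}^\top)^{-1}$ belonging to the $n$ graph nodes are determined by the displayed $n\times n$ block alone, so by Theorem~\ref{thm:prProb} it suffices to compute $\vec{R}^{(2)}_i=\bigl(1+\sum_{e_j\neq e_i}P(e_j\rightarrow e_i)\bigr)\bigl(\sum_{k=0}^{\infty}P(e_i\rightarrow e_i)^k\bigr)$ for the random walk on $c\tens{A}$ restricted to these nodes, where the walk additionally ``leaks out'' with probability $c/n$ per step whenever it sits at node one.

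By the symmetry among nodes $2,\dots,n$ there are only a handful of distinct quantities: $z=P(e_j\rightarrow e_1)$ for $j>1$, the return probability $P(e_1\rightarrow e_1)$, the hitting probabilities $x=P(e_j\rightarrow e_i)$ and $y=P(e_1\rightarrow e_i)$ for $i>1$, $j>1$, $j\neq i$, and the return probability $P(e_i\rightarrow e_i)$ for $i>1$. Each follows from a one-step analysis: $z=\frac{c}{n-1}+\frac{c(n-2)}{n-1}z$ gives $z=\frac{c}{(n-1)-c(n-2)}$ and hence $P(e_1\rightarrow e_1)=\frac{c(n-1)}{n}z$; the pair $x=\frac{c}{n-1}(1+y)+\frac{c(n-3)}{n-1}x$ and $y=\frac{c}{n}+\frac{c(n-2)}{n}x$ determines $x$ and $y$; and $P(e_i\rightarrow e_i)=\frac{c}{n-1}y+\frac{c(n-2)}{n-1}x$. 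Then $\sum_{e_j\neq e_1}P(e_j\rightarrow e_1)=(n-1)z$ and $\sum_{e_j\neq e_i}P(e_j\rightarrow e_i)=y+(n-2)x$, and plugging these into the formula of Theorem~\ref{thm:prProb} and summing the geometric series gives the result. For node one it comes out at once: $1+(n-1)z=\frac{(n-1)+c}{(n-1)-c(n-2)}$ and $1-P(e_1\rightarrow e_1)=\frac{n(n-1)-n(n-2)c-(n-1)c^2}{n((n-1)-c(n-2))}$, whose product of numerator over denominator is exactly \eqref{eq:complete_out_1}.

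The only real work is the simplification for the nodes $i>1$: after clearing denominators one must verify the polynomial identity that makes the spurious factor $(n-1+c)$ cancel, which reduces to $(n-1)(n+c)(1-c)+c=n(n-1)-n(n-2)c-(n-1)c^2$; everything else is bookkeeping. An alternative, purely matrix, proof would split $(\tens{I}-c\tens{A}^\top)$ into the scalar block for node one and the $(n-1)\times(n-1)$ block $\tens{E}$ with $1$ on the diagonal and $-c/(n-1)$ off it, apply Lemma~\ref{lem:blockwise}, and exploit that $\tens{E}\vec{e}=\bigl(1-\frac{c(n-2)}{n-1}\bigr)\vec{e}$ so that $\tens{E}^{-1}\vec{e}$ and $\vec{e}^\top\tens{E}^{-1}\vec{e}$ are immediate (the full form of $\tens{E}^{-1}$ being available from the proof of Lemma~\ref{lem:complete}); summing the resulting rows of the inverse against $\vec{e}$ again yields $\vec{R}^{(2)}$. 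I expect the algebraic bookkeeping, rather than anything conceptual, to be the main obstacle in either route.
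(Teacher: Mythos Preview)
Your probabilistic argument is correct and, for node~$1$, coincides exactly with the paper's first proof: the paper computes $P(e_1\rightarrow e_1)$ as the same geometric series and obtains $\sum_k P(e_1\rightarrow e_1)^k$ and $1+(n-1)z$ just as you do. Your alternative matrix route via Lemma~\ref{lem:blockwise} with the $1\times 1$ block for node~$1$ and the $(n-1)\times(n-1)$ block $\tens{E}$ is precisely the paper's second proof.

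The one point where you diverge slightly is in the treatment of the nodes $i>1$. You stay purely probabilistic, introducing the auxiliary hitting probabilities $x=P(e_j\rightarrow e_i)$ and $y=P(e_1\rightarrow e_i)$ and solving the coupled one-step recursions for them. The paper instead avoids computing $x$ and $y$ altogether: having found $C^{\text{inv}}_i$ (which equals $z\cdot \tens{B}^{\text{inv}}$ in your notation), it invokes Lemma~\ref{lem:blockwise} together with the symmetry of $\tens{E}^{\text{inv}}$ to read off $D^{\text{inv}}_i=\frac{n-1}{n}C^{\text{inv}}_i$ and $\sum_k E^{\text{inv}}_{ik}=\frac{n-1}{c}C^{\text{inv}}_i$, whence $\vec{R}^{(2)}_i=D^{\text{inv}}_i+\sum_k E^{\text{inv}}_{ik}$ drops out with essentially no further algebra. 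Your route is more self-contained but costs the extra bookkeeping you anticipate; the paper's hybrid shortcut trades that for an appeal to the block-inverse formula.
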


\begin{proof}
We start by looking at the PageRank as a probability, we let $e_1$ be the node linking out. The probability to get from $e_1$ back to itself is:
\begin{equation}
\begin{split}
P(e_1 \rightarrow e_1 ) &= \frac{c(n-1)}{n}\frac{c}{n-1}+\frac{c(n-1)}{n}\frac{c}{n-1}\frac{c(n-2)}{n-1}\\
 &~~+ \frac{c(n-1)}{n}\frac{c}{n-1}\left( \frac{c(n-2)}{n-1}\right) ^2 + \ldots \\
&= \frac{c^2}{n}\sum_{k = 0}^\infty{\left( \frac{c(n-2)}{n-1} \right)^k} = \frac{c^2}{n}\frac{n-1}{(n-1)-c(n-2)}
\end{split}
\end{equation}
And we get the sum of all probabilities from $e_1$ back to itself as: 
\begin{equation}
\begin{split}
\sum_{k = 0}^\infty{(P(e_1 \rightarrow e_1 ))^k } &= \sum_{k = 0}^\infty{\left(\frac{c^2}{n}\frac{n-1}{(n-1)-c(n-2)}\right)^k} \\
&= \frac{n((n-1)-c(n-2))}{n((n-1)-c(n-2))-c^2(n-1)} = \tens{B}^{inv}
\end{split}
\end{equation}
We remember that on the diagonal of  $(\tens{I}-c\tens{A}^\top)$, we have the sums of probabilities of nodes going back to themselves. So if we divide the matrix $(\tens{I}-c\tens{A}^\top)$ in blocks:
$$(\tens{I}-c\tens{A}^\top) = \begin{bmatrix}
B & C \\
D & E
\end{bmatrix} $$
And inverse matrix: 
$$(\tens{I}-c\tens{A}^\top)^{-1} = \begin{bmatrix}
\tens{B}^{inv} & \tens{C}^{inv} \\
\tens{D}^{inv} & \tens{E}^{inv}
\end{bmatrix} $$
We note that $\tens{B}^{inv}$ is not the inverse of $\tens{B}$ but the part of the inverse $(\tens{I}-c\tens{A}^\top)^{-1}$ corresponding to block $\tens{B}$. We let  $\tens{B} = [1]$ corresponding to the node linking out and we get $\tens{B}^{inv}$ as above. 

For the elements $C^{inv}_i, i \neq 1$ of $\tens{C}^{inv} $ we find them as 
\begin{equation}
\begin{split}
C^{inv}_i &=  \sum_{k = 0}^\infty{(P(e_i \rightarrow e_1 ))^k }\sum_{k = 0}^\infty{(P(e_1 \rightarrow e_1 ))^k } \\
 &= \frac{c}{n-1}\sum_{k = 0}^\infty{\left( \frac{c(n-2)}{n-1} \right)^k}\tens{B}^{inv}  
 = \frac{cn}{n((n-1)-c(n-2))-c^2(n-1)} 
\end{split}
\end{equation}
Since $\tens{E}$ and $\tens{D}\tens{B}^{-1}\tens{C}$ are both symmetric and have every non-diagonal element equal as well as all diagonal elements equal, the inverse $\tens{E}^{inv} = (\tens{E} -\tens{D}\tens{B}^{-1}\tens{C})^{-1}$ should be the same as well. Especially every row and column have the same sum.
From Lemma \ref{lem:blockwise} for blockwise inversion we get:
\begin{equation}
C^{inv}_i = -\frac{-c}{n-1} \sum_{k=1}^{n-1}{E^{inv}_{ki}}
\end{equation}
\begin{equation}
D^{inv}_i = -\frac{-c}{n} \sum_{k=1}^{n-1}{E^{inv}_{ik}} = -\frac{-c}{n} \sum_{k=1}^{n-1}{E^{inv}_{ki}}
\end{equation}
\begin{equation}
\Rightarrow \left \{ \begin{split}
D^{inv}_i  = \frac{(n-1)C^{inv}_i}{n} \\
\sum_{k=1}^{n-1}{E^{inv}_{ik}} = \frac{(n-1)C^{inv}_i}{c} \end{split}\right.
\end{equation}
We get the PageRank as: 
\begin{equation}
\begin{split}
\vec{R}^{(2)}_1 = \tens{B}^{inv} +  (n-1) \tens{C}^{inv}  = \frac{n((n-1)-c(n-2))}{n((n-1)-c(n-2))-c^2(n-1)} \\
+\frac{(n-1)cn}{n((n-1)-c(n-2))-c^2(n-1)} = \frac{n(n-1)+nc}{n(n-1)-(n-1)c^2-n(n-2)c}
\end{split}
\end{equation}
\begin{equation}
\begin{split}
\vec{R}^{(2)}_i &= D^{inv} +\sum_{k=1}^{n-1}{E^{inv}_{ik}} = \frac{(n-1)C_inv_i}{n}+ \frac{(n-1)C^{inv}_i}{c}\\
&= \frac{(n-1)C^{inv}_i(c+n)}{nc}  = \frac { \left( c+n \right)  \left( n-1 \right) }{n(n-1)-(n-1)c^2-n(n-2)c}
\end{split}
\end{equation}
And the proof is complete.
\qed 
\end{proof}

We give a matrix proof of Theorem~\ref{thm:complete_out} as well:
\begin{proof}[Proof of Theorem~\ref{thm:complete_out}]
We consider the square matrix $\tens{A}$ with $n$ rows.
$$\tens{A} = \begin{bmatrix}
1 & a & a & \ldots & a\\
b & 1 & a & \ldots & a\\
b & a & 1 & \ddots & a\\
\vdots & \vdots & \ddots & \ddots & \vdots \\
b & a & a & \ldots & a
\end{bmatrix} $$ 
Where $a = -c/(n-1),\ b = -c/(n)$. We divide the matrix in blocks:
$$\tens{A} = \begin{bmatrix}
\tens{B} & \tens{C}\\
\tens{D} & \tens{E}
\end{bmatrix}$$
Where $\tens{B} = [1],\ \tens{C} = [a \ a \ \ldots \ a],\ \tens{D} = [b \ b \ \ldots \ b]^\top$ and $\tens{E}$ have looks like the matrix for a complete graph but is of size $(n-1)\times (n-1)$:
$$\tens{E} = \begin{bmatrix}
1 & a & a & \ldots & a\\
a & 1 & a & \ldots & a\\
a & a & 1 & \ddots & a\\
\vdots & \vdots & \ddots & \ddots & \vdots \\
a & a & a & \ldots & a
\end{bmatrix} $$
In the same way as in the proof of Lemma \ref{lem:complete} we find the elements of $\tens{B},\tens{C}$ by choosing the top left element as pivot element. This gives
\begin{equation}
k_A = \frac{-a}{(1-ab)+(n-2)(a-ab)}  
\end{equation}
We write $\tens{A}^{-1}$ as blocks:
$$\tens{A}^{-1} = \begin{bmatrix}
\tens{B}^{inv} & \tens{C}^{inv}\\
\tens{D}^{inv} & \tens{E}^{inv}
\end{bmatrix} $$
and get: $\tens{B}^{inv} = 1-(n-1)bk_A$ and $C_i^{inv} = k_A$. 

From the matrix proof of Lemma \ref{lem:complete} we get the non-diagonal elements $E_e$ and diagonal elements $E_d$ of $E^{-1}$ as
\begin{equation}
E_e =k_D = \frac{-a}{(1-a^2)+(n-3)(a-a^2)} = \frac{(n-1)c}{(n-1)^2 -(n-3)(n-1)c -(n-2)c^2}
\end{equation}
\begin{equation}
E_d = 1-(n-2)ak_D = \frac{(n-1)^2 -(n-3)(n-1)c}{(n-1)^2 -(n-3)(n-1)c -(n-2)c^2}
\end{equation}
From Lemma \ref{lem:blockwise} we then get: 
\begin{equation}
\tens{B}^{inv} = (\tens{B}-\tens{C}\tens{E}^{-1}\tens{D})^{-1} = 1-(n-1)bk_A
\end{equation}
\begin{equation}
\begin{split}
\tens{C}^{inv} = -(B-CE^{-1}D)^{-1}CE^{-1} \\
\Rightarrow C^{inv}_i = -(B-CE^{-1}D)^{-1}b(E_d+(n-2)E_e) = k_A
\end{split}
\end{equation}
\begin{equation}
\begin{split}
\tens{D}^{inv} = -E^{-1}D(B-CE^{-1}D)^{-1}\\
 \Rightarrow D^{inv}_i = -a(E_d+(n-2)E_e)(B-CE^{-1}D)^{-1} = \frac{bk_A}{a}
\end{split}
\end{equation}
\begin{equation}
\tens{E}^{inv} = E^{-1} +E^{-1}D(B-CE^{-1}D)^{-1}CE^{-1} 
\end{equation}
\begin{equation}
\Rightarrow \left \{ \begin{split}
 E^{inv}_d &= E_d + b(E_d+(n-2)E_e) (\tens{B}-\tens{C}\tens{E}^{-1}\tens{D})^{-1} a(E_d+(n-2)E_e) \\
&= E_d- b(E_d+(n-2)E_e) k_A \\
 E^{inv}_e &= E_e + b(E_d+(n-2)E_e) (\tens{B}-\tens{C}\tens{E}^{-1}\tens{D})^{-1} a(E_d+(n-2)E_e)\\
&= E_e- b(E_d+(n-2)E_e) k_A 
\end{split} \right.
\end{equation}
We replace $a = -c/(n-1)$ and $b = -c/n$ as for our complete graph and get inverse: 
\begin{displaymath}
(\tens{I}-c\tens{A}^\top)^{-1}
 \end{displaymath}
\begin{displaymath}
=\begin{bmatrix}
1-(n-1)bk_A & k_A & k_A & \ldots & k_A\\
\frac{bk_A}{a} & 1-(n-2)ak_D & k_D & \ldots & k_D\\
\frac{bk_A}{a} & k_D & 1-(n-2)ak_D & \ddots & k_D\\
\vdots & \vdots & \ddots & \ddots & \vdots \\
\frac{bk_A}{a} & k_D & k_D & \ldots & 1-(n-2)ak_D
\end{bmatrix} 
\end{displaymath}
For the PageRank of the node linking out we get: 
\begin{equation}
\begin{split}
\vec{R}_1^{(2)} &= \tens{B}^{inv}+(n-1)C^{inv}_i = 1-(n-1)bk_A + (n-1)k_A = 1-(n-1)(b-1)k_A \\
& = \frac{(1-ab)+(n-2)(a-ab) +(n-1)(b-1)a}{(1-ab)+(n-2)(a-ab)} \\&= \frac{(1-ab)-(a-ab)-(n-1)a }{(1-ab)+(n-2)(a-ab)} 
 = \frac{n(n-1) +cn}{n(n-1)-n(n-2)c-(n-1)c^2}
\end{split}
\end{equation}
For all other nodes we get PageRank: 
\begin{equation}
\begin{split}
\vec{R}^{(2)}_i &=  D^{inv}_i + E^{inv}_d +(n-2)E^{inv}_e =  \frac{bk_A}{a} + E_d- b(E_d+(n-2)E_e) k_A \\
&~~+ (n-2)E_e- (n-2)b(E_d+(n-2)E_e) k_A\\
&=E_d+(n-2)E_e -(n-1)b(E_d+(n-2)E_e)k_A +(b/a)k_A \\
&= \frac{1-a}{(1-a^2)+(n-3)(a-a^2)}+ \frac{-b}{(1-ab)+(n-2)(a-ab)}\\
&~~+ \frac{(n-1)ab(1-a)}{\left( (1-a^2)+(n-3)(a-a^2 \right)\left( (1-ab)+(n-2)(a-ab)\right) }\\
&=\frac{1-b}{1-ab+(n-2)(a-ab)} = \frac{(n-1)(n+c)}{n(n-1)-n(n-2)c-(n-1)c^2}
\end{split}
\end{equation}
And the proof is complete.

\qed 
\end{proof}

Just looking at the expression it is hard to say how the PageRank changes after linking out. We can however note a couple of things: First of all the PageRank is lower than for the complete graph (since we now have a chance to escape the graph). But more interesting, when comparing the node that links out with the others we formulate the following theorem:

\begin{theorem}
In a complete graph not linked to from the outside but with one link out, the node that links out will have the highest PageRank in the complete graph.
\end{theorem}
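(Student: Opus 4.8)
The plan is to read off the two closed forms supplied by Theorem~\ref{thm:complete_out} and compare them directly. By the symmetry of the construction, the $n-1$ nodes that do not link out all carry the same value $\vec{R}^{(2)}_i$, $i>1$, so it suffices to show $\vec{R}^{(2)}_1>\vec{R}^{(2)}_i$ for some (hence every) $i>1$; and since $\vec{R}^{(1)}$ is proportional to $\vec{R}^{(2)}$ with a strictly positive constant, the same strict inequality then holds for the normalized PageRank, so the statement is unambiguous regardless of which version of PageRank is meant.

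First I would record that the two expressions in Theorem~\ref{thm:complete_out} share the common denominator $\Delta = n(n-1)-(n-1)c^2-n(n-2)c$, and verify $\Delta>0$ on the range $0<c<1$, $n>1$. Using $c<1$ one has $n(n-2)c<n(n-2)$ and $(n-1)c^2<n-1$, whence $\Delta > n(n-1)-n(n-2)-(n-1) = 1 > 0$. Consequently both PageRank values are positive and the comparison reduces to comparing numerators.

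Next I would subtract the numerators. The numerator of $\vec{R}^{(2)}_1$ is $n(n-1)+nc = n^2-n+nc$, while the numerator of $\vec{R}^{(2)}_i$ for $i>1$ is $(c+n)(n-1) = n^2-n+nc-c$. Their difference is exactly $c$, which is strictly positive, so $\vec{R}^{(2)}_1-\vec{R}^{(2)}_i = c/\Delta > 0$, which is the claim.

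There is no genuine obstacle here beyond bookkeeping: the only points needing care are (a) checking that $\Delta>0$, so that comparing numerators is legitimate, and (b) noting that the $n-1$ non-linking nodes are interchangeable, so "the highest PageRank in the complete graph" is well defined and is attained (strictly) by the out-linking node. One could instead argue probabilistically via Theorem~\ref{thm:prProb} — node $1$ is reached from every other node in the graph at full weight, whereas each node $i\neq 1$ loses a share of the flow arriving from node $1$, which now spreads its $c$-mass over $n$ targets rather than $n-1$ — but the algebraic comparison above is cleaner and self-contained.
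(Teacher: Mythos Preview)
Your proof is correct and follows essentially the same route as the paper: both read off the two formulas from Theorem~\ref{thm:complete_out}, observe the common denominator, and reduce the claim to the numerator comparison $n(n-1)+nc > (c+n)(n-1)$, whose difference is $c>0$. Your version is slightly more careful in that you explicitly verify the shared denominator $\Delta$ is positive before comparing numerators, a point the paper leaves implicit.
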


\begin{proof}
Using the expression for PageRank in a complete graph with one link out we want to prove $\vec{R}^{(2)}_k > \vec{R}^{(2)}_i$ where $\vec{R}^{(2)}_k$ is the PageRank for the node linking out and $\vec{R}^{(2)}_i$ is the PageRank of all the other nodes. 
\begin{equation}
\begin{split}
\vec{R}^{(2)}_k  &> \vec{R}^{(2)}_i \\
\Leftrightarrow\frac{n(n-1)+nc}{n(n-1)-(n-1)c^2-n(n-2)c}  &>
  \frac { \left( c+n \right)  \left( n-1 \right) }{n(n-1)-(n-1)c^2-n(n-2)c}  \\
 \Leftrightarrow n(n-1)+nc &> ( c+n )  ( n-1 )\\
\Leftrightarrow n^2+nc-n &> n^2+nc-n-c
\end{split}
\end{equation}
Where $0<c<1$ and $n>1$ is the number of nodes in the complete graph. This is obviously true and the proof is complete.
\qed 
\end{proof}

Apart from the knowledge that it is the node that links out of a complete graph that loses the least from it we can also see that as the number of nodes in the complete graph increases the difference between them decreases since we have a factor $n^2$ in the denominator compared to only a difference of $c$ in the nominator. 

\subsubsection{Effects of linking to a complete graph}
In the case of a link to a complete graph without a link back from the complete graph we can easily guess the result. From earlier we know that for a node linking to one other node in a link matrix with no change of getting back to itself the column corresponding to the node linking out is $c$ times the column of the node it links to. Additionally we need to add a one to the diagonal element for that column. 

The fact that there is no probability (or a very low if it is only close to complete) to escape the complete graph and give any advantage to other parts of the system means the complete graph as a whole get maximum benefit from the links to it. Looking at how the additional probability $c/(1-c) = c +c^2 +c^3 +\ldots +c^{\infty}$ get distributed within the complete graph we realize that the node linked to gains all of the initial $c^1$, then loses a part $c^2$ distributed among all other nodes in the complete graph, after that the rest is close to evenly distributed between all the nodes in the complete graph. As such we see that the node linked to is the node which gains the most from the link (which is what we would expect).

\subsection{Connecting the simple line with the complete graph}\label{conn_line_complete}
Here we will look at what happens when we connect a complete graph with a simple line in various ways. This way we can get some information on what type of structure is most effective in getting a high PageRank and see how they interact with each other. 

\subsubsection{Connecting the simple line with a link from a node in the complete graph to a node in the line}
Looking at the system where we let one node in a complete graph link to one node in a simple line we get a system similar to the case where we added a single node to the line (complete graph with one node). An example of what the system could look like can be seen in Fig.~\ref{GraphLineFromGraph}. We have the two systems $S_L$, $S_G$ as the original systems for the simple line and complete graph respectively. We want to find the new PageRank of these nodes after creating our new system $S$ by adding a link from the first node in the complete graph $e_{G,1}$ to node $e_{L,j}$ in the simple line. When using $n_L=5,n_G=5,j=3$ we get the system with $(\tens{I}-c\tens{A}^\top)$ seen below:
\begin{figure} [!hbt]
\begin{center}
 \begin{tikzpicture}[->,shorten >=0pt,auto,node distance=1.5cm,on grid,>=stealth',
every state/.style={circle,,draw=black}]
\foreach \name/\angle/\text in {A/162/n_{10}, B/90/n_6, C/18/n_7, D/-54/n_8, E/-126/n_9}
\node[state, xshift=6cm,yshift=.5cm] (\name) at (\angle:1.5cm) {$\text$};
\node[state] (H) [above=of B] {$n_3$};
\node[state] (G) [left=of H] {$n_2$};
\node[state] (I) [right=of H] {$n_4$};
\node[state] (F) [left =of G] {$n_1$};
\node[state] (J) [right=of I] {$n_5$};
\foreach \from/\to in {A/B,A/C,A/D,A/E,B/C,B/D,B/E,C/D,C/E,D/E}
\draw [<->] (\from) -- (\to);
\path (G) edge node {} (F)
(H) edge node {} (G)
(I) edge node {} (H)
(J) edge node {} (I)
(B) edge node {} (H);
\end{tikzpicture}
\end{center}
\caption{Simple line with one link from a complete graph to one node in the line}
\label{GraphLineFromGraph}
\end{figure}
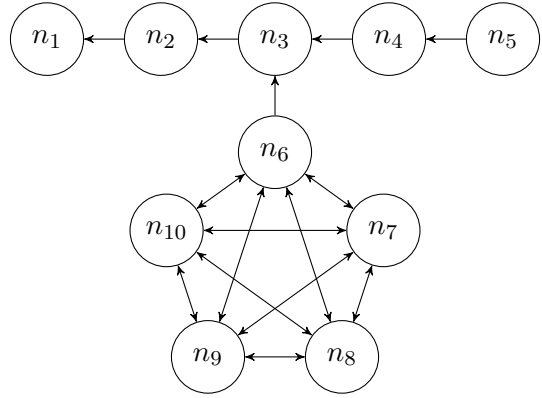
\begin{displaymath}
 I-c\tens{A}^\top = \left[ \begin{array}{cccccccccc}
1 & -c & 0 & 0 & 0 &0 & 0 & 0 & 0 & 0 \\
0 & 1 & -c & 0 & 0&0 & 0 & 0 & 0 & 0 \\
0 & 0 & 1 & -c & 0&-c/5 & 0 & 0 & 0 & 0 \\
0 & 0 & 0 & 1 & -c&0 & 0 & 0 & 0 & 0 \\
0 & 0 & 0 & 0 & 1&0 & 0 & 0 & 0 & 0 \\
0&0 & 0 & 0 & 0 &1 & -c/4 & -c/4 & -c/4 & -c/4 \\
0&0 & 0 & 0 & 0 &-c/5& 1& -c/4 & -c/4 & -c/4 \\
0&0 & 0 & 0 & 0 &-c/5& -c/4 & 1 & -c/4 & -c/4 \\
0&0 & 0 & 0 & 0 &-c/5& -c/4 & -c/4 & 1 & -c/4 \\
0&0 & 0 & 0 & 0 &-c/5& -c/4 & -c/4 & -c/4 & 1 
\end{array} \right]
\end{displaymath}
We find the inverse as: 
\begin{displaymath}
(\tens{I}-c\tens{A}^\top)^{-1}= \end{displaymath}
\begin{displaymath}
\left[ \begin{array}{cccccccccc}
1 & c &c^2&c^3&c^4&\frac{c^3(3c-4)}{s} & -\frac{c^4}{s} & -\frac{c^4}{s} & -\frac{c^4}{s} & -\frac{c^4}{s} \\
0 & 1 & c &c^2&c^3&\frac{c^2(3c-4)}{s} & -\frac{c^3}{s} & -\frac{c^3}{s} &-\frac{c^3}{s} & -\frac{c^3}{s} \\
0 & 0 & 1 & c &c^2&\frac{c(3c-4)}{s} & -\frac{c^2}{s} & -\frac{c^2}{s} & -\frac{c^2}{s} & -\frac{c^3}{s} \\
0 & 0 & 0 & 1 & c &0 & 0 & 0 & 0 & 0 \\
0 & 0 & 0 & 0 & 1 &0 & 0 & 0 & 0 & 0 \\
0 & 0 & 0 & 0 & 0 &\frac{15c-20}{s} & \frac{-5c}{s} & \frac{-5c}{s} &\frac{-5c}{s} &\frac{-5c}{s} \\
0 & 0 & 0 & 0 & 0 &\frac{-4c}{s} &\frac{t}{(c+4)s} & -\frac{4c(5+c)}{(c+4)s} &-\frac{4c(5+c)}{(c+4)s}&-\frac{4c(5+c)}{(c+4)s} \\
0 & 0 & 0 & 0 & 0 &\frac{-4c}{s} &-\frac{4c(5+c)}{(c+4)s} & \frac{t}{(c+4)s} &-\frac{4c(5+c)}{(c+4)s} &-\frac{4c(5+c)}{(c+4)s} \\
0 & 0 & 0 & 0 & 0 &\frac{-4c}{s} &-\frac{4c(5+c)}{(c+4)s} &-\frac{4c(5+c)}{(c+4)s} &\frac{t}{(c+4)s} &-\frac{4c(5+c)}{(c+4)s} \\
0 & 0 & 0 & 0 & 0 &\frac{-4c}{s} &-\frac{4c(5+c)}{(c+4)s} &-\frac{4c(5+c)}{(c+4)s} &-\frac{4c(5+c)}{(c+4)s} &\frac{t}{(c+4)s} \\
\end{array} \right]
\end{displaymath}
where $s = 4c^2+15c-20$ and $t = 12c^2+40c-20$.
Using what we earlier learned by doing changes to the simple line and complete graph separately we can note a couple of things. For the nodes in the complete graph we get the same expression as when adding a link out of the complete graph (since there are no link back to the complete graph). 

Assuming uniform $\vec{u}$ the PageRank in the simple line after adding the link from the complete graph $\vec{R}^{(2)}_{L}[S_G \rightarrow S_L]$ can still be written in about the same way:

\begin{theorem}\label{thm:line_comp_fg}
Observing the nodes in a system $S$ made up of two systems, a simple line $S_L$ with $n_L$ nodes and a complete graph $S_G$ with $n_G$ nodes where we add one link from node $e_g$ in the complete graph to node $e_j$ in the simple line. Assuming uniform $\vec{u}$ we get the PageRank $\vec{R}^{(2)}_{L,i}[S_G \rightarrow S_L]$ for the nodes in the line after the new link and $\vec{R}^{(2)}_{G,i}[S_G \rightarrow S_L]$ for the nodes in the complete graph after the new link as: 
\begin{equation}
\begin{split}
\vec{R}^{(2)}_{L,i}[S_G \rightarrow S_L] = \sum_{k=0}^{n_L-i}{c^k}+b_{ij} = \frac{1-c^{n_L-i+1}}{1-c}+b_{ij}\\
b_{ij} = -c^{j+1-i}\frac{c+(n_G-1)}{(n_G-1)c^2+n_G(n_G-2)c-n_G(n_G-1)},~~ j \geq i \\
b_{ij} = 0,~~ j < i 
\end{split}
\end{equation}
\noindent For the nodes in the complete graph we get:
\begin{equation}
\vec{R}^{(2)}_{G,1}[S_G \rightarrow S_L] = -\frac{n_G(n_G-1)+n_Gc}{(n_G-1)c^2+n_G(n_G-2)c-n_G(n_G-1)}
\end{equation}
\begin{equation}
\vec{R}^{(2)}_{G,j}[S_G \rightarrow S_L] =   \frac { \left( c+n_G \right)  \left( n_G-1 \right) }{n_G(n_G-1)-(n_G-1)c^2-n_G(n_G-2)c }
\end{equation}
where $\vec{R}^{(2)}_{G,1}[S_G \rightarrow S_L]$ is PageRank for the node in the complete graph linking to the line and $\vec{R}^{(2)}_{G,j}[S_G \rightarrow S_L]$ is the PageRank of the other nodes in the complete graph.
\end{theorem}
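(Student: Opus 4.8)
The plan is to reuse the two results already at hand --- Theorem~\ref{thm:line_add_node} for the simple line and Theorem~\ref{thm:complete_out} for a complete graph with one link out --- by exploiting the block structure of $\tens{I}-c\tens{A}^\top$. Order the nodes so that the $n_L$ line nodes come first and the $n_G$ complete-graph nodes last. Since the only inter-system link runs from the complete graph into the line and none runs back, $\tens{I}-c\tens{A}^\top$ is block upper triangular,
\[
\tens{I}-c\tens{A}^\top = \begin{bmatrix} \tens{B}_L & \tens{C} \\ \tens{0} & \tens{B}_G \end{bmatrix},
\]
where $\tens{B}_L = \tens{I}-c\tens{A}_L^\top$ is the simple-line matrix of Sect.~\ref{SimpleLine_sec}; $\tens{B}_G$ is exactly the ``complete graph with one link out'' matrix from the matrix proof of Theorem~\ref{thm:complete_out} (the node $e_g$ has out-degree $n_G$, linking to the $n_G-1$ other complete-graph nodes and to the single line node, while every other complete-graph node still has out-degree $n_G-1$); and $\tens{C}$ is the $n_L\times n_G$ block whose only nonzero entry is $-c/n_G$, located in row $j$ and in the column of $e_g$. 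Both $\tens{B}_G$ and $\tens{B}_L$ are nonsingular, so Lemma~\ref{lem:blockwise} applies.

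Applying Lemma~\ref{lem:blockwise} with $\tens{D}=\tens{0}$ collapses the inverse to
\[
\begin{bmatrix} \tens{B}_L & \tens{C} \\ \tens{0} & \tens{B}_G \end{bmatrix}^{-1} = \begin{bmatrix} \tens{B}_L^{-1} & -\tens{B}_L^{-1}\tens{C}\tens{B}_G^{-1} \\ \tens{0} & \tens{B}_G^{-1} \end{bmatrix}.
\]
The rows of this inverse belonging to complete-graph nodes are those of $\tens{B}_G^{-1}$ with zeros in the line columns, so for uniform $\vec{u}$ their PageRanks, being row sums, are literally the values $\vec{R}^{(2)}_{G,1}$ and $\vec{R}^{(2)}_{G,j}$ of Theorem~\ref{thm:complete_out}; this gives the last two displayed formulas at once. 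For a line node $e_i$ the $i$-th row sum of the inverse is the sum of row $i$ of $\tens{B}_L^{-1}$ minus the $i$-th entry of $\tens{B}_L^{-1}\tens{C}\tens{B}_G^{-1}\vec{e}$. The first part is $\sum_{k=0}^{n_L-i}c^k=\frac{1-c^{n_L-i+1}}{1-c}$ exactly as in Theorem~\ref{thm:line_add_node}. In the second, $\tens{B}_G^{-1}\vec{e}$ is the vector of standalone complete-graph-with-link-out PageRanks; $\tens{C}$ picks out its $e_g$-coordinate, namely $\vec{R}^{(2)}_{G,1}$ of Theorem~\ref{thm:complete_out}, multiplies it by $-c/n_G$ and places it in coordinate $j$; and contracting with row $i$ of $\tens{B}_L^{-1}$, whose $j$-th entry is $c^{j-i}$ for $j\ge i$ and $0$ otherwise, gives $b_{ij}=c^{j-i}\frac{c}{n_G}\vec{R}^{(2)}_{G,1}$ for $j\ge i$ and $b_{ij}=0$ for $j<i$.

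What then remains is only algebra: substitute $\vec{R}^{(2)}_{G,1}=\frac{n_G(n_G-1)+n_Gc}{n_G(n_G-1)-(n_G-1)c^2-n_G(n_G-2)c}$ from Theorem~\ref{thm:complete_out}, cancel the factor $n_G$, and rewrite the denominator as the negative of $(n_G-1)c^2+n_G(n_G-2)c-n_G(n_G-1)$ to arrive at $b_{ij}=-c^{j+1-i}\frac{c+(n_G-1)}{(n_G-1)c^2+n_G(n_G-2)c-n_G(n_G-1)}$, as stated. I would also record, in the style of the earlier proofs, the parallel probabilistic derivation via Theorem~\ref{thm:prProb}: a line node has no path back to itself, so its PageRank is $1+\sum_{e_k\ne e_i}P(e_k\to e_i)$; the line terms contribute $\sum_{m=1}^{n_L-i}c^m$ as in Theorem~\ref{thm:line_add_node}, while a walk from a complete-graph node reaches the line only by first hitting $e_g$ and taking its link to $e_j$, so with $q_k=P(e_{G,k}\to e_{L,j})$ one has $P(e_{G,k}\to e_{L,i})=q_k\,c^{j-i}$ for $i\le j$, and $\sum_k q_k$ falls out of the same geometric-sum-plus-symmetry computation used for Theorems~\ref{thm:complete} and~\ref{thm:complete_out}; for the complete-graph nodes all hitting probabilities appearing in Theorem~\ref{thm:prProb} are unchanged, since the line is downstream of the new link and cannot feed back, so their PageRanks coincide with those of Theorem~\ref{thm:complete_out}. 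The main obstacle is not structural but purely this sign-and-scaling bookkeeping --- keeping track of the factor $-c/n_G$ and of the two equivalent ways of writing the denominator so that $b_{ij}$ comes out exactly in the stated form.
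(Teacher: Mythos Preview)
Your proposal is correct and follows essentially the same approach as the paper. The paper likewise gives both a matrix proof via the block decomposition $\begin{bmatrix}\tens{B}_L & \tens{C}\\ \tens{0} & \tens{B}_G\end{bmatrix}$ and Lemma~\ref{lem:blockwise}, reducing the complete-graph rows to Theorem~\ref{thm:complete_out} and the line rows to Theorem~\ref{thm:line_add_node} plus the correction $b_{ij}=c^{j-i}\frac{c}{n_G}\vec{R}^{(2)}_{G,1}$, and a probabilistic proof via Theorem~\ref{thm:prProb}; your only streamlining is to contract with $\vec{e}$ early (using $\tens{B}_G^{-1}\vec{e}$ as the PageRank vector) rather than writing out the full block $-\tens{B}_L^{-1}\tens{C}\tens{B}_G^{-1}$ before summing rows.
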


\begin{proof}
For the nodes in the complete graph we get the PageRank immediately from Theorem~\ref{thm:complete_out}. 

For the nodes in the line we get a similar result as when adding a link from a single node to the line in Theorem~\ref{thm:line_add_node}. We get the same PageRank for the nodes we can not reach from the complete graph ($b_{ij}=0,\  j<i$). For the nodes we can reach we need to modify $b_{ij}$. The sum of all probability to reach the node in the complete graph linking to the line is found in equation~\ref{eq:complete_out_1} in Theorem~\ref{thm:complete_out}.  
\begin{displaymath}
\vec{R}^{(2)}_{G,1}[S_G \rightarrow S_L] = -\frac{n_G(n_G-1)+n_Gc}{(n_G-1)c^2+n_G(n_G-2)c-n_G(n_G-1)}
\end{displaymath}
The probability to reach the linked to node in the line $e_j$ is then 
\begin{displaymath}
\left( \frac{c}{n_G} \right) \vec{R}^{(2)}_{e_1 \in S_G}[S_G \rightarrow S_L]
\end{displaymath}
and for any further node in the line we need to multiply with $c$ for every extra step. This gives: 
\begin{equation}
\begin{split}
b_{ij} &= -c^{j-i}\frac{c}{n_G}\frac{n_G(n_G-1)+n_Gc}{(n_G-1)c^2+n_G(n_G-2)c-n_G(n_G-1)}\\
&= -c^{j+1-i}\frac{c+(n_G-1)}{(n_G-1)c^2+n_G(n_G-2)c-n_G(n_G-1)},\quad j \geq i
\end{split}
\end{equation}
And the proof is complete. \qed 
\end{proof}
We give a matrix proof as well: 
\begin{proof}[Proof of Theorem~\ref{thm:line_comp_fg}]
We divide $(\tens{I}-c\tens{A}^\top)$ in blocks: 
\begin{displaymath} (\tens{I}-c\tens{A}^\top) = \begin{bmatrix}
\tens{B} & \tens{C} \\
\tens{D} & \tens{E}
\end{bmatrix}
\end{displaymath}
where $\tens{B}$ is a $n_L\times n_L$ matrix corresponding to the nodes in the line, $\tens{C}$ is a $n_G \times n_L$ matrix of all elements zero except element $C_{jg} = -c/n_G$, where $e_j$ is the node in the line linked to by $e_g$ in the complete graph. $\tens{D}$ is a zero matrix of size $n_L \times n_G$ and $\tens{E}$ is the $n_G \times n_G$ matrix corresponding to a complete graph with node $e_g$ linking out of the graph. We write: 
\begin{displaymath}
 (\tens{I}-c\tens{A}^\top)^{-1} = \begin{bmatrix}
\tens{B}^{inv} & \tens{C}^{inv} \\
\tens{D}^{inv} & \tens{E}^{inv}
\end{bmatrix}
\end{displaymath}
From  Lemma~\ref{lem:blockwise} for blockwise inversion we get: 
\begin{equation}
\begin{split}
\tens{B}^{inv} &= (\tens{B}-\tens{D}\tens{E}^{-1}\tens{C})^{-1} = \tens{B}^{-1} \\
\tens{C}^{inv} &=  -(\tens{B}-\tens{D}\tens{E}^{-1}\tens{C})^{-1}\tens{C}\tens{E}^{-1} = -\tens{B}^{-1}\tens{C}\tens{E}^{-1}\\
\tens{D}^{inv} &= -\tens{E}^{-1}\tens{D}(\tens{B}-\tens{D}\tens{E}^{-1}\tens{C})^{-1} =\tens{0}\\
\tens{E}^{inv} &= \tens{E}^{-1}+\tens{E}^{-1}\tens{D}(\tens{B}-\tens{D}\tens{E}^{-1}\tens{C})^{-1}\tens{C}\tens{E}^{-1} = \tens{E}^{-1}
\end{split}
\end{equation}
Since $\tens{D}=\tens{0}$ and $\tens{E}$ is the matrix for a complete graph with a node linking out we get from Theorem~\ref{thm:complete_out} the PageRank for the nodes in the complete graph: 
\begin{displaymath}
\vec{R}^{(2)}_{G,g}[S_G \rightarrow S_L] = \frac{n(n-1)+nc}{n(n-1)-(n-1)c^2-n(n-2)c}
\end{displaymath}
\begin{displaymath}
\vec{R}^{(2)}_{G,i}[S_G \rightarrow S_L] = \frac { \left( c+n \right)  \left( n-1 \right) }{n(n-1)-(n-1)c^2-n(n-2)c },\quad i \neq g 
\end{displaymath}
For the nodes in the line we need to calculate $C^{inv}$. 
\begin{displaymath}\tens{C}^{inv} = -\tens{B}^{-1}\tens{C}\tens{E}^{-1} = -\begin{bmatrix}
1 & c & c^2 & \ldots & c^{n_L-1}\\
0 & 1 & c & \ldots & c^{n_L-2}\\
0 & 0 & 1 & \ddots & c^{n_L-3}\\
\vdots & \vdots & \ddots & \ddots & \vdots \\
0 & 0 & 0 & \ldots & 1 
\end{bmatrix}\tens{C}\tens{E}^{-1}\end{displaymath}
Calculating $-\tens{B}^{-1}\tens{C}$ we get:
\begin{equation}
 (-\tens{B}^{-1}\tens{C})_{kl} = \left \{ \begin{split}
0 &,~~ l \neq g \\
0 &,~~ k > j\\
c^{j-k}c/n_G &,~~ k \le j,~~ l = g
\end{split}
\right.\end{equation}
In other words zero except for column $g$. Since only one column is non-zero the sum of every row of $-\tens{B}^{-1}\tens{C}\tens{E}^{-1}$ is then easily found as:
\begin{equation}\begin{split}\sum_{l = 1}^{n_G}{(-\tens{B}^{-1}\tens{C}\tens{E}^{-1})_{kl}} \\
= \left \{ \begin{split}
\frac{c^{j-k+1}}{n_G}\sum_{l = 1}^{n_G}{E^{-1}_{kl}} = \frac{c^{j-k+1}}{n_G}\vec{R}^{(2)}_{G,g}[S_G \rightarrow S_L], \ k \le j \\
 0, ~~ k > j \end{split} \right. \end{split} 
\end{equation}
Since $\tens{B}$ is the matrix for a line we get the total PageRank of the nodes in the line as:
\begin{equation}\vec{R}^{(2)}_{L,i}[S_G \rightarrow S_L] = \frac{1-c^{n_L-i+1}}{1-c}+\frac{c^{j-k}c}{n_G}\sum_{l = 1}^{n_G}{E^{-1}_{kl}} \end{equation}
Where the first part is the part corresponding to the PageRank of a line and the second part is the part influenced by the complete graph. Calculating $\frac{c^{j-k}c}{n_G}\sum_{l = 1}^{n_G}{E^{-1}_{kl}}$ we get:
\begin{equation} \frac{c^{j-k}c}{n_G}\sum_{l = 1}^{n_G}{E^{-1}_{kl}}= \left \{ \begin{split}  \frac{-c^{j+1-i}(c+(n_G-1))}{(n_G-1)c^2+n_G(n_G-2)c-n_G(n_G-1)},~~ j \geq i \\
 0,~~ j < i \end{split} \right. \end{equation}
We replace $\frac{c^{j-k}c}{n_G}\sum_{l = 1}^{n_G}{E^{-1}_{kl}}$ with $b_{ij}$ and the proof is complete. 

For reference we include the whole inverse matrix as well (assuming the first node in the complete graph links out): 
\begin{displaymath} (\tens{I}-c\tens{A}^\top)^{-1} = \begin{bmatrix}
\tens{B}^{\text{inv}} & \tens{C}^{\text{inv}} \\
\tens{D}^{\text{inv}} & \tens{E}^{\text{inv}}
\end{bmatrix}, \quad \tens{B}^{\text{inv}} = \begin{bmatrix}
1 & c & c^2 & \ldots & c^{n_L-1}\\
0 & 1 & c & \ldots & c^{n_L-2}\\
0 & 0 & 1 & \ddots & \vdots\\
\vdots & \vdots & \ddots & \ddots & c \\
0 & 0 & \ldots & 0 & 1
\end{bmatrix} \end{displaymath}
\begin{displaymath} \tens{C}^{\text{inv}} = \begin{bmatrix}
\frac{c^{j}(1-(n_G-1)bk_A)}{n_G}  & \frac{c^{j}k_A}{n_G} & \frac{c^{j}k_A}{n_G} & \ldots & \frac{c^{j}k_A}{n_G} \\
\frac{c^{j-1}(1-(n_G-1)bk_A)}{n_G}  & \frac{c^{j-1}k_A}{n_G} & \frac{c^{j-1}k_A}{n_G} & \ldots & \frac{c^{j-1}k_A}{n_G} \\
\vdots & \vdots & \vdots & \ldots & \vdots \\
\frac{c(1-(n_G-1)bk_A)}{n_G} & \frac{ck_A}{n_G} & \frac{ck_A}{n_G} & \ldots & \frac{ck_A}{n_G} \\
0&0&0&\ldots&0 \\
\vdots & \vdots & \vdots & \ldots & \vdots \\
0&0&0&\ldots&0 
\end{bmatrix}\end{displaymath}
\begin{displaymath}
\tens{D}^{\text{inv}} = \begin{bmatrix}
0 & \ldots & 0 \\
\vdots & \ddots & \vdots \\
0 & \ldots & 0
\end{bmatrix},~ \tens{E}^{\text{inv}} = \begin{bmatrix}
E_A & k_A & k_A & \ldots & k_A\\
\frac{bk_A}{a} & E_D & k_D & \ldots & k_D\\
\frac{bk_A}{a} & k_D &E_D & \ddots & k_D\\
\vdots & \vdots & \ddots & \ddots & \vdots \\
\frac{bk_A}{a} & k_D & k_D & \ldots & E_D
\end{bmatrix} 
\end{displaymath}
\begin{displaymath}E_D = 1-(n_G-2)ak_D, ~E_A = 1-(n_G-1)bk_A \end{displaymath}
\begin{displaymath}k_A = \frac{-a}{(1-ab)+(n_G-2)(a-ab)}, \ k_D = \frac{-a}{(1-a^2)+(n_G-3)(a-a^2)}\end{displaymath}
\begin{displaymath}a = \frac{-c}{n_G-1}, \ b = \frac{-c}{n_G}\end{displaymath}
\qed 
\end{proof}

If we want to know the common normalized PageRank we find the normalizing constant as the sum of the PageRank of all the nodes:
\begin{equation}\begin{split} N = {\frac {n_L}{1-c}}-{\frac {c \left( 1-{c}^{n_L} \right) }{ \left( 1-c
 \right) ^{2}}}+{\frac {c \left( 1-{c}^{n_L-i+2} \right)  \left( c+n_G-1
 \right) }{ \left( 1-c \right)  \left(  \left( n_G-1 \right) {c}^{2}+n_G
 \left( n_G-2 \right) c-n_G \left( n_G-1 \right)  \right) }}\\
+{n_G
 \left( n_G-1 \right) +{n_Gc}{ \left( n_G-1 \right) {c}^{2}+n_G \left( n_G-
2 \right) c-n_G \left( n_G-1 \right) }} \\
+n_G \left(  \left( n_G-1
 \right) {c}^{2}+ \left( 2n_G \left( n_G-1 \right) -1 \right) c+ \left( 
n_G \left( n_G-1 \right)  \right) ^{2} \right) \\
\Bigg( c \left(  \left( n_G-1 \right) {c}^{
2}+n_G \left( n_G-2 \right) c-n_G \left( n_G-1 \right)  \right)\\
+n_G \left( 
 \left( n_G-1 \right) {c}^{2}+n_G \left( n_G-2 \right) c-n_G \left( n_G-1
 \right)  \right) -1 \Bigg)^{-1} 
\end{split} \end{equation}
Which can be used to get the normalized PageRank:
\begin{equation}\vec{R}^{(1)}_{i}[S_G \rightarrow S_L] = \vec{R}^{(2)}_{i}[S_G \rightarrow S_L]/N\end{equation}

\subsubsection{Connecting the simple line with a complete graph by adding a link from a node in the line to a node in the complete graph}

When we instead let one node $e_j$ in the simple line link to one node in the complete graph we get a system that could look like the system in Fig.~\ref{GraphLineFromLine}.
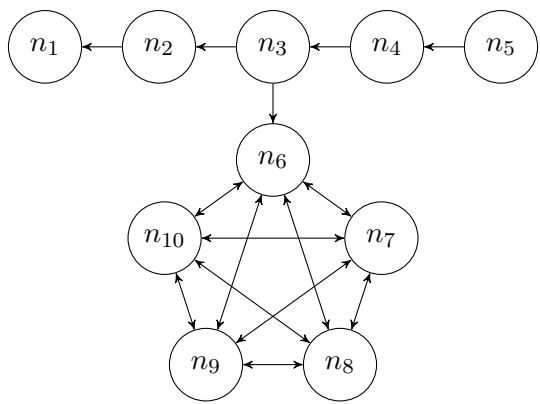
\begin{figure} [!hbt]
\begin{center}
 \begin{tikzpicture}[->,shorten >=0pt,auto,node distance=1.5cm,on grid,>=stealth',
every state/.style={circle,,draw=black}]
\foreach \name/\angle/\text in {A/162/n_{10}, B/90/n_6, C/18/n_7, D/-54/n_8, E/-126/n_9}
\node[state, xshift=6cm,yshift=.5cm] (\name) at (\angle:1.5cm) {$\text$};
\node[state] (H) [above=of B] {$n_3$};
\node[state] (G) [left=of H] {$n_2$};
\node[state] (I) [right=of H] {$n_4$};
\node[state] (F) [left =of G] {$n_1$};
\node[state] (J) [right=of I] {$n_5$};
\foreach \from/\to in {A/B,A/C,A/D,A/E,B/C,B/D,B/E,C/D,C/E,D/E}
\draw [<->] (\from) -- (\to);
\path (G) edge node {} (F)
(H) edge node {} (G)
(I) edge node {} (H)
(J) edge node {} (I)
(H) edge node {} (B);
\end{tikzpicture}
\end{center}
\caption{Simple line with one node in the line linking to a node in a complete graph}
\label{GraphLineFromLine}
\end{figure}

For the PageRank we formulate the following: 

\begin{theorem}\label{thm:line_2_complete}
Observing the nodes in a system $S$ made up of two systems, a simple line $S_L$ with $n_L$ nodes and a complete graph $S_G$ with $n_G$ nodes where we add one link from node $e_j$ in the line to node $e_g$ in the complete graph. Assuming uniform $\vec{u}$ we get the PageRank $\vec{R}^{(2)}_{L,i}[S_L \rightarrow S_G]$ for the nodes in the line after the new link and $\vec{R}^{(2)}_{G,i}[S_L \rightarrow S_G]$ for the nodes in the complete graph after the new link as: 

\begin{equation}\vec{R}^{(2)}_{L,i}[S_L \rightarrow S_G] = \frac{1-c^{n_L+1-i}}{1-c} ,~~ i\ge j\end{equation}
\begin{equation}\vec{R}^{(2)}_{G,g}[S_L \rightarrow S_G] = \left(\frac{c(1-c^{n_L+1-j})}{2(1-c)}\right) \left(\frac{((n_G-1)-c(n_G-2))}{((n_G-1)-c(n_G-2))-c^2} \right) +\frac{1}{1-c} \end{equation}
\begin{equation}\vec{R}^{(2)}_{G,i}[S_L \rightarrow S_G] =  \left( \frac{c^2(1-c^{n_L+1-j)}}{2(1-c)} \right) \left(\frac{1}{((n_G-1)-c(n_G-2)) -c^2} \right) +\frac{1}{1-c} \end{equation}
\begin{equation}\vec{R}^{(2)}_{L,i}[S_L \rightarrow S_G] = \frac{1-c^{j-i}}{1-c} +\left( \frac{c^{j-i}}{2}\right) \frac{1-c^{n_L-j+1}}{1-c},~~ i<j\end{equation}

\end{theorem}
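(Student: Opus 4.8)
The plan is to mirror the two proofs already given for Theorems~\ref{thm:line_add_node} and~\ref{thm:complete_out}: one argument from the probabilistic reading of Theorem~\ref{thm:prProb}, and one from the block‑inversion Lemma~\ref{lem:blockwise} together with Lemma~\ref{lem:complete}. The only structural change is the new edge $e_j\to e_g$, and it has exactly two effects. First, node $e_j$ in the line now has two outgoing links, so in a random walk the step $e_j\to e_{j-1}$ is taken with probability $c/2$ instead of $c$, and the step $e_j\to e_g$ also with probability $c/2$. Second, there is still no path from the complete graph back into the line, so the complete graph's internal return behaviour — in particular the diagonal inverse element $a_d=\frac{(n_G-1)-c(n_G-2)}{(n_G-1)-c(n_G-2)-c^2}$ of Lemma~\ref{lem:complete} and the value $\frac{1}{1-c}$ of Theorem~\ref{thm:complete} — is untouched; only extra inflow is added.

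For the probabilistic proof I would apply Theorem~\ref{thm:prProb} node by node. A line node $e_i$ with $i\ge j$ lies below or at the branch point, so no walk ever reaches it through $e_j$; the computation is then identical to the plain line, with first factor $\sum_{k=i}^{n_L}c^{k-i}$ and return factor $1$, giving the first formula. A line node $e_i$ with $i<j$ is reached from $e_k$ with $i<k<j$ with probability $c^{k-i}$ as before, and from $e_k$ with $k\ge j$ only by first walking down to $e_j$ and then taking its line branch, with probability $c^{k-j}\cdot\tfrac c2\cdot c^{j-1-i}=\tfrac{c^{k-i}}{2}$; summing over $k$ and splitting the resulting geometric series yields the last formula. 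For a complete‑graph node the first factor of Theorem~\ref{thm:prProb} splits into an internal part (which by Theorem~\ref{thm:complete} equals $\frac{1}{(1-c)a_d}$) plus the inflow from the line; the line contributes the hitting probability of $e_g$, namely $\sum_{k=j}^{n_L}c^{k-j}\cdot\tfrac c2=\tfrac c2\cdot\frac{1-c^{n_L-j+1}}{1-c}$, directly for $e_g$, and that quantity times the in‑graph hitting probability $\frac{c}{(n_G-1)-c(n_G-2)}$ for any other node. Multiplying each first factor by the return factor $a_d$ collapses the internal part to $\frac{1}{1-c}$ and reproduces the two complete‑graph formulas.

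For the matrix proof I would order the nodes line‑first and split $(\tens{I}-c\tens{A}^\top)$ into blocks $\tens{B}$ (the line), $\tens{C}$, $\tens{D}$, $\tens{E}$ (the complete graph), where $\tens{C}=\tens{0}$ since no link runs from the complete graph to the line, $\tens{B}$ is the usual line matrix except that the entry coupling $e_j$ to $e_{j-1}$ is $-c/2$ instead of $-c$, $\tens{E}$ is the unchanged complete‑graph matrix, and $\tens{D}$ has the single nonzero entry $-c/2$ in position $(e_g,e_j)$. Since $\tens{C}=\tens{0}$, Lemma~\ref{lem:blockwise} gives $\tens{B}^{\text{inv}}=\tens{B}^{-1}$, $\tens{E}^{\text{inv}}=\tens{E}^{-1}$ (so the complete‑graph block is exactly the inverse of Lemma~\ref{lem:complete}), $\tens{C}^{\text{inv}}=\tens{0}$, and $\tens{D}^{\text{inv}}=-\tens{E}^{-1}\tens{D}\tens{B}^{-1}$, which is rank one. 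Computing $\tens{B}^{-1}$ only requires noting that the unique descending path $e_k\to\cdots\to e_i$ picks up the halved edge precisely when $i<j\le k$, so $(\tens{B}^{-1})_{ik}$ equals $c^{k-i}$ or $c^{k-i}/2$; row sums of $\tens{B}^{-1}$ then give the two line formulas. For a complete‑graph row, $\vec{R}^{(2)}_{G,i}$ is the row‑$i$ sum of $\tens{E}^{-1}$ (equal to $\frac{1}{1-c}$ by Theorem~\ref{thm:complete}) plus the row‑$i$ sum of $\tens{D}^{\text{inv}}=\tfrac c2(\tens{E}^{-1})_{\cdot,g}(\tens{B}^{-1})_{j,\cdot}$, which equals $\tfrac c2(\tens{E}^{-1})_{i,g}$ times the row‑$j$ sum $\frac{1-c^{n_L-j+1}}{1-c}$ of $\tens{B}^{-1}$; inserting $(\tens{E}^{-1})_{g,g}=a_d$ and $(\tens{E}^{-1})_{i,g}=\frac{c}{(n_G-1)-c(n_G-2)-c^2}$ from Lemma~\ref{lem:complete} gives the remaining two formulas.

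The work here is bookkeeping rather than conceptual, and the main obstacle is keeping the factor $1/2$ in exactly the right place: in the halved line edge inside $\tens{B}$, in the single‑step probabilities out of $e_j$ in the probabilistic count, and in deciding which descending paths in $\tens{B}^{-1}$ actually carry it. After that one must push the algebra through so that the internal complete‑graph contribution cleanly cancels to $\frac{1}{1-c}$ and the external terms match the stated closed forms, staying consistent with the index split $i\ge j$ versus $i<j$ and with the exponent $n_L+1-j$.
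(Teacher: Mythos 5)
Your proposal is correct and follows essentially the same two routes as the paper: the probabilistic argument via Theorem~\ref{thm:prProb} with the halved transition probability $c/2$ out of $e_j$, and the block decomposition with $\tens{C}=\tens{0}$, $\tens{D}=-\tfrac{c}{2}\,e_g e_j^\top$, so that $\tens{E}^{\text{inv}}=\tens{E}^{-1}$ and $\tens{D}^{\text{inv}}$ is the rank-one correction whose row sums give the complete-graph formulas. The only cosmetic difference is that the paper obtains the halved block of $\tens{B}^{-1}$ by a second application of Lemma~\ref{lem:blockwise} to $\tens{B}$ itself, whereas you read it off directly from the unique descending paths; the computations agree.
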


\begin{proof} 
For the nodes above the line we get the same PageRank as for the nodes "above" the linked to node in the line in Theorem~\ref{thm:line_add_node}:
\begin{equation}\vec{R}^{(2)}_{L,i}[S_L \rightarrow S_G] = \frac{1-c^{n_L+1-i}}{1-c} ,~~ i\ge j\end{equation}
In order to find $\vec{R}^{(2)}_{G,g}[S_L \rightarrow S_G]$ we first write it as: 
\begin{equation}\vec{R}^{(2)}_{G,g}[S_L \rightarrow S_G] = \left(\sum_{e_i \in S,\\ e_i \neq e_g}{P(e_i\rightarrow e_g)}+1\right) \left( \sum_{k=0}^{\infty}{(P(e_g \rightarrow e_g))^k}\right)\end{equation}
where $P(e_i \rightarrow e_g)$ is the probability of getting from node $e_i$ to node $e_g$. 
\begin{equation}\begin{split} \sum_{e_i \in S, e_i \neq e_g}{P(e_i\rightarrow e_g)}+1&= 1+\frac{c}{2}\sum_{k=0}^{n_L-j}{c^k} + (n_G-1)\frac{c}{n_G-1}\sum_{k=0}^{\infty}{\left(\frac{c(n_G-2)}{(n_G-1)}\right)^k}\\
&= 1+\frac{c(1-c^{n_L+1-j})}{2(1-c)}+\frac{c(n_G-1)}{(n_G-1)-c(n_G-2)} \end{split}\end{equation}
\begin{equation}P(e_g \rightarrow e_g) = \frac{c^2}{n_G-1}\sum_{k=0}^{\infty}{\left( \frac{c(n_G-2)}{n_G-1}\right)^k} =  \frac{c^2}{(n_G-1)-c(n_G-2)}\end{equation}
\begin{equation}\sum_{k=0}^{\infty}{(P(e_g \rightarrow e_g))^k} =\frac{((n_G-1)-c(n_G-2))}{((n_G-1)-c(n_G-2)) -c^2} \end{equation}
Multiply the two expressions and we get the PageRank for the node linked to in the complete graph.
\begin{displaymath}\vec{R}^{(2)}_{G,g}[S_L \rightarrow S_G] =\end{displaymath}  
\begin{equation}\begin{split} \left(\frac{c(1-c^{n_L+1-j})}{2(1-c)}+1+\frac{c(n_G-1)}{(n_G-1)-c(n_G-2)}\right) \left(\frac{((n_G-1)-c(n_G-2))}{((n_G-1)-c(n_G-2))-c^2} \right)\\
=\left(\frac{c(1-c^{n_L+1-j})}{2(1-c)}\right) \left(\frac{((n_G-1)-c(n_G-2))}{((n_G-1)-c(n_G-2))-c^2} \right) +\frac{1}{1-c} \end{split}\end{equation} 
We previously found the part equal to $\frac{1}{1-c}$ after the proofs of Lemma \ref{lem:complete}.

Below $e_j$ in the line we get a sum of two lines, half the probability of the line starting in $e_L$ and all from the one starting in $e_{j-1}$, since node $e_j$ have two links out where only one links to the line.

Collecting the sum of the probability of all nodes from $e_j$ and above in one term and all below in one gives: 
\begin{equation}\vec{R}^{(2)}_{L,i}[S_L \rightarrow S_G] =\sum_{k=0}^{j-i-1}{c^k} + \sum_{k = j-i}^{n_L-i}{\frac{c^k}{2}} = \frac{1-c^{j-i}}{1-c} +\left( \frac{c^{j-i}}{2}\right) \frac{1-c^{n_L-j+1}}{1-c}, ~ i<j\end{equation}

Last we need to find the PageRank for the nodes in the complete graph not linked to by the node in the line ($\vec{R}^{(2)}_{G,i}[S_L \rightarrow S_G]$). 

We get the same probability of the node getting back to itself as for the node in the complete graph linked to by the line. As such we only need to find the sum of probability of getting to the node once after starting in all nodes once. 

For the other nodes $e_c$ in the complete graph we get:
\begin{equation}P(e_c \rightarrow e_i) = \sum_{k=0}^{\infty}{\frac{c}{n_G-1}\left(\frac{c(n_G-2)}{n_G-1}\right)^k}\end{equation}
We got $n_G-1$ of those nodes giving:
\begin{equation}\sum_{e_c \in S_G, e_c \neq e_i} P(e_c \rightarrow e_i) = \frac{c(n_G-1)}{(n_G-1)-(n_G-2)c}\end{equation}
For the rest of the nodes we can write the sum of probabilities of getting to $e_g$ once from the line as:
\begin{equation}P(e_{line} \rightarrow e_g) = \frac{c}{2}\frac{1-c^{n_L+1-j}}{1-c}\end{equation}
This gives the probability to get from the line to node $e_i$: 
\begin{equation} \begin{split}P(e_{line} \rightarrow e_i) &= \frac{c}{2}\frac{1-c^{n_L+1-j}}{1-c} P(e_g \rightarrow e_i)\\
=\frac{c}{2}\frac{1-c^{n_L+1-j}}{1-c} \left( \frac{c}{(n_G-1)-(n_G-2)c}\right) &=   \frac{c^2(1-c^{n_L+1-j})}{2(1-c)((n_G-1)-(n_G-2)c)}\end{split}\end{equation}
Using this we get the sum of probability to get to a node in the complete graph not linked to by the line as: 
\begin{equation}
\sum_{\stackrel{e_k \in S}{ e_k \neq e_i}}{P(e_{k} \rightarrow e_i)} = \frac{c^2(1-c^{n_L+1-j)}}{2(1-c)((n_G-1)-(n_G-2)c)} + 1+\frac{c(n_G-1)}{(n_G-1)-(n_G-2)c}
\end{equation}
Multiplying with the sum of probability of going from the node in question back to itself gives the PageRank:
\begin{equation}\begin{split}\vec{R}^{(2)}_{G,i}[S_L \rightarrow S_G] = \left(\frac{((n_G-1)-c(n_G-2))}{((n_G-1)-c(n_G-2)) -c^2} \right) \\
 \left( \frac{c^2(1-c^{n_L+1-j)}}{2(1-c)((n_G-1)-(n_G-2)c)} + 1+\frac{c(n_G-1)}{(n_G-1)-(n_G-2)c} \right)\\
 = \left( \frac{c^2(1-c^{n_L+1-j)}}{2(1-c)} \right) \left(\frac{1}{((n_G-1)-c(n_G-2)) -c^2} \right) +\frac{1}{1-c}
\end{split}
\end{equation}
And the proof is complete
\qed 
\end{proof}
We include a matrix proof of Theorem~\ref{thm:line_2_complete} as well:

\begin{proof}[Proof of Theorem~\ref{thm:line_2_complete}]
We divide the matrix $(\tens{I}-c\tens{A}^\top)$ in blocks:
\begin{displaymath}(\tens{I}-c\tens{A}^\top)= \begin{bmatrix}
\tens{B} & \tens{C} \\
\tens{D} & \tens{E} \end{bmatrix} \end{displaymath}
where $\tens{B}$ is the part corresponding to the line, $\tens{C}$ is a zero matrix (since we have no links from nodes in the complete graph to the line). $\tens{D}$ is a zero matrix except for one element $D_{g,j} = -c/2$, where $e_j$ is $j$:th the node in the line linking to the complete graph and $e_g$ is the $g$:th node in the graph linked to by node $e_j$. We note that $j,g$ are the internal number for the complete graph and line respectively and not their "number" in the combined graph. $\tens{E}$ is the part corresponding to the complete graph. 

In the same way we divide the inverse in blocks:
\begin{displaymath}(\tens{I}-c\tens{A}^\top)^{-1}= \begin{bmatrix}
\tens{B}^{inv} & \tens{C}^{inv} \\
\tens{D}^{inv} & \tens{E}^{inv} \end{bmatrix} \end{displaymath}
Using Lemma \ref{lem:blockwise} for blockwise inversion we get:
\begin{equation}
\begin{split}
\tens{B}^{inv} &= (\tens{B}-\tens{D}\tens{E}^{-1}\tens{C})^{-1} = \tens{B}^{-1}\\
\tens{C}^{inv} &=  -(\tens{B}-\tens{D}\tens{E}^{-1}\tens{C})^{-1}\tens{C}\tens{E}^{-1} = \tens{0}\\
\tens{D}^{inv} &=  -\tens{E}^{-1}\tens{D}(\tens{B}-\tens{D}\tens{E}^{-1}\tens{C})^{-1} = \tens{E}^{-1}\tens{D}\tens{B}^{-1}\\
\tens{E}^{inv} &= \tens{E}^{-1} +\tens{E}^{-1}\tens{D}(\tens{B}-\tens{D}\tens{E}^{-1}\tens{C})^{-1}\tens{C}\tens{E}^{-1} = \tens{E}^{-1}\\
\end{split}
\end{equation}
Since one of the nodes in the line links out we get $\tens{B}$ divided in blocks:
\begin{displaymath}
\tens{B} = \begin{bmatrix}
\tens{B}_B & \tens{B}_C \\
\tens{B}_D & \tens{B}_E
\end{bmatrix}
\end{displaymath}
\begin{displaymath}
\tens{B}_B = \begin{bmatrix}
1 & -c & 0 & \ldots & 0\\
0 & 1 & -c & \ddots & \vdots \\
0 & 0 & 1 & \ddots & 0\\
\vdots & \ddots & \ddots & \ddots & -c \\
0 & \ldots& 0& 0& 1
\end{bmatrix} \quad \tens{B}_C = \begin{bmatrix}
0 & \ldots & \ldots & 0\\
\vdots & \ddots & \ddots & \vdots \\
0 & \ddots & \ddots & \vdots \\
-c/2 & 0 & \ldots & 0
\end{bmatrix} 
\end{displaymath}
where $\tens{B}_D$ is a zero matrix and $\tens{B}_E$ looks the same as $\tens{B}_B$ although possibly with a different size. The size of the blocks are: $\tens{B}_B: (j-1) \times (j-1)$, $\tens{B}_C: (j-i) \times (n_L-j+1)$, $\tens{B}_D: ( n_L-j+1) \times (n_L-j+1)$ and $\tens{B}_E:( n_L-j+1 )\times (n_L-j+1)$, where $n_L$ is the total number of nodes in the line. 

For the blocks of the inverse we get: 
\begin{equation}
\begin{split}
 \tens{B}_B^{inv} &= \tens{B}_B^{-1}\\
 \tens{B}_C^{inv} &= -\tens{B}_B^{-1}\tens{B}_C \tens{B}_E^{-1}\\
 \tens{B}_D^{inv} &= \tens{0} \\
 \tens{B}_E^{inv} &= \tens{E}^{-1}
\end{split}
\end{equation}
$\tens{B}_B^{inv}$ and $\tens{B}_E^{inv}$ are found as the inverse for the simple line, leaving $\tens{B}_C^{inv}$ to be computed. The only difference compared to a simple line is that the only non-zero element in $\tens{B}_C$ is $-c/2$ rather than $-c$. In other words $\tens{B}^{-1}$ is exactly as it would have been for a simple line, except block corresponding to $\tens{B}_C^{inv}$ which is multiplied with $0.5$. 

We can now find the PageRank of the nodes in the line:
\begin{equation}
\begin{split}
\vec{R}^{(2)}_{L,i}[S_L \rightarrow S_G] &= \sum_{k=1}^{n_L}{B^{-1}_{i,k}} = \sum_{k=1}^{j-1}{B^{-1}_{i,k}} + \sum_{k=j}^{n_L}{B^{-1}_{i,k}} \\
 = \sum_{k=0}^{j-i-1}{c^k} + \sum_{k=j-i}^{n_L-i}{\frac{c^k}{2}} &=\frac{1-c^{j-i}}{1-c} +\left( \frac{c^{j-i}}{2}\right) \frac{1-c^{n_L-j+1}}{1-c} 
\end{split}
\end{equation}
For the nodes in the complete graph we first need to find $\tens{D}^{inv}$, to do so we start by calculating $\tens{D}\tens{B}^{-1}$. Since only one element $D_{gj}$ of $\tens{D}$ is non-zero, only row $g$ of $\tens{D}\tens{B}^{-1}$ can be non-zero. We get row $g$ as: 
\begin{displaymath}(\tens{D}\tens{B}^{-1})_{\text{row}_g} = \frac{-c}{2}[0 \ \ldots \ 1 \ c \ c^2 \ \ldots \ c^L-j]\end{displaymath}
where there are $j-1$ zeros before the $1$, ($\tens{B}^{-1}$ upper triangular). Multiplying this with $\tens{E}^{-1}$ found in the matrix proof of Lemma~\ref{lem:complete} gives: 
\begin{displaymath} -\tens{E}^{-1}\tens{D}\tens{B}^{-1} =\frac{c}{2} \begin{bmatrix}
0 & \ldots & 0 & s & c s & \ldots & c^{n_L-j}s\\
\vdots & \vdots & \vdots & \vdots & \vdots & \vdots & \vdots \\
0 & \ldots & 0 & s & c s & \ldots & c^{n_L-j}s\\
0 & \ldots & 0 & d & c d & \ldots & c^{n_L-j}d\\
0 & \ldots & 0 & s & c s & \ldots & c^{n_L-j}s\\
\vdots & \vdots & \vdots & \vdots & \vdots & \vdots & \vdots \\
0 & \ldots & 0 & s & c s & \ldots & c^{n_L-j}s
\end{bmatrix}\end{displaymath}
\begin{displaymath} s = \frac{c}{(n_G-1)-c(n_G-2)-c^2},~~	
d = \frac{(n_G-1)-c(n_G-2)}{(n_G-1)-c(n_G-2)-c^2}\end{displaymath}
We can now find the PageRank of the nodes in the complete graph by summation of corresponding row: 
\begin{equation}\vec{R}^{(2)}_{G,i}[S_L \rightarrow S_G] =  \sum_{k=1}^{n_L}{D^{inv}_{ik}} + \sum_{k=1}^{n_L}{E^{inv}_{ik}}  = \sum_{k=1}^{n_L}{D^{inv}_{ik}} + \frac{1}{1-c}\end{equation}
We separate between the node in the complete graph linked to from the line and the other nodes in the complete graph.  
\begin{equation}
\begin{split}
\vec{R}^{(2)}_{G,i}[S_L \rightarrow S_G] = \frac{c}{2}\sum_{k=0}^{n_L-j}{c^k}s +\frac{1}{1-c} \\
= \left(\frac{c(1-c^{n_L-j+1})}{2(1-c)}\right) \left( \frac{c}{(n_G-1)-c(n_G-2)-c^2}\right) +\frac{1}{1-c} ,~i\neq g
\end{split}
\end{equation}
\begin{equation}
\begin{split}
\vec{R}^{(2)}_{G,g}[S_L \rightarrow S_G] =  \frac{c}{2}\sum_{k=0}^{n_L-j}{c^k}d +\frac{1}{1-c} \\
=\left(\frac{c(1-c^{n_L+1-j})}{2(1-c)}\right) \left(\frac{((n_G-1)-c(n_G-2))}{((n_G-1)-c(n_G-2))-c^2} \right) +\frac{1}{1-c}  
\end{split}
\end{equation}
And the proof is complete. 
For completeness we include the complete inverse as well:  
\begin{displaymath}(\tens{I}-c\tens{A}^\top)^{-1}= \begin{bmatrix}
\tens{B}^{inv} & \tens{C}^{inv} \\
\tens{D}^{inv} & \tens{E}^{inv} \end{bmatrix}, \quad \tens{B}^{-1}  \begin{bmatrix}
1 & c & c^2 & \ldots & c^{n_L-1}/2\\
0 & 1 & c & \ldots & c^{n_L-2}/2\\
0 & 0 & 1 & \ddots & \vdots\\
\vdots & \vdots & \ddots & \ddots & c/2 \\
0 & 0 & \ldots & 0 & 1
\end{bmatrix} \end{displaymath}
\begin{displaymath} \tens{C}^{-1} =  \begin{bmatrix}
0 & \ldots & 0 \\
\vdots & \ddots & \vdots \\
0 & \ldots & 0
\end{bmatrix}, \quad \tens{D}^{-1} = -\tens{E}^{-1}\tens{D}\tens{B}^{-1} \text{(seen above)}
\end{displaymath}
\begin{displaymath}
\tens{E}^{-1}  = \begin{bmatrix}
 1-(n_G-1)ak & k & k & \ldots & k \\
 k & 1-(n_G-1)ak & k & \ldots & k \\
 k & k & 1-(n_G-1)ak & \ddots & \vdots \\
  \vdots & \vdots & \ddots & \ddots & k \\
k& k & \ldots  & k&  1-(n_G-1)ak  
\end{bmatrix} \end{displaymath}
\begin{displaymath}
a = -c/(n-1), \ k = \frac{-a}{ (1-a^2 + (n-2)(a-a^2))} \end{displaymath}
\qed 
\end{proof}

The normalizing constant can then be found by summation of the individual PageRank of all the nodes in order to get the normalized PageRank $\vec{R}^{(1)}$.

We note that while the node in the line that links to the complete graph does not lose anything from the new link, the nodes below it in the line do lose quite a lot because of it. 
Likewise the PageRank of the node thats linked to gains more from the new link than the others in the complete graph. 

\subsubsection{Connecting the simple line with a complete graph by letting one node in the line be part of the complete graph}

If we instead let one node in the line be part of the complete graph we get another interesting example to look at. An example of what the system could look like can be seen in Fig.~\ref{GraphLineFromBoth}.
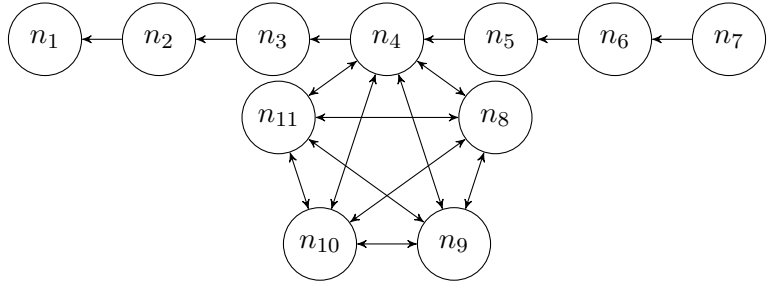
\begin{figure} [!hbt]
\begin{center}
 \begin{tikzpicture}[->,shorten >=0pt,auto,node distance=1.5cm,on grid,>=stealth',
every state/.style={circle,,draw=black}]
\foreach \name/\angle/\text in {A/162/n_{11}, B/90/n_{4}, C/18/n_8, D/-54/n_9, E/-126/n_{10}}
\node[state, xshift=6cm,yshift=.5cm] (\name) at (\angle:1.5cm) {$\text$};
\node[state] (G) [left=of B] {$n_3$};
\node[state] (I) [right=of B] {$n_5$};
\node[state] (F) [left =of G] {$n_2$};
\node[state] (J) [right=of I] {$n_6$};
\node[state] (K) [left =of F] {$n_1$};
\node[state] (L) [right=of J] {$n_7$};
\foreach \from/\to in {A/B,A/C,A/D,A/E,B/C,B/D,B/E,C/D,C/E,D/E}
\draw [<->] (\from) -- (\to);
\path (L) edge node {} (J)
(J) edge node {} (I)
(I) edge node {} (B)
(B) edge node {} (G)
(G) edge node {} (F)
(F) edge node {} (K);
\end{tikzpicture}
\end{center}
\caption{Simple line with one node in the line being a part of a complete graph}
\label{GraphLineFromBoth}
\end{figure}

We formulate the following theorem for the PageRank of the given example:

\begin{theorem}\label{thm:line_part_of_complete}	
The PageRank of the nodes in system $S_L$ made up of a simple line and system $S_G$ made up of a complete graph after one of the nodes $e_j \in S_L$ becomes part of the complete graph assuming uniform $\vec{u}$ can be written: 
\begin{equation}\vec{R}^{(2)}_{L,i}[S_L \leftrightarrow S_G] = \frac{1-c^{n_L+1-i}}{1-c} ,\quad i>j\end{equation}
\begin{equation}\begin{split}\vec{R}^{(2)}_{L,j}[S_L \leftrightarrow S_G] = \left( \frac{1-c^{n_L+1-j}}{1-c}+\frac{c(n_G-1)}{(n_G-1)-c(n_G-2)}\right) \\
\left(\frac{n_G((n_G-1)-c(n_G-2))}{n_G((n_G-1)-c(n_G-2)) -c^2(n_G-1)} \right)\end{split}\end{equation}
\begin{equation}\vec{R}^{(2)}_{L,i}[S_L \leftrightarrow S_G] = \frac{c^{j-i}\vec{R}^{(2)}_{L,j}}{n_G}+\frac{1-c^{j-i}}{1-c}, i < j \end{equation}
\begin{equation}\vec{R}^{(2)}_{G,i}[S_L \leftrightarrow S_G]  = \frac{\left( c+n_G \right)  \left( n_G-1 \right)(1-c) + (n_G-1)c^2(1-c^{n_L-j})}{(1-c)(n_G(n_G-1)-(n_G-1)c^2-n_G(n_G-2)c)}\end{equation}
where $\vec{R}^{(2)}_{G,i}[S_L \leftrightarrow S_G]$ is the PageRank for the nodes in the complete graph (except the node also being a part of the line) and $\vec{R}^{(2)}_{L,i}[S_L \leftrightarrow S_G]$ is the PageRank of nodes in the line.
$n_G,n_L$ is the number of nodes in the complete graph and simple line respectively after making one node in the line part of the complete graph.
\end{theorem}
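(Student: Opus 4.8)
The plan is to follow the probabilistic route of Theorem~\ref{thm:prProb}: for each node $e$ write $\vec{R}^{(2)}_e=\bigl(\sum_{e_k\neq e}P(e_k\to e)+1\bigr)\bigl(\sum_{m\ge0}(P(e\to e))^m\bigr)$, evaluate the two factors, and split the nodes into the line nodes $e_i$ with $i>j$, the node $e_j$, the line nodes $e_i$ with $i<j$, and the complete-graph nodes $e_c\neq e_j$. The computation is driven by three structural facts. First, $e_j$ has $n_G$ out-links --- one back to $e_{j-1}$ and $n_G-1$ into the rest of the complete graph --- so seen from inside the complete graph it is exactly the node that links out in Theorem~\ref{thm:complete_out} with $n=n_G$; in particular its (scalar) return series $\sum_{m\ge0}(P(e_j\to e_j))^m$ equals the quantity there called $\tens{B}^{\text{inv}}$, namely $\frac{n_G((n_G-1)-c(n_G-2))}{n_G((n_G-1)-c(n_G-2))-c^2(n_G-1)}$. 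Second, a walk that steps from $e_j$ to $e_{j-1}$ never revisits $e_j$ or the complete graph, so $e_j$ is a cut vertex on every path ending strictly below it in the line. Third, from $e_j$ the only route to a line node $e_i$ with $i<j$ is the step $e_j\to e_{j-1}$ of probability $c/n_G$ followed by the forced descent of probability $c^{j-1-i}$, hence $P(e_j\to e_i)=\tens{B}^{\text{inv}}\cdot c^{j-i}/n_G$.

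For $i>j$ nothing upstream of $e_i$ has changed and no walk returns to $e_i$, so the two factors are $\sum_{k=0}^{n_L-i}c^k$ and $1$, giving $\vec{R}^{(2)}_{L,i}=\frac{1-c^{n_L+1-i}}{1-c}$. For $e_j$ the return series is $\tens{B}^{\text{inv}}$, and the incoming factor collects a $1$ for $e_j$, the terms $c^1,\dots,c^{n_L-j}$ from the $n_L-j$ line nodes above it, and the hitting probability $\frac{c}{(n_G-1)-c(n_G-2)}$ of Lemma~\ref{lem:complete} from each of the $n_G-1$ complete-graph neighbours; the product of the two is the stated $\vec{R}^{(2)}_{L,j}$. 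For $i<j$ I would use the cut-vertex fact: for every source $e_k$ lying at or above $e_j$ or inside the complete graph, $P(e_k\to e_i)=h_k\,\tens{B}^{\text{inv}}\,c^{j-i}/n_G$ with $h_k=P(e_k\to e_j)$ for $k\neq j$ and $h_j=1$; summing $h_k$ over exactly these nodes reproduces the incoming factor of $e_j$, so $\sum_k P(e_k\to e_i)=(c^{j-i}/n_G)\,\vec{R}^{(2)}_{L,j}$, and adding the linear contribution $1+c+\cdots+c^{j-1-i}$ of the nodes strictly between $e_i$ and $e_j$ (the return series of $e_i$ being $1$) yields $\vec{R}^{(2)}_{L,i}=\frac{1-c^{j-i}}{1-c}+\frac{c^{j-i}\vec{R}^{(2)}_{L,j}}{n_G}$.

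The remaining and hardest case is a complete-graph node $e_c\neq e_j$. Every walk $e_c\to e_c$ stays inside the complete graph, so its return series is identical to the one for a non-linking-out node in Theorem~\ref{thm:complete_out}; its incoming factor is the incoming factor of that theorem \emph{plus} the line contribution, which is $P(e_j\to e_c)$ times the extra weight $\sum_{m=1}^{n_L-j}c^m=\frac{c(1-c^{n_L-j})}{1-c}$ carried by the line nodes strictly above $e_j$. Multiplying, the first piece is exactly the value $\frac{(c+n_G)(n_G-1)}{n_G(n_G-1)-(n_G-1)c^2-n_G(n_G-2)c}$ of Theorem~\ref{thm:complete_out}, while the second is the off-diagonal inverse element $D_c\,P(e_j\to e_c)=\frac{c(n_G-1)}{n_G(n_G-1)-(n_G-1)c^2-n_G(n_G-2)c}$ of that theorem times $\frac{c(1-c^{n_L-j})}{1-c}$; collecting over the common denominator and pulling out $1-c$ gives the claimed $\frac{(c+n_G)(n_G-1)(1-c)+(n_G-1)c^2(1-c^{n_L-j})}{(1-c)\bigl(n_G(n_G-1)-(n_G-1)c^2-n_G(n_G-2)c\bigr)}$. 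The main obstacle is precisely this bookkeeping --- separating what Theorem~\ref{thm:complete_out} already accounts for from the genuinely new line contribution without double counting $e_c$'s own excursions, and recognising $D_c\,P(e_j\to e_c)$ as a known inverse element so the nested geometric sums collapse; the rest is routine algebra.

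Finally I would add the parallel matrix proof. Write $(\tens{I}-c\tens{A}^\top)$ in $2\times2$ block form with $\tens{B}$ the $n_L\times n_L$ block of the line containing $e_j$ --- identical to the plain line matrix except that the entry in position $(j-1,j)$ is $-c/n_G$ instead of $-c$ --- with $\tens{E}$ the $(n_G-1)\times(n_G-1)$ complete-graph block (diagonal $1$, off-diagonal $-c/(n_G-1)$), and with $\tens{C}$, $\tens{D}$ each having a single nonzero row, respectively column, at index $j$. Then $\tens{C}\tens{E}^{-1}\tens{D}$ has a single nonzero entry, at $(j,j)$, so $\tens{B}-\tens{C}\tens{E}^{-1}\tens{D}$ is the line matrix with one perturbed diagonal entry; inverting it explicitly, taking $\tens{E}^{-1}$ from the matrix proof of Lemma~\ref{lem:complete}, and assembling the four blocks by Lemma~\ref{lem:blockwise} reproduces all four formulas after summing each row against the uniform $\vec{u}$. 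The one delicate point there is the inversion of that singly perturbed bidiagonal matrix.
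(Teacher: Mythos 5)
Your proposal is correct and follows essentially the same route as the paper: the probabilistic formula of Theorem~\ref{thm:prProb} combined with the return series and hitting probabilities already computed in Theorem~\ref{thm:complete_out} (with $e_j$ acting as the out-linking node of that theorem and as a cut vertex for everything below it), and all four formulas come out exactly as in the paper's first proof. The only inessential difference is in the supplementary matrix proof, where you split off the $(n_G-1)$-node complete-graph block and absorb $\tens{C}\tens{E}^{-1}\tens{D}$ into a single perturbed diagonal entry of the line block, whereas the paper instead groups the complete graph together with the lower part of the line (reusing Theorem~\ref{thm:line_comp_fg}) and splits off the upper part of the line as its $\tens{E}$; both decompositions lead to the same row sums.
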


\begin{proof}
For the proof of the nodes $e_i \in S_L, ~ i>j$ we get the PageRank for a simple line.
In order to find $\vec{R}^{(2)}_{L,j}[S_L \leftrightarrow S_G]$ we first use Theorem~\ref{thm:prProb} to write it as: 
\begin{equation}\vec{R}^{(2)}_{L,j}[S_L \leftrightarrow S_G] =\left( \sum_{e_i \in S,\\ e_i \neq e_j}{P(e_i\rightarrow e_j)}+1\right) \left(\sum_{k=0}^{\infty}{(P(e_j \rightarrow e_j))^k}\right)
\end{equation} 
where $P(e_i \rightarrow e_j)$ is the probability of getting from node $e_i$ to node $e_j$. 
\begin{equation}\label{eq:proof_th8_1}\begin{split}\sum_{e_i \in S, e_i \neq e_j}{P(e_i \rightarrow e_j)} +1&= \sum_{k=0}^{n_L+1-j}{c^k} + (n-1)\frac{c}{n_G-1}\sum_{k=0}^{\infty}{\left(\frac{c(n_G-2)}{(n_G-1)}\right)^k}\\
&= \frac{1-c^{n_L+1-j}}{1-c}+\frac{c(n_G-1)}{(n_G-1)-c(n_G-2)}
\end{split}\end{equation}
\begin{equation}\begin{split}P(e_j \rightarrow e_j) &= \frac{c(n_G-1)}{n_G}\frac{c}{n_G-1}  + \frac{c(n_G-1)}{n_G}\frac{c(n_G-2)}{n_G-1}\frac{c}{n_G-1} \\
&~~+\frac{c(n_G-1)}{n_G}\frac{c^2(n_G-2)^2}{(n_G-1)^2}\frac{c}{n_G-1}+ \ldots  \\
&= \frac{c^2}{n_G}\sum_{k=0}^{\infty}{\left( \frac{c(n_G-2)}{n-1}\right)^k} = \frac{c^2(n_G-1)}{n_G((n_G-1)-c(n_G-2))} 
\end{split}\end{equation}
\begin{equation}\label{eq:proof_th8_2}\sum_{k=0}^{\infty}{(P(e_j \rightarrow e_j))^k} = \frac{n_G((n_G-1)-c(n_G-2))}{n_G((n_G-1)-c(n_G-2)) -c^2(n_G-1)} \end{equation}
Multiplication of the results from equation~(\ref{eq:proof_th8_1}) and (\ref{eq:proof_th8_2}) gives
\begin{equation}\begin{split}
\vec{R}^{(2)}_{L,j}[S_L \leftrightarrow S_G] = \left( \frac{1-c^{n_L+1-j}}{1-c}+\frac{c(n_G-1)}{(n_G-1)-c(n_G-2)}\right)\\
\left(\frac{n_G((n_G-1)-c(n_G-2))}{n_G((n_G-1)-c(n_G-2)) -c^2(n_G-1)} \right)
\end{split}\end{equation}
For the nodes below the one in the complete graph we can write the PageRank as: 
\begin{equation}\vec{R}^{(2)}_{L,i}[S_L \leftrightarrow S_G] = \vec{R}^{(2)}_{L,j}[S_L \leftrightarrow S_G] P(e_j \rightarrow e_i) +\sum_{k=0}^{j-i-1}{c^k}, \quad i<j\end{equation}
\begin{equation}P(e_j \rightarrow e_i) = \frac{c^{j-i}}{n_G} \end{equation}
This gives:
\begin{equation}\vec{R}^{(2)}_{L,i}[S_L \leftrightarrow S_G] = \frac{c^{j-i}\vec{R}^{(2)}_{L,j}[S_L \leftrightarrow S_G]}{n_G}+\frac{1-c^{j-i}}{1-c} ,i < j\end{equation}

Left to prove we have the formula for the nodes in the complete graph not directly connected to the line $\vec{R}^{(2)}_{G,i}[S_L \leftrightarrow S_G]$. 
We do not need to consider the part of the line following the complete graph, since we can not get from this part of the graph back to the complete graph. 
We already have the PageRank for the nodes not linking out in the complete graph in the case where we have no line of nodes linking to the complete graph from Theorem \ref{thm:complete_out}. In the Matrix proof there we found the PageRank of the nodes in the complete graph not linking out to be: 
\begin{equation}\begin{split}\vec{R}^{(2)}_{G,i}[S_G] = D_i^{inv}+E_d^{inv}+(n_G-2)E_e^{inv}  \\
=\left( P(e_j \rightarrow e_i) +1+\sum_{\stackrel{e_k \in S_GS}{ e_k \neq e_j,e_i}}{P(e_k\rightarrow e_i)}  \right) \left( \sum_{k=0}^{\infty}{(P(e_i \rightarrow e_i))^k}\right)\end{split}\end{equation}
We identify $D^{inv}_i$ in the expression: 
\begin{equation}D_i^{inv} = P(e_j \rightarrow e_i)\left( \sum_{k=0}^{\infty}{(P(e_i \rightarrow e_i))^k}\right)\end{equation}
Since all paths $P(e_k \rightarrow e_i)$ where $e_k \in S_L$ need to go through node $e_j$ we can write these as a product of the probability to get to node $e_j$ times the probability to from there get to node $e_i$ for which we want to calculate PageRank. 
\begin{equation}\begin{split}\vec{R}^{(2)}_{G,i}[S_L \leftrightarrow S_G] = \left( \sum_{k=0}^{\infty}{(P(e_i \rightarrow e_i))^k}\right)\\
 \left( \left( \sum_{\stackrel{e_k \in S_L}{ e_k \neq e_j}}{P(e_k \rightarrow e_j)}+1\right) P(e_j \rightarrow e_i) +1+\sum_{\stackrel{e_k \in S_G}{\ e_k \neq e_j,e_i}}{P(e_k\rightarrow e_i)}  \right) \end{split}\end{equation}
Using the expressions found in the matrix proof of Theorem \ref{thm:complete_out} we get PageRank. 
\begin{equation}\begin{split}
\vec{R}^{(2)}_{G,i}[S_L \leftrightarrow S_G] &= (\sum_{k=0}^{n_L-j}c^k){D_i}^{inv}+E_d^{inv}+(n_G-2)E_e^{inv}  \\
&= \frac{(n_G-1)(n_G+c)}{n_G(n_G-1) -n_G(n_G-2)c -(n_G-1)c^2}\\
+ \left( \frac{c(1-c^{n_L-j})}{1-c}\right)& \frac{c(n_G-1)}{n_G(n_G-1) -n_G(n_G-2)c -(n_G-1)c^2}\\
&=  \frac{\left( c+n_G \right)  \left( n_G-1 \right)(1-c) + (n_G-1)c^2(1-c^{n_L-j})}{(1-c)(n_G(n_G-1)-(n_G-1)c^2-n_G(n_G-2)c)}\end{split}\end{equation}
\qed 
\end{proof}
We give a matrix proof of Theorem~\ref{thm:line_part_of_complete} as well. 

\begin{proof}[Proof of Theorem~\ref{thm:line_part_of_complete}]
For the proof we consider the structure in Fig.~\ref{GraphLineFromBoth} when we talk about nodes below/above other nodes. 

For the proof we once again start by dividing the matrix $(\tens{I}-c\tens{A}^\top)$ in blocks: 
\begin{displaymath}(\tens{I}-c\tens{A}^\top)= \begin{bmatrix}
\tens{B} & \tens{C} \\
\tens{D} & \tens{E} \end{bmatrix} \end{displaymath}
And the same for the inverse: 
\begin{displaymath}(\tens{I}-c\tens{A}^\top)^{-1}= \begin{bmatrix}
\tens{B}^{inv} & \tens{C}^{inv} \\
\tens{D}^{inv} & \tens{E}^{inv} \end{bmatrix} \end{displaymath}
We let $\tens{B}$ be the matrix for the nodes "below" the complete graph as well as the complete graph. In other words we can divide $\tens{B}$ itself in blocks: 
\begin{displaymath} \tens{B} = \begin{bmatrix}
\tens{B}_B & \tens{B}_C \\
\tens{B}_D & \tens{B}_E \end{bmatrix} \end{displaymath}
where $\tens{B}_B$ corresponds to the part of the line we can reach from the complete graph (nodes below the complete graph), $\tens{B}_C$ is a zero matrix except for the bottom left element which is equal to $-c/n_G$ corresponding to the link from the complete graph to the linked to node in the line. $\tens{B}_D$ is a zero matrix and $\tens{B}_E$ corresponds to the matrix of a complete graph with the first node in the complete graph linking out (any other could of course be used and would result in the same calculations after multiplication with a suitable permutation matrix.).

Then $\tens{B}$ have the same structure as the matrix for our example system with one node in a complete graph linking to a node in the line in Fig.~\ref{GraphLineFromGraph} (except we have no nodes "above" the complete graph). We note that we have already looked at this system in the matrix proof of Theorem~\ref{thm:line_comp_fg} and will leave it until we will need it later. 

We let $\tens{E}$ correspond to the part of the line "above" the complete graph, which is also something we have looked at multiple times already. $\tens{C}$ is a zero matrix except for element $C_{j1}$ corresponding to the link from the line to the complete graph. $\tens{D}$ is a zero matrix. 

For completeness we include the sizes of the different blocks as well, $\tens{B}: (j-1+n_G) \times (j-1+n_G)$, $\tens{C}: (j-1+n_G) \times (n_L-j)$, $\tens{D}:   (n_L-j)\times (j-1+n_G) $, $\tens{E}:  (n_L-j) \times  (n_L-j)$.

From Lemma~\ref{lem:blockwise} we get:
\begin{equation}\begin{split}
\tens{B}^{inv} &= (\tens{B}-\tens{D}\tens{E}^{-1}\tens{C})^{-1} = \tens{B}^{-1}\\
\tens{C}^{inv} &=  -(\tens{B}-\tens{D}\tens{E}^{-1}\tens{C})^{-1}\tens{C}\tens{E}^{-1} = \tens{B}^{-1}\tens{C}\tens{E}^{-1}\\
\tens{D}^{inv} &=  -\tens{E}^{-1}\tens{D}(\tens{B}-\tens{D}\tens{E}^{-1}\tens{C})^{-1} = \tens{0}\\
\tens{E}^{inv} &= \tens{E}^{-1} +\tens{E}^{-1}\tens{D}(\tens{B}-\tens{D}\tens{E}^{-1}\tens{C})^{-1}\tens{C}\tens{E}^{-1} = \tens{E}^{-1}
\end{split}\end{equation}
We already know most of $\tens{B}^{-1}$ from the matrix proof of Theorem~\ref{GraphLineFromGraph} and $\tens{E}^{-1}$ from the matrix proof of Theorem~\ref{thm:line_add_node}. Left to find is $\tens{C}^{inv} $ before we can find the PageRank. 

Calculating $\tens{C}\tens{E}^{-1}$ we get a $ (j-1+n_G) \times (n_L-j)$ zero matrix except for row $j$ corresponding to the non-zero element $C_{j1}=-c$ in $\tens{C}$. This gives  $\tens{C}\tens{E}^{-1}$:
\begin{displaymath} \tens{C}\tens{E}^{-1} = -c \begin{bmatrix}
0 & 0 & 0 & \ldots & 0 \\
\vdots &\vdots &\vdots &\ldots &\vdots \\
0 & 0 & 0 & \ldots & 0 \\
1 & c & c^2 & \ldots & c^{n_L-j-1}\\
0 & 0 & 0 & \ldots & 0 \\
\vdots &\vdots &\vdots &\ldots &\vdots \\
0 & 0 & 0 & \ldots & 0 
\end{bmatrix}\end{displaymath}
where row $j$ is the non-zero row. When multiplying this with $\tens{B}^{-1}$ we are only interested in column $j$ of $\tens{B}^{-1}$ since only row $j$ of $\tens{C}\tens{E}^{-1}$ is non-zero. From the matrix proof of Theorem~\ref{GraphLineFromGraph} we get column $j$ as:
\begin{equation}B^{-1}_{kj} = \left\{ \begin{split}
&\frac{c^{j-k}}{n_G} \left(1-(n_G-1)bk_A \right), ~~ 1 \le k \ < j\\
&\left(1-(n_G-1)bk_A \right), ~~ k = j\\
&\frac{bk_A}{a}, ~~ j < k \le j-1+n_G  \end{split} \right. \end{equation}
\begin{displaymath} k_A = \frac{-a}{(1-ab)+(n_G-2)(a-ab)}\end{displaymath}
\begin{displaymath} a = \frac{-c}{n_G-1}, \quad b = \frac{-c}{n_G}\end{displaymath}
We note that we get $c^{j-k}$ rather than $c^{j-k+1}$ since node $e_j$ in this case is the node linking to node $e_{j-1}$ rather than node $e_j$ being the one linked to. We are now ready to calculate $C^{inv}$: 
\begin{equation}\tens{C}^{inv}_{\text{row}_k} =  B^{-1}_{kj}[c \ c^2 \ \ldots \ c^{n_L-j}],~~ (1 \le k \le j-1+n_G)\end{equation}
To get the PageRank of a node we need to sum all the elements of corresponding row of $(\tens{I}-c\tens{A}^\top)^{-1}$, for the nodes "above" the complete graph we get the simple line: 
\begin{equation}\vec{R}^{(2)}_{L,i}[S_L \leftrightarrow S_G] = \sum^{n_L-i}_{k=0}{c^k} = \frac{1-c^{n_L-i+1}}{1-c}, \ i > j\end{equation}
For the node part of both the complete graph and the line we get: 
\begin{equation}\begin{split}
\vec{R}^{(2)}_{L,j}[S_L \leftrightarrow S_G] = \sum^{n_G}_{k = 1}{B^{-1}_{jk}} + \sum^{n_L-j}_{k = 1}{C^{inv}_{jk}}\\
=  \frac{n_G(n_G-1)+n_Gc}{n_G(n_G-1)-(n_G-1)c^2-n_G(n_G-2)c} + B^{-1}_{jj}\sum^{n_L-j}_{k=1}{c^k}\\
= \frac{n_G(n_G-1)+n_Gc}{n_G(n_G-1)-(n_G-1)c^2-n_G(n_G-2)c} \\
+\left( \frac{n_G(n_G-1) - n_G(n_G-2)c}{n_G(n_G-1) - n_G(n_G-2)c-2(n_G-2)c^2}\right)\left( \sum^{n_L-j}_{k=0}{c^k} -1\right)\\
 = \left( \frac{1-c^{n_L+1-j}}{1-c}+\frac{c(n_G-1)}{(n_G-1)-c(n_G-2)}\right) \\
\left(\frac{n_G((n_G-1)-c(n_G-2))}{n_G((n_G-1)-c(n_G-2)) -c^2(n_G-1)} \right) \end{split}\end{equation}
For the nodes "below" the complete graph we get: 
\begin{equation}\begin{split}
\vec{R}^{(2)}_{L,i}[S_L \leftrightarrow S_G] = \sum^{n_G}_{k = 1}{B^{-1}_{ik}} + \sum^{n_L-j}_{k = 1}{C^{inv}_{ik}}, ~~ i < j\\
= \sum^{j-i-1}_{k=0}{c^k} + \frac{c^{j-i}(c+(n_G-1))}{n_G(n_G-1) - n_G(n_G-2)c-(n_G-1)c^2} + B^{-1}_{ij}\sum^{n_L-j}_{k=1}{c^k}\end{split}\end{equation}
where we once again note that we get $c^{j-i}$ rather than $c^{j-i+1}$ since we consider node $j-1$ the node linked to by the graph rather than node $j$. 
\begin{equation}\begin{split}
\vec{R}^{(2)}_{L,i}[S_L \leftrightarrow S_G] = \frac{1-c^{j-i}}{1-c} \\
+  \frac{c^{j-i}(c+(n_G-1))}{n_G(n_G-1) - n_G(n_G-2)c-(n_G-1)c^2} + \frac{c^{j-i}}{n_G} B^{-1}_{jj} \sum^{n_L-j}_{k=1}{c^k}\\
= \frac{1-c^{j-i}}{1-c}  + \frac{c^{j-i}}{n_G} \vec{R}^{(2)}_{L,j}[S_L \leftrightarrow S_G]\end{split}\end{equation}

For the nodes in the complete graph not part of the line we get: 
\begin{equation}
\vec{R}^{(2)}_{G,i}[S_L \leftrightarrow S_G] = \sum^{n_G}_{k = 1}{B^{-1}_{ik}} + \sum^{n_L-j}_{k = 1}{C^{inv}_{jk}}, \ i \neq j 
\end{equation}
where $e_j \in S_G$ is the node in the complete graph also part of the line. 
\begin{equation}\begin{split}
\vec{R}^{(2)}_{G,i} &= \frac { \left( c+n_G \right)  \left( n_G-1 \right) }{n_G(n_G-1)-(n_G-1)c^2-n_G(n_G-2)c } + \frac{bk_A}{a}\sum^{n_L-j}_{k=1}{c^k}\\
&=  \frac { \left( c+n_G \right)  \left( n_G-1 \right) }{n_G(n_G-1)-(n_G-1)c^2-n_G(n_G-2)c} \\
&~~+ \frac{(n_G-1)c^2(1-c^{n_L-j})}{(1-c)(n_G(n_G-1)-(n_G-1)c^2-n_G(n_G-2)c)}\\
&= \frac{\left( c+n_G \right)  \left( n_G-1 \right)(1-c) + (n_G-1)c^2(1-c^{n_L-j})}{(1-c)(n_G(n_G-1)-(n_G-1)c^2-n_G(n_G-2)c)}\end{split}\end{equation}
And the proof is complete.

\qed 
\end{proof}

For reference we include the complete inverse matrix once again. 
\begin{displaymath}
(\tens{I}-c\tens{A}^\top)^{-1} = \begin{bmatrix}
\tens{B}_B^{inv} & \tens{B}_C ^{inv} & \tens{C}_1^{inv} \\
\tens{B}_D^{inv} & \tens{B}_E^{inv} & \tens{C}_2^{inv} \\
\tens{D}_1^{inv} & \tens{D}_2^{inv} & \tens{E}^{inv}
\end{bmatrix} \end{displaymath}
\begin{displaymath}\tens{B}_B^{inv} = \begin{bmatrix}
1 & c & c^2 & \ldots & c^{j-2}\\
0 & 1 & c & \ldots & c^{j-3}\\
0 & 0 & 1 & \ddots & \vdots\\
\vdots & \vdots & \ddots & \ddots & c \\
0 & 0 & \ldots & 0 & 1
\end{bmatrix}, ~ \tens{B}_C^{inv} = \begin{bmatrix}
\frac{c^{j-1}s_A}{n_G}  & \frac{c^{j-1}k_A}{n_G} & \frac{c^{j-1}k_A}{n_G} & \ldots & \frac{c^{j-1}k_A}{n_G} \\
\frac{c^{j-2}s_A}{n_G}  & \frac{c^{j-2}k_A}{n_G} & \frac{c^{j-2}k_A}{n_G} & \ldots & \frac{c^{j-2}k_A}{n_G} \\
\vdots & \vdots & \vdots & \ldots & \vdots \\
\frac{cs_A}{n_G} & \frac{ck_A}{n_G} & \frac{ck_A}{n_G} & \ldots & \frac{ck_A}{n_G} \\
\end{bmatrix}  \end{displaymath}
\begin{displaymath} \tens{B}_D^{inv} = \begin{bmatrix}
0 & \ldots & 0 \\
\vdots & \ddots & \vdots \\
0 & \ldots & 0
\end{bmatrix},~  \tens{B}_E^{inv} = \begin{bmatrix}
s_A & k_A & k_A & \ldots & k_A\\
\frac{bk_A}{a} & s_D & k_D & \ldots & k_D\\
\frac{bk_A}{a} & k_D & s_D & \ddots & k_D\\
\vdots & \vdots & \ddots & \ddots & \vdots \\
\frac{bk_A}{a} & k_D & k_D & \ldots & s_D \end{bmatrix} \end{displaymath}
%
%
\begin{displaymath}s_A =  1-(n_G-1)bk_A, ~ s_D =  1-(n_G-2)ak_D\end{displaymath}
Where $\tens{B}$ is taken from the matrix proof of Theorem~\ref{thm:line_comp_fg} where $\tens{B}$ is the part of the graph consisting of a line with $j-1$ nodes and a complete graph with $n_G$ nodes with the first node in the complete graph linking to node $j-1$ in the line. 
\begin{displaymath}\tens{C}^{inv} = \begin{bmatrix} 
\tens{C}_1^{inv} \\
\tens{C}_2^{inv}
\end{bmatrix} = \begin{bmatrix} 
\frac{-c^{j-1}cs_A}{n_G} & \frac{-c^{j-1}c^2s_A}{n_G} & \ldots & \frac{-c^{j-1}c^{n_L-j}s_A}{n_G} \\
\frac{-c^{j-2}cs_A}{n_G} & \frac{-c^{j-2}c^2s_A}{n_G} & \ldots & \frac{-c^{j-2}c^{n_L-j}s_A}{n_G} \\
\vdots & \vdots & \ldots & \vdots \\
\frac{-c^{1}cs_A}{n_G} & \frac{-c^{1}c^2s_A}{n_G} & \ldots & \frac{-c^{1}c^{n_L-j}s_A}{n_G} \\
{-cs_A} & {-c^2s_A} & \ldots & {-c^{n_L-j}s_A} \\
\frac{-cbk_A}{a} & \frac{-c^2bk_A}{a}  & \ldots & \frac{-c^{n_L-j}bk_A}{a} \\
\frac{-cbk_A}{a} & \frac{-c^2bk_A}{a}  & \ldots & \frac{-c^{n_L-j}bk_A}{a} \\
\vdots & \vdots & \ldots & \vdots \\
\frac{-cbk_A}{a} & \frac{-c^2bk_A}{a}  & \ldots & \frac{-c^{n_L-j}bk_A}{a} 
\end{bmatrix} \end{displaymath}
%
%
\begin{displaymath}\tens{D}^{inv} = [\tens{D}_1^{inv} \ \tens{D}_2^{inv}] = \begin{bmatrix}
0 & \ldots & 0 \\
\vdots & \ddots & \vdots \\
0 & \ldots & 0
\end{bmatrix}, \quad E^{inv} =  \begin{bmatrix}
1 & c & c^2 & \ldots & c^{n_L-j-1}\\
0 & 1 & c & \ldots & c^{n_L-j-2}\\
0 & 0 & 1 & \ddots & \vdots \\
\vdots & \vdots & \ddots & \ddots & c \\
0 & 0 & \ldots & 0 & 1
\end{bmatrix} \end{displaymath}
%
\begin{theorem}
The normalizing constant $N$ for the simple line with one node being part of a complete graph using uniform $\vec{u}$ can be written as: 
\begin{equation}\begin{split}
N = \left( n_G-1 \right) {\vec{R}^{(2)}_{G,i \notin L}[S_L \leftrightarrow S_G]+{\vec{R}^{(2)}_{L,j}[S_L \leftrightarrow S_G]}}
+ {\frac {n_L-1}{1-c}}\\-{\frac {c
 \left( 1-{c}^{n_L-j} \right) }{ \left( 1-c \right) ^{2}}}-{\frac {c \left( 1-{c}^{j-1} \right) }{ \left( 1-c \right) ^{2}}
}+{\frac {c \left( 1-{c}^{j-1} \right) {\vec{R}^{(2)}_{L,j}[S_L \leftrightarrow S_G]}}{n_G \left( 1-c \right) 
}}\end{split}\end{equation}
where $\vec{R}^{(2)}_{G,i}[S_L \leftrightarrow S_G]$ is the PageRank of nodes in the complete graph, \\$\vec{R}^{(2)}_{L,j}[S_L \leftrightarrow S_G]$ is for the node in both the line and complete graph and \\ $\vec{R}^{(2)}_{L,i}[S_L \leftrightarrow S_G]$ is for the nodes in the line.
\end{theorem}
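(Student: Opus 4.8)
The plan is to exploit the fact, used throughout the paper, that the normalizing constant is simply $N=\|\vec{R}^{(2)}\|_1=\sum_i \vec{R}^{(2)}_i$, the sum of the (non-normalized) PageRank over every node of the combined system, so that $\vec{R}^{(1)}_i=\vec{R}^{(2)}_i/N$. Hence there is nothing to prove beyond adding up the per-node closed forms already supplied by Theorem~\ref{thm:line_part_of_complete}; the statement is literally the expanded value of that sum.

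First I would partition the nodes of $S$ into four disjoint groups according to Fig.~\ref{GraphLineFromBoth}: (i) the $n_L-j$ nodes of the line strictly above the complete graph (internal index $i>j$); (ii) the single node $e_j$ lying in both the line and the complete graph; (iii) the $j-1$ nodes of the line strictly below the complete graph ($i<j$); and (iv) the $n_G-1$ nodes of the complete graph not belonging to the line. For group (i) I substitute $\vec{R}^{(2)}_{L,i}=\frac{1-c^{n_L+1-i}}{1-c}$ and sum over $i=j+1,\dots,n_L$; re-indexing by $m=n_L+1-i$ converts the correction into $\sum_{m=1}^{n_L-j}c^m=\frac{c(1-c^{n_L-j})}{1-c}$, so this group contributes $\frac{n_L-j}{1-c}-\frac{c(1-c^{n_L-j})}{(1-c)^2}$. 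For group (iii) I substitute $\vec{R}^{(2)}_{L,i}=\frac{c^{j-i}\vec{R}^{(2)}_{L,j}}{n_G}+\frac{1-c^{j-i}}{1-c}$ and sum over $i=1,\dots,j-1$; with $\ell=j-i$ and $\sum_{\ell=1}^{j-1}c^\ell=\frac{c(1-c^{j-1})}{1-c}$ this gives $\frac{j-1}{1-c}-\frac{c(1-c^{j-1})}{(1-c)^2}+\frac{c(1-c^{j-1})}{n_G(1-c)}\vec{R}^{(2)}_{L,j}$. Groups (ii) and (iv) contribute $\vec{R}^{(2)}_{L,j}[S_L\leftrightarrow S_G]$ and $(n_G-1)\vec{R}^{(2)}_{G,i}[S_L\leftrightarrow S_G]$ respectively, since within each of those groups the per-node value is constant.

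Finally I would add the four contributions: the leading terms merge as $\frac{n_L-j}{1-c}+\frac{j-1}{1-c}=\frac{n_L-1}{1-c}$, and every remaining term already appears verbatim in the claimed formula, so the identity follows. The proof is pure bookkeeping; the only point requiring care is to match the internal line index $j$ and the counts $n_L-j$, $j-1$, $n_G-1$ to the convention of Fig.~\ref{GraphLineFromBoth} and to count the shared node $e_j$ exactly once. There is no analytic difficulty, since every sum that arises is a finite geometric series, and I would deliberately leave $\vec{R}^{(2)}_{L,j}$ and $\vec{R}^{(2)}_{G,i}[S_L\leftrightarrow S_G]$ in the closed forms of Theorem~\ref{thm:line_part_of_complete}, exactly as the statement is phrased.
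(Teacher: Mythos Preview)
Your proposal is correct and follows essentially the same approach as the paper: partition the nodes into the four obvious groups, sum the closed forms from Theorem~\ref{thm:line_part_of_complete} using finite geometric series, and combine $\frac{n_L-j}{1-c}+\frac{j-1}{1-c}=\frac{n_L-1}{1-c}$. Your re-indexing arguments for the two geometric sums are exactly what the paper does (if anything, you spell out the substitutions $m=n_L+1-i$ and $\ell=j-i$ more explicitly than the paper bothers to).
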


\begin{proof}
The normalizing constant is equal to the sum of the non-normalized PageRank of all nodes. 

We got $n_G$ nodes in the complete graph, $(n-1)$ not directly connected to the line and one connected to the line. This gives: 
\begin{equation}
N = (n-1)\vec{R}^{(2)}_{G,i}[S_L \leftrightarrow S_G] +\vec{R}^{(2)}_{L,j}[S_L \leftrightarrow S_G] +\sum_{i \neq j}{\vec{R}^{(2)}_{L,j}[S_L \leftrightarrow S_G]}\end{equation}
where $\vec{R}^{(2)}_{L,j}[S_L \leftrightarrow S_G]$ is the PageRank of individual nodes in the line except for the node node $j$ in the line for which we have $\vec{R}^{(2)}_{L,j}[S_L \leftrightarrow S_G]$.
For those nodes we got PageRank: 
\begin{equation}\vec{R}^{(2)}_{L,j}[S_L \leftrightarrow S_G] = \left\{ \begin{split}
\frac{1-c^{n_L+1-i}}{1-c} ,~~ i>j\\
\frac{c^{j-i}\vec{R}^{(2)}_{L,j}}{n_G}+\frac{1-c^{j-i}}{1-c},~~ i < j \end{split} \right.\end{equation}
The sum of all nodes for which $i>j$ can be written: 
\begin{equation}
\sum_{i = j+1}^{n_L}{\vec{R}^{(2)}_{L,j}[S_L \leftrightarrow S_G]} = {\frac {n_L-j}{1-c}}-{\frac {c \left( 1-{c}^{n_L-j} \right) }{ \left( 1-c
 \right) ^{2}}}\end{equation}
where we use that the second part $\sum_{i=j}^{n_L}{\frac{-c^{n_L+1-i}}{1-c}}$ is a geometric sum. Calculating the sum for $i<j$ in the same way we get:
\begin{equation}
\sum_{i = 1}^{j-1}{\vec{R}^{(2)}_{L,j}[S_L \leftrightarrow S_G]} = {\frac {j-1}{1-c}}-{\frac {c \left( 1-{c}^{j-1} \right) }{ \left( 1-c
 \right) ^{2}}}+{\frac {c \left( 1-{c}^{j-1} \right) {\it \vec{R}^{(2)}_{L,j}}}{n_G
 \left( 1-c \right) }}\end{equation}
Summation of all individual parts completes the proof. 

\qed 
\end{proof}

Now that we have an explicit formula for this example we can look at what happens when we change various parameters like $c$ or the size of either the line or complete graph. 

\subsection{A closer look at the formulas for PageRank in our examples} \label{closer_look_formula}
Now that we have formulas for the PageRank of a couple different graph structures we are going to take a short look at what happens when we change some parameters. We will also take a look at the partial derivative with respect to $c$. 
\subsubsection{Partial derivatives with respect to c}
In the case of the simple line with formula as seen earlier we get the derivative with respect to $c$ as: 
\begin{equation}
\frac{\partial}{\partial c}\vec{R}^{(2)}_{i}[S_L] = \left( 1-c \right) ^{-2}-{\frac {{c}^{n_L-i+1} \left( n_L-i+1 \right) }{c
 \left( 1-c \right) }}-{\frac {{c}^{n_L-i+1}}{ \left( 1-c \right) ^{2}}}\end{equation}
Rewriting it and looking to see if it is positive we get:
\begin{equation}{\frac {1+{c}^{n_L-i}(i-n_L)(1-c)}{
 \left( -1+c \right) ^{2}}} \geq 0 \Leftrightarrow c^{n_L-i}((i-n_L)(1-c) +\frac{1}{c^{n_L-i}}) \ge 0\end{equation}
\begin{equation}\Leftrightarrow \frac{1}{c^{n_L-i}} \ge (n_L-i)(1-c) \Leftrightarrow \frac{1}{1-c} \ge (n_L-i)c^{n_L-i} \Leftrightarrow 
\sum_{k=0}^{\infty}{c^k} \ge (n_L-i)c^{n_L-i} \end{equation}
Since we have $0<c<1,~ n_L\ge i$ we have that $c^k > c^{k+1}$ the first $n_L-i$ elements of the left sum is at least as large as $c^{n_L-i}$, this gives: 
\begin{equation}\sum_{k=0}^{\infty}{c^k} \ge \sum_{k=1}^{n_L-i}{c^{k}} \ge (n_L-i)c^{n_L-i}\end{equation}

For our case with a line connected to a complete graph by letting one node in the complete graph be part of the line we get the following derivative with respect to $c$: 
\begin{displaymath}
\frac{\partial}{\partial c} \vec{R}^{(2)}_{L,j}[S_L \leftrightarrow S_G] \end{displaymath}
\begin{equation}\begin{split}= {\frac { \left(  \left(  \left( -1+c \right) {n_G}^{2}+ \left( -1+c
 \right) ^{2}n_G-{c}^{2} \right)  \left(  \left( -1+c \right) n_G-2c+1
 \right) {\frac {\partial}{\partial c}}G \left( c \right)\right)n_G}{ \left(  \left( -1+c \right) {n_G}^{2}+ \left( -1+c \right) ^{2}n_G-{c}^{2} \right) ^{2}}} \\
\frac{-\left( \left( n_G-1 \right) 
 \left( c \left(  \left( -2+c \right) n_G+2-2c \right) G \left( c
 \right) - \left( n_G-1 \right)  \left( n_G+{c}^{2} \right)  \right) 
 \right) n_G}{ \left(  \left( -1+c \right) {n_G}^{2}+ \left( -1+c \right)^{2}n_G-{c}^{2} \right) ^{2}} \end{split}\end{equation}
\begin{displaymath} G(c) =  \frac{1-c^{n_L-j+1}}{1-c}\end{displaymath}
\begin{displaymath}\frac{\partial}{\partial c} G(c) = \left( 1-c \right) ^{-2}-{\frac {{c}^{n_L-j+1} \left( n_L-j+1 \right) }{c
 \left( 1-c \right) }}-{\frac {{c}^{L-j+1}}{ \left( 1-c \right) ^{2}}}
\end{displaymath}
The derivative have about the same shape as the original function. As $c$ gets large so does the derivative and as $n_G$ increases the slope get steeper for large $c$. 

Looking at the other nodes in the complete graph we get the derivative with respect to $c$ as: 
\begin{displaymath}
 \frac{\partial}{\partial c} \vec{R}^{(2)}_{G,i}[S_L \leftrightarrow S_G]  
\end{displaymath}
\begin{equation}\begin{split}
=\frac { \left( n_{G}-1 \right)  \left( 1-c \right) - \left( c-n_{G} \right)  \left( n_{G}-1 \right) +2 \left( n_{G}-1 \right) c \left( 1-{c}^{n_{{L}}-j} \right) }
{\left( 1-c \right) \left( n_{{G}} \left( n_{{G}}-1 \right) - \left( n_{{G}}-1 \right) {c}^{2}-n_{{G}} \left( n_{{G}}-2 \right) c \right)}  \\
-\frac {  \left( n_{{G}}-1 \right) {c}^{1+n_{{L}}-j} \left( n_{{L}}-j \right)}
{\left( 1-c \right) \left( n_{{G}} \left( n_{{G}}-1 \right) - \left( n_{{G}}-1 \right) {c}^{2}-n_{{G}} \left( n_{{G}}-2 \right) c \right)}  \\
 + \frac { \left( c+n_{{G}} \right)  \left( n_{{G}}-1 \right)  \left( 1-c \right) + \left( n_{{G}}-1 \right) {c}^{2} \left( 1-{c}^{n_{{L}}-j} \right) }
{ \left( 1-c \right) ^{2} \left( n_{{G}} \left( n_{{G}}-1 \right) - \left( n_{{G}}-1 \right) {c}^{2}-n_{{G}} \left( n_{{G}}-2 \right) c \right) } \\
- \frac { \left(  \left( c+n_{{G}} \right)  \left( n_{{G}}-1 \right)  \left( 1-c \right) + \left( n_{{G}}-1 \right) {c}^{2} \left( 1-{c}^{n_{{L}}-j} \right)  \right)  \left(2c  +\left(2-2c -n_{{G}}\right)n_{{G}}   \right) }
{ \left( 1-c \right)  \left( n_{{G}} \left( n_{{G}}-1 \right) - \left( n_{{G}}-1 \right) {c}^{2}-n_{{G}} \left( n_{{G}}-2 \right) c \right) ^{2}}
\end{split}\end{equation}

\subsubsection{Changes in the size of the complete graph for our last example}
When we change the size of the complete graph we can see for example what size would be the most effective for increasing ones PageRank. In all these examples we will use $n_L=10,j=6,c=0.85$ and $n_G$ will wary between $1$ and $50$.
First we note that the part above the complete graph is unaffected by the change of $n_G$. It is obvious however that as $n_G$ increases the normalizing constant in the normalized PageRank will likely get larger resulting in a lower PageRank as long as it is part of a small system. 

For the nodes in the complete graph except for the one thats part of the line we get the result in Fig.~\ref{pg}.
\begin{figure}[!hbt]
\begin{center}
\includegraphics[width=.4\textwidth]{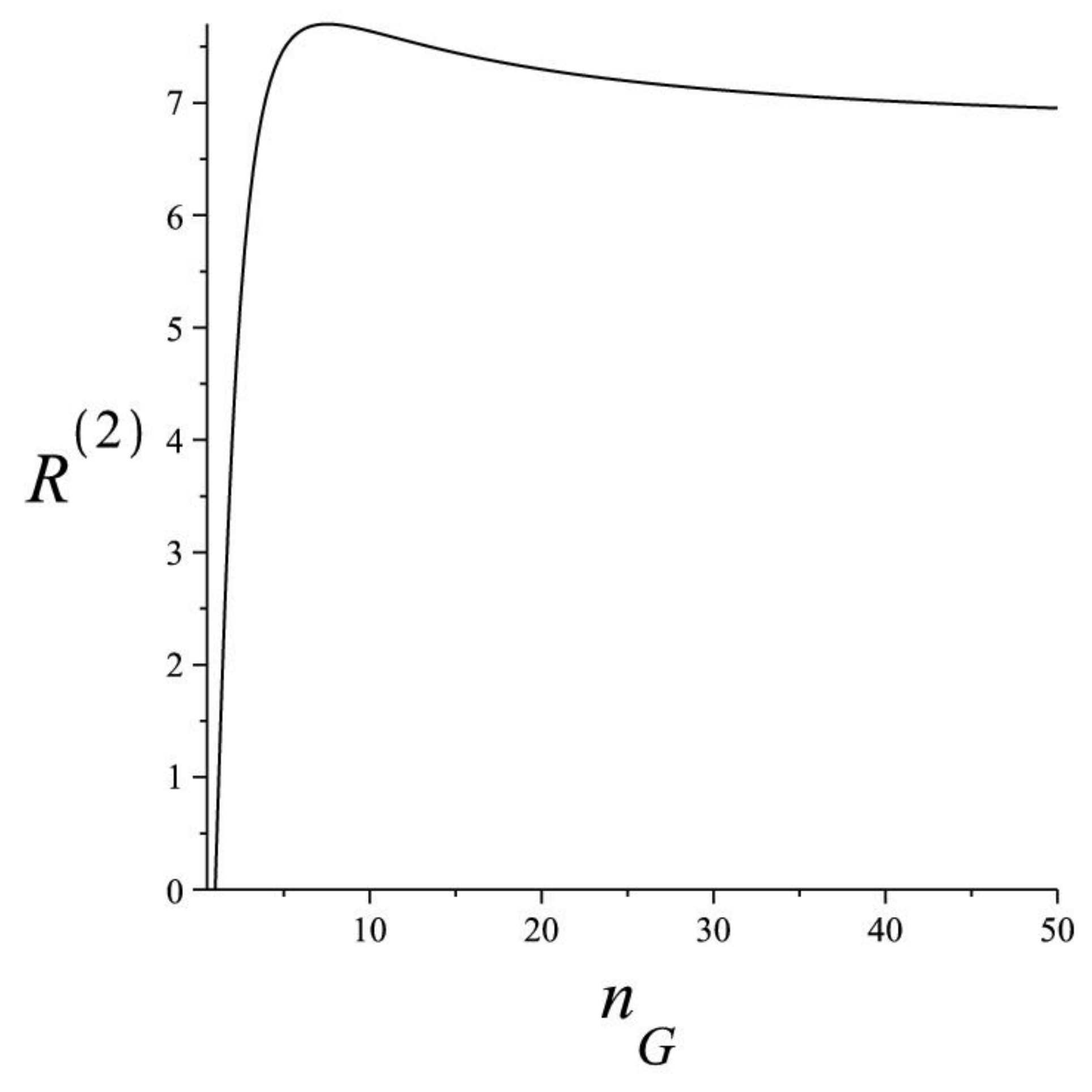}
\end{center}
\caption{$\vec{R}^{(2)}$ of the nodes in the complete graph not part of the line as a function of $n_G$}
\label{pg}
\end{figure} 

Looking at the function we can see two things, first of all a larger number of nodes in the graph will increase the rank of the nodes in it. We can also see a hint that it seems to be converging towards a value as $n_G$ gets large. Since the chance of escaping the graph decreases as $n_G$ increases we can expect it to eventually keep nearly all of it resulting in the PageRank of all the nodes in the complete graph approaching $1/(1-c) \approx 6.67, c=0.85$.

For the node in the complete graph thats part of the line as well we get the result in Fig.~\ref{pj}.
\begin{figure}[!hbt]
\begin{center}
\includegraphics[width=.4\textwidth]{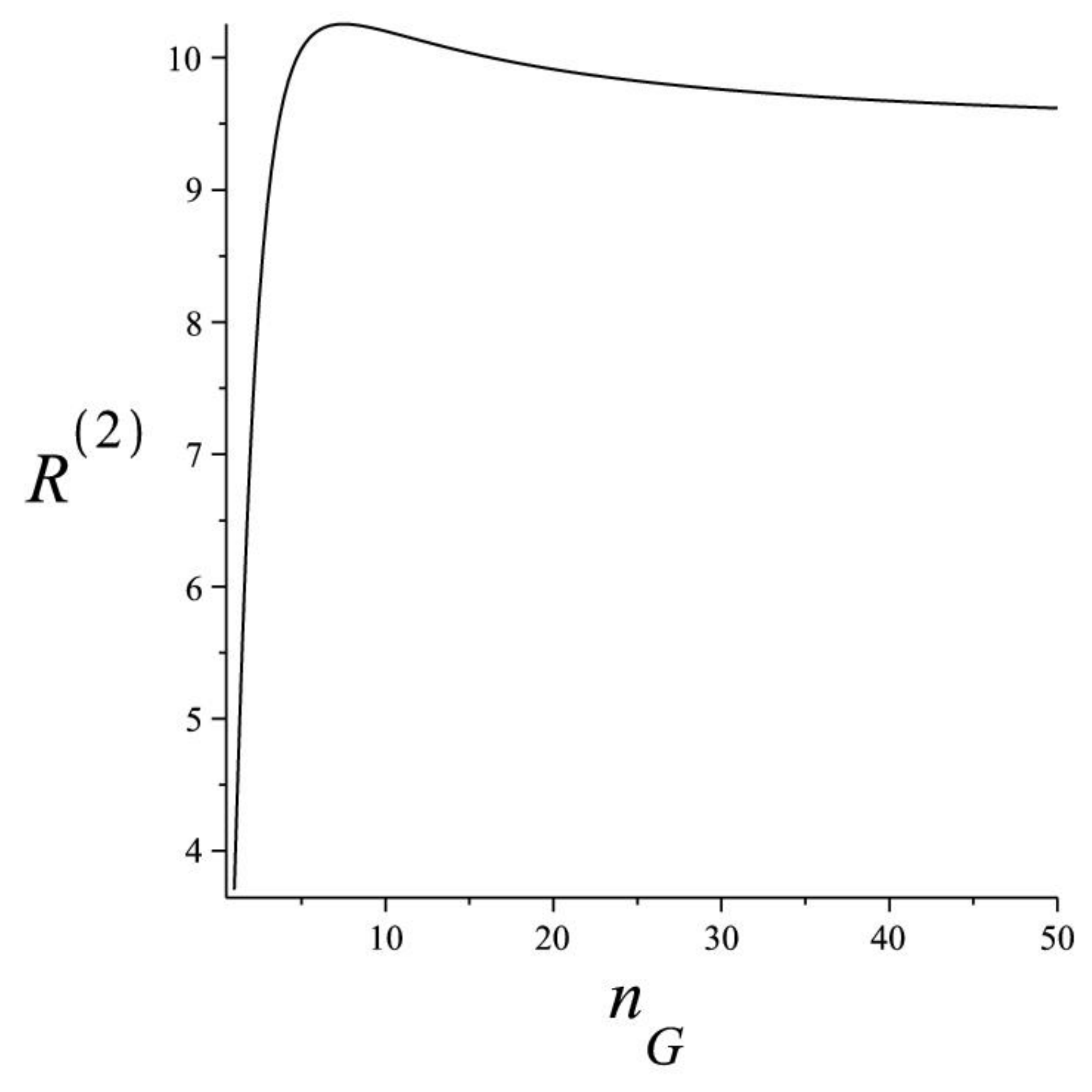}
\end{center}
\caption{$\vec{R}^{(2)}$ of the node in the complete graph thats part of the line as a function of $n_G$}
\label{pj}
\end{figure} 

Here we see something curious, the node seems to be gaining rank in the beginning while starting to fall after a while and possibly converging towards a value in the same way as the other nodes in the complete graph. The reason we get a local maximum is the fact that for a moderately large $n_G$ we maximize the probability of $\vec{R}^{(2)}_{L,j}$ getting back to itself while keeping the complete graph lare enough to keep most probability for itself. Here we can see that its not always a good idea for an individual node to join a complete graph. If the node in question already have larger PageRank than the other nodes in the complete graph it actually might lose PageRank from joining it. 

The result for the node below the complete graph we get the result in Fig.~\ref{pi5}.
\begin{figure}[!hbt]
\begin{center}
\includegraphics[width=.4\textwidth]{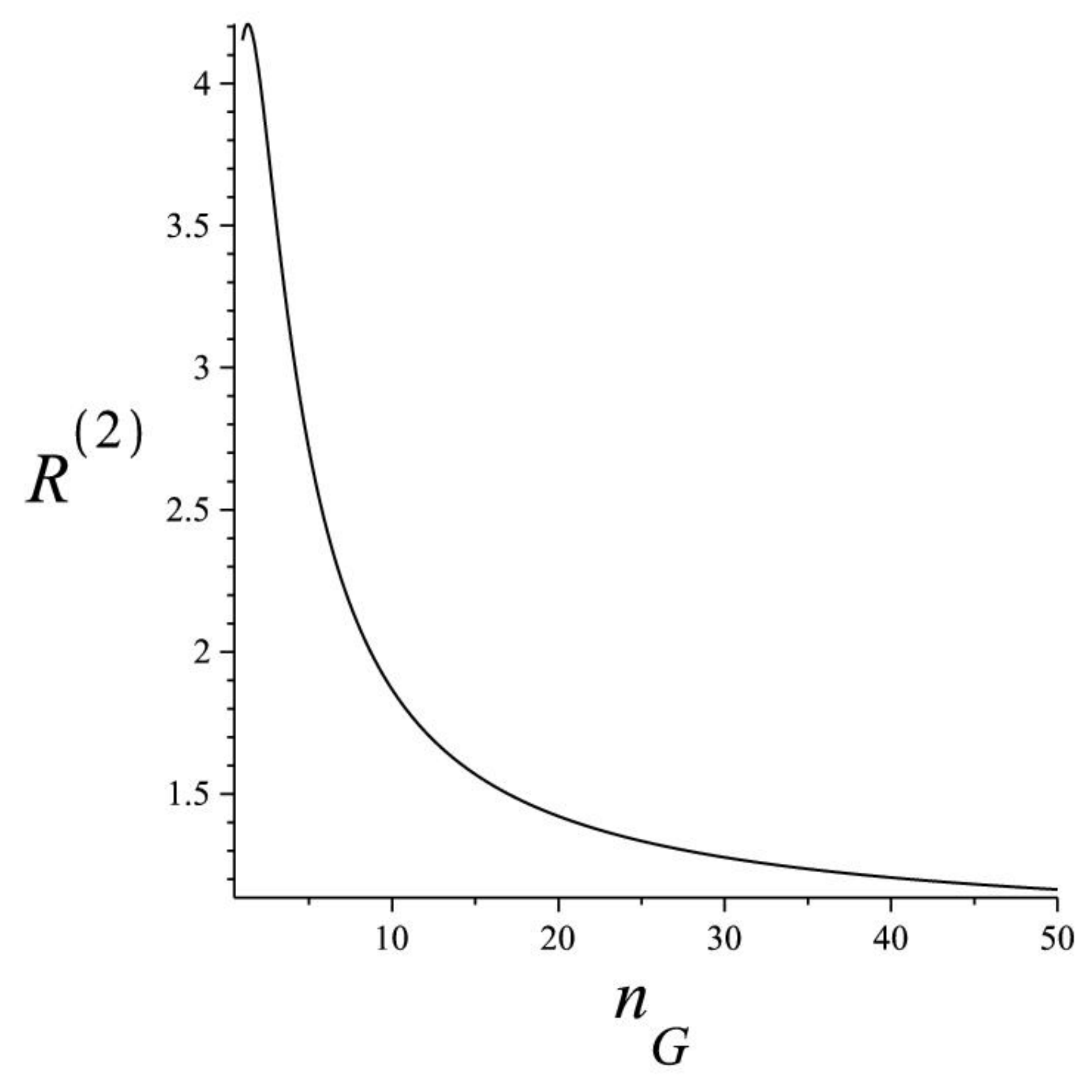}
\end{center}
\caption{$\vec{R}^{(2)}$ of the nodes in the line below the complete graph as a function of $n_G$}
\label{pi5}
\end{figure} 

Here we see the great loser as $n_G$ increases. Since the chance of escaping the complete graph depends on $\vec{R}^{(2)}_{S_G,j}[S_L \leftrightarrow S_G]/n_G$ as $n_G$ increases so does this nodes PageRank as well. From this we see a clear example of the effects of complete graphs on its surrounding nodes. A complete graph can be seen as a type of sink, all links to the complete graph will be used to maximum effect within the complete graph. And even worse, even if the complete graph have some nodes that point out of it their influence will be very small since the nodes in the complete graph having a large number of links the chance of escaping is low.

\subsection{A look at the normalized PageRank for the line connected with a complete graph}\label{formula_norm}
Looking at the normalized PageRank in our last example with a simple line with one node being part of a complete graph we want to see how the PageRank changes as $c$ or the relation between the size of the line or complete graph changes. 

\subsubsection{Dependence on c}
Plotting the PageRank with $n_G=10,n_L=10,j=6$ and $c \in [0.01,0.99]$ we get the following results. For the node just above the complete graph $\vec{R}^{(1)}_{S_L,i}[S_L \leftrightarrow S_G],i=7$ we get the result in Fig.~\ref{pi7N}.
\begin{figure}[!hbt]
\begin{center}
\includegraphics[width=.4\textwidth]{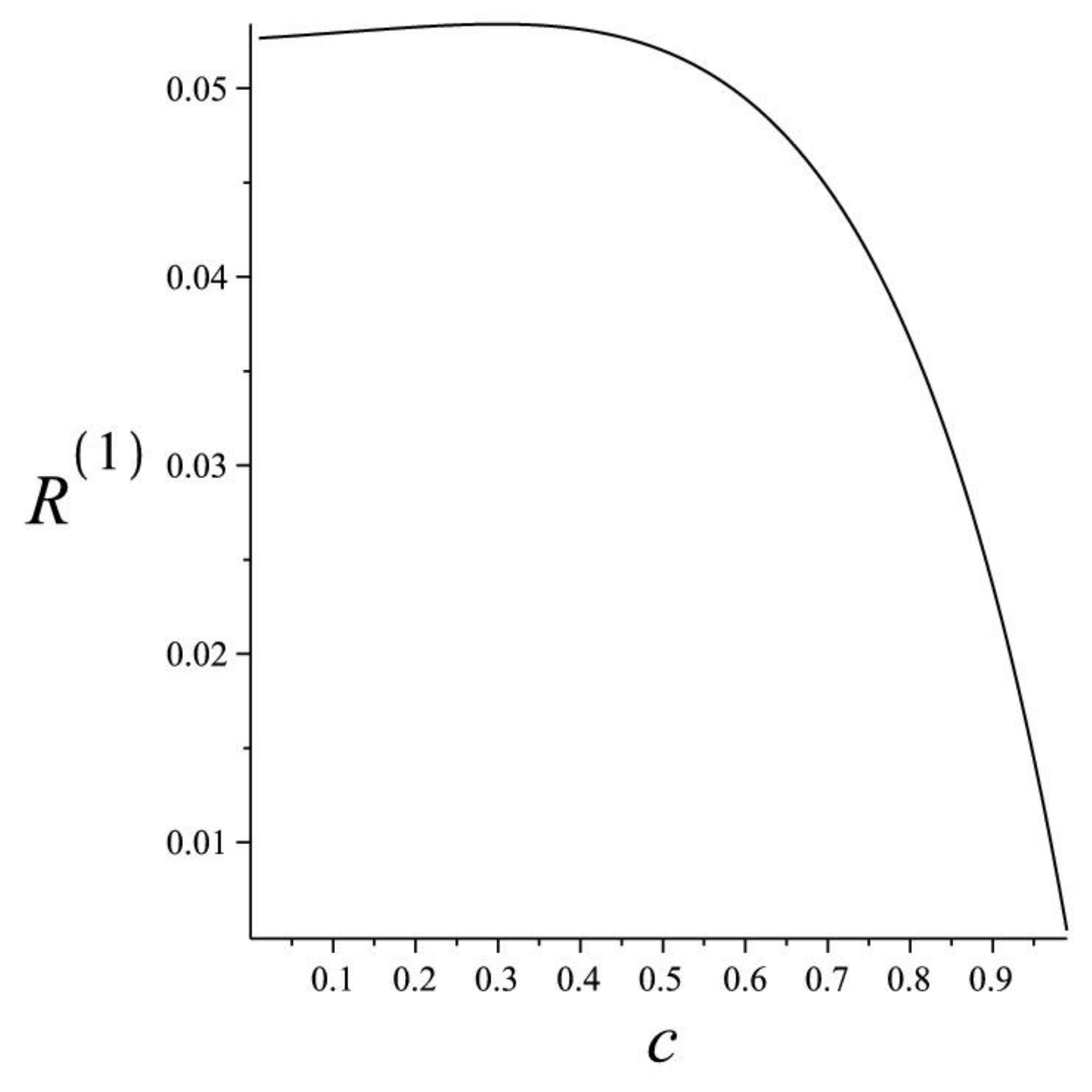}
\end{center}
\caption{$\vec{R}^{(1)}$ of the node above the complete graph as a function of $c$}
\label{pi7N}
\end{figure}

Here we see that the function seems to have a max at about $c=0.55$ after which it decreases faster the closer to $c=1$ it gets. We find the $c$ which maximize the function for some other different parameters $n_G,n_L,j,i$ in the table below. All the local max/min is calculated using the optimization tool in Maple $15$. 
\begin{table}
\caption{Maximum PageRank $\vec{R}^{(1)}$ of node $i$ "above" the complete graph depending on $c$ for various changes in the graph where one node in a simple line is part of a complete graph.}
\label{tab:1}       
\begin{tabular}{llllll}
\hline\noalign{\smallskip}
 $n_G$ 	&$n_L$& $j$&$i$&$c_{\text{max}}$&max\\
\noalign{\smallskip}\hline\noalign{\smallskip}
 5			&10&6&7& 0.349& 0.073\\
 10			&10&6&7& 0.300& 0.053\\
 20			&10&6&7& 0.248& 0.035\\
 10			&20&6&7& 0.751& 0.370\\
 20			&20&6&7& 0.721& 0.027\\
 50			&50&6&7& 0.874& 0.010\\
 10			&10&9&10& 0.000& 0.053\\
 10			&10&3&4& 0.515& 0.054\\
10			&10&6&9& 0.300& 0.053\\
\noalign{\smallskip}\hline
\end{tabular}
\end{table}

As seen the location of the maximum seems to be moving towards the left as $n_G$ increases and towards the right as $n_L$ increases. In the same manner it moves towards the left as $i$ get closer to $n_L$. The value of the maximum is only included out of completeness, it is natural that they decrease as either $n_G$ or $n_L$ increases as we in those cases get a larger number of total nodes in the system. It is interesting to note that the max seems to be going towards the right as both $n_G,n_L$ increases as well. 
Looking at the node in the line being a part of the complete graph we get the result in Fig.~\ref{pjN}.
\begin{figure}[!hbt]
\begin{center}
\includegraphics[width=.4\textwidth]{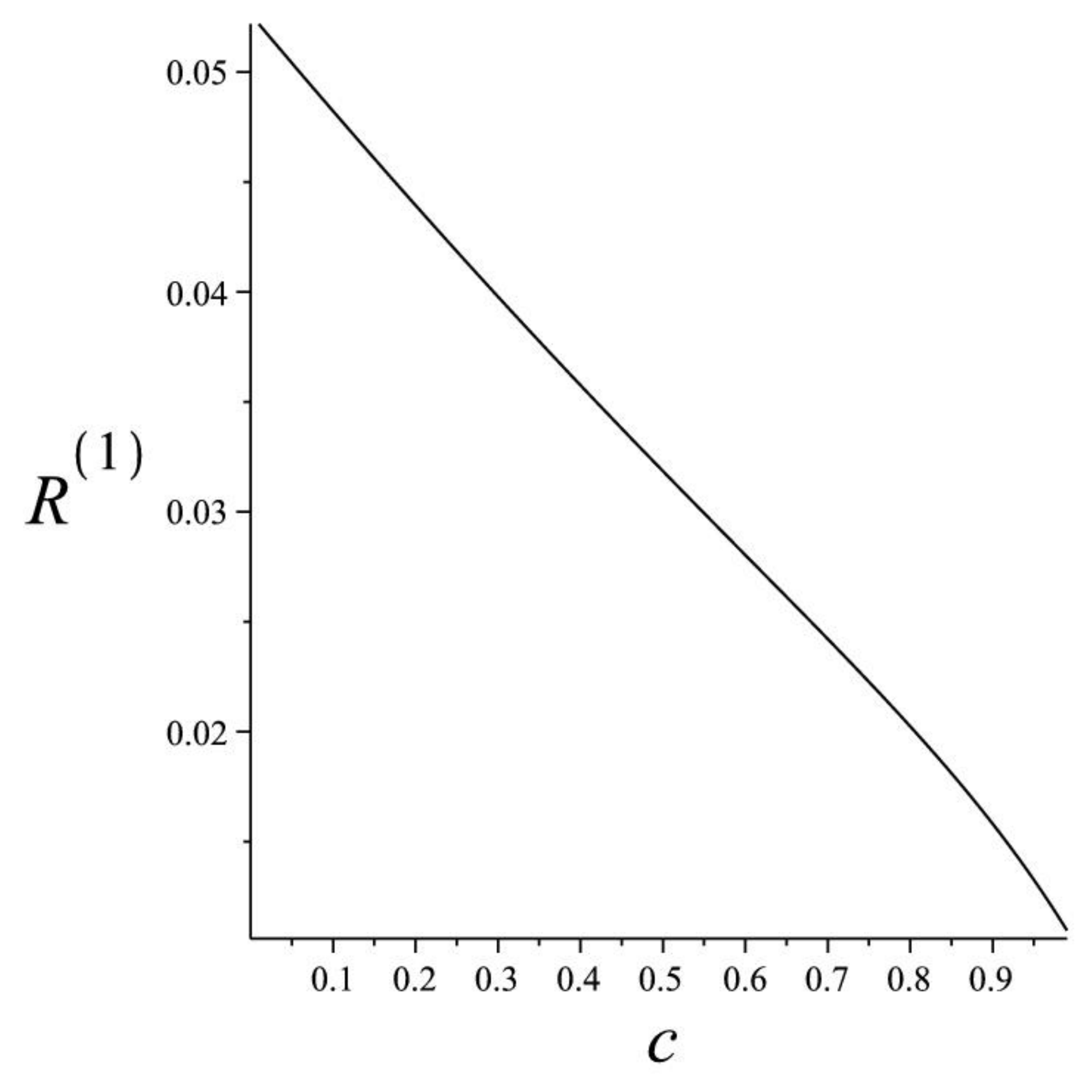}
\end{center}
\caption{$\vec{R}^{(1)}$ of the node in the line being a part of the complete graph as a function of $c$}
\label{pjN}
\end{figure}  
\begin{table}
\caption{Maximum PageRank $\vec{R}^{(1)}$ of node $j$ depending on $c$ for various changes in the graph where node $j$ in a simple line is part of a complete graph.}
\label{tab:2}       
\begin{tabular}{llllll}
\hline\noalign{\smallskip}
 $n_G$ 	&$n_L$& $j$&$c_{\text{max}}$&max\\
\noalign{\smallskip}\hline\noalign{\smallskip}
 5			&10&6& 1& 0.164\\
 10			&10&6& 0.894& 0.099\\
 20			&10&6& 0.776& 0.059\\
 10			&20&6& 1& 0.096\\
 20			&20&6& 0.929& 0.056\\
 50			&50&6& 0.965& 0.023\\
 10			&10&9& 1& 0.091\\
 10			&10&3& 0.893& 0.107\\
\noalign{\smallskip}\hline
\end{tabular}
\end{table}

Here we see the great "winner" as $c$ increases. Do note the difference in the axis for the different images, since this at its lowest point is actually about the same as the highest for the node above the complete graph. The PageRank of this node is the largest when $c$ is large, sometimes with a local maximum and sometimes not. It seems to be that as the number of nodes in the complete graph increases we are more likely to find a local maximum than not. For the node just below the complete graph we get the result in Fig.~\ref{pi5N}.
\begin{figure}[!hbt]
\begin{center}
\includegraphics[width=.4\textwidth]{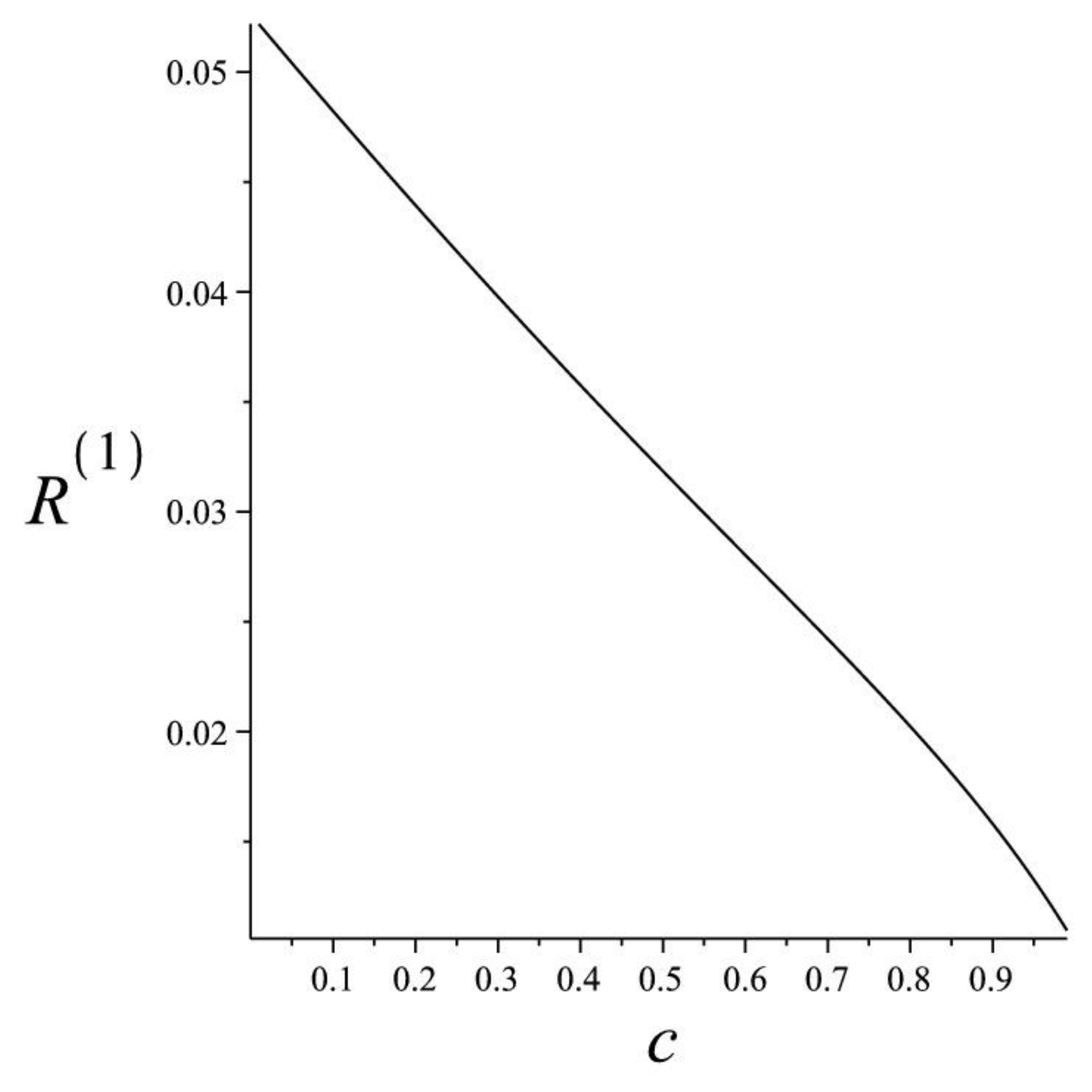}
\end{center}
\caption{$\vec{R}^{(1)}$ of the node below the complete graph as a function of $c$}
\label{pi5N}
\end{figure}

PageRank decreases as $c$ increases, but compared to the nodes above the complete graph not as fast for large $c$. This since the PageRank of the nodes in the complete graph increase so fast for large $c$ that even the comparatively small influence it have on the nodes out of it is enough to at least stop the extremely rapid loss of rank as for the nodes above the complete graph. 

Last we got the PageRank of the other nodes in the complete graph in Fig.~\ref{pgN}.
\begin{figure}[!hbt]
\begin{center}
\includegraphics[width=.4\textwidth]{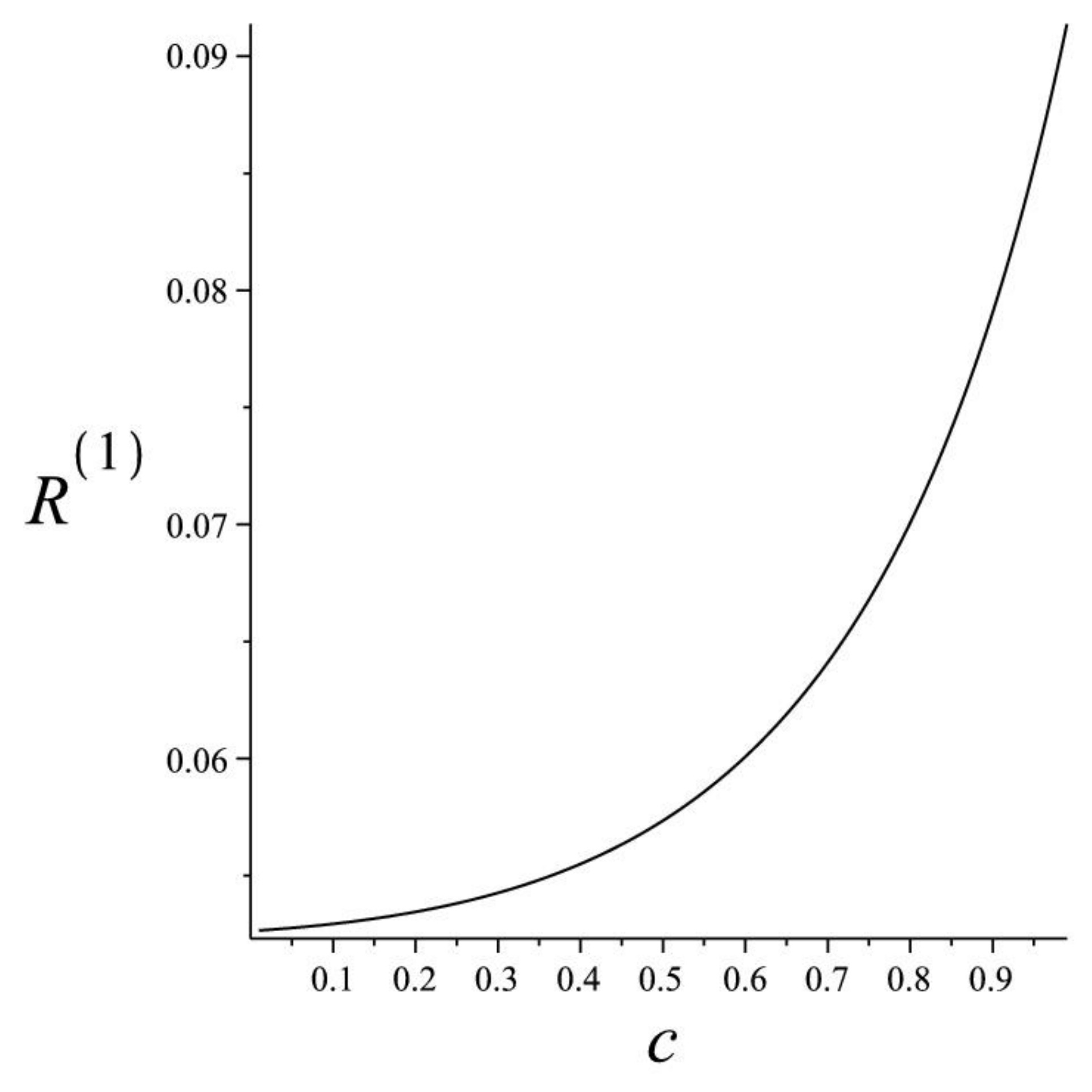}
\end{center}
\caption{$\vec{R}^{(1)}$ of a node in the complete graph as a function of $c$}
\label{pgN}
\end{figure} 

As with the node in both the line and complete graph, PageRank increases very fast for large $c$. We once again see a hint to why a to large $c$ could be problematic, it is for large $c$ we get the largest relative changes in PageRank between nodes. We have no min/max here, instead PageRank increases faster and faster the larger $c$ gets. 

We note that these local maximum and minimum are not always present. In these cases we have a PageRank thats decreasing as $c$ increases for the whole interval. If the one exist we can expect the other to as well (since we expect the rank to decrease at the end of the interval). It is hard to say anything conclusive about the location or existence of local maximum or minimum points, but we do note that they exist. There is also a large difference in how PageRank changes for different (especially large) $c$, we can therefor expect $c$ to have an effect not only in the final rank and the computational aspect, but also the final ranking order of pages. 

\subsubsection{A look at the partial derivatives with respect to c}\label{ddcpN}
Since we have the formulas for the normalized PageRank it is also possible to find the partial derivatives. Since the partial derivatives result in very large expressions (multiple pages each) they are not included here. By setting $n_G=n_L=10,j=6$ we get the result after taking the partial derivative with respect to $c$ for $0.05<c<0.95$ for the node $e_{L,7}$ above the complete graph in Fig.~\ref{ddcnormPI7}. 
\begin{figure}[!!hbt]
\begin{center}
\includegraphics[width=.38\textwidth]{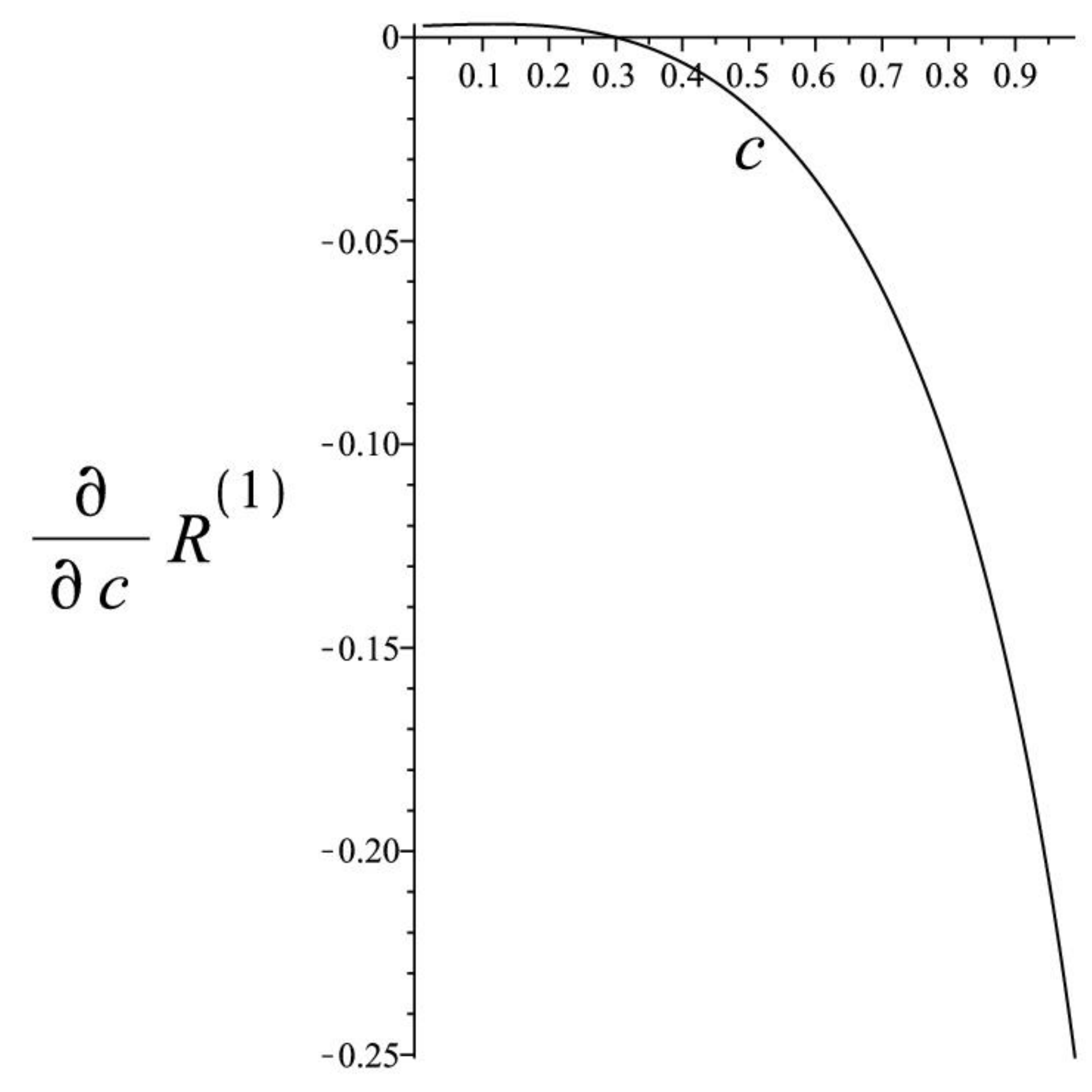}
\end{center}
\caption{Partial derivative with respect to $c$ of normalized PageRank of the node in the line above the complete graph}
\label{ddcnormPI7}
\end{figure} 

We see the derivative falling faster as $c$ increases. Here as well we see the more dramatic changes in large $c$ above about $0.8$. Apart from seeing the maximum at around $c=0.3$ in the original function we can also see that the derivative seems to briefly increase in the beginning, reaching a maximum at about $c=0.1$. For the node part of both the line and the complete graph we get the result in Fig.~\ref{ddcnormPJ}.
\begin{figure}[!!hbt]
\begin{center}
\includegraphics[width=.38\textwidth]{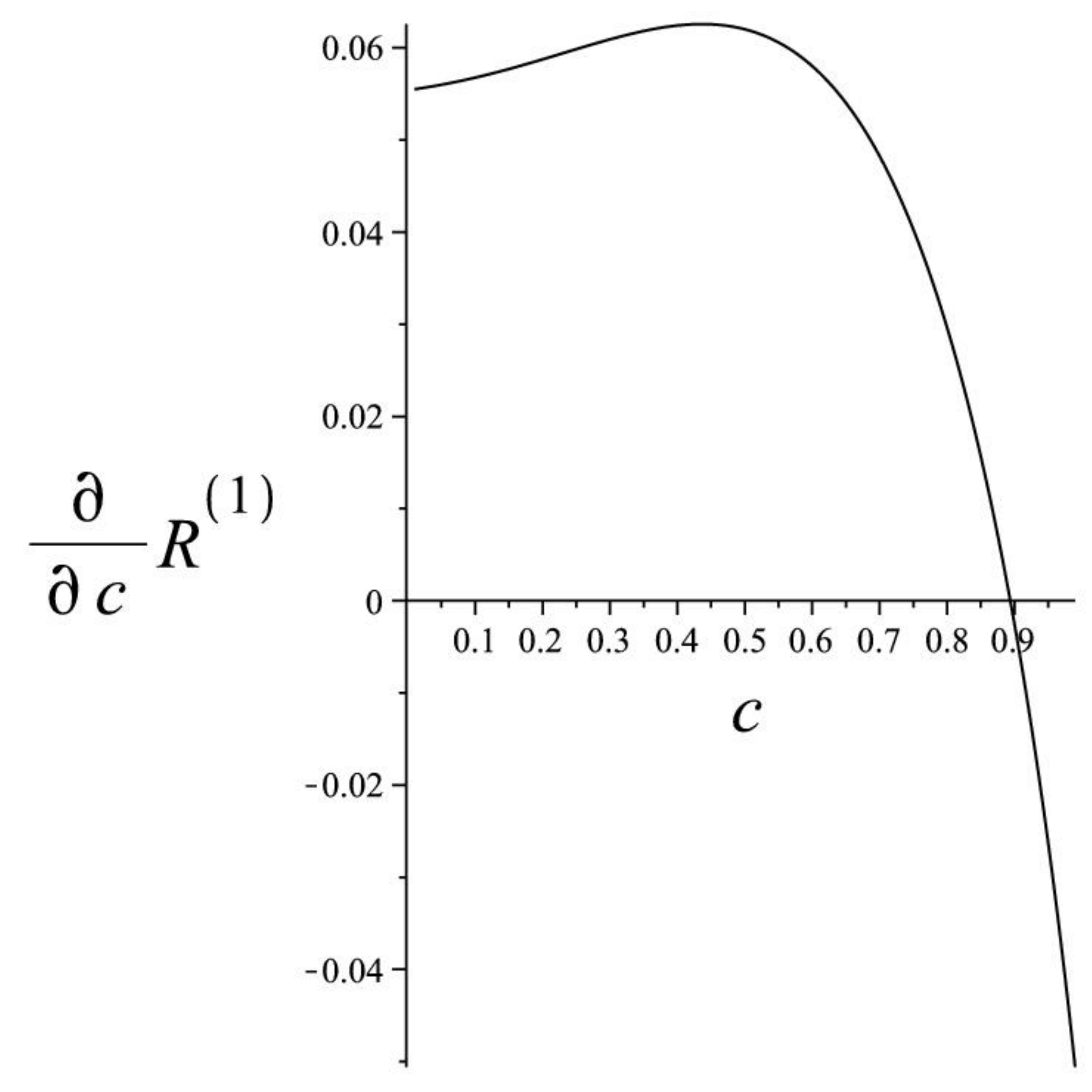}
\end{center}
\caption{Partial derivative with respect to $c$ of normalized PageRank of the node in the line part of the complete graph}
\label{ddcnormPJ}
\end{figure} 

We can see a high derivative all the way until we get to very large $c$ where it finally starts going down. We can clearly see the maximum at about $c\approx 0.9$ in the original function. For the node on the line below the complete graph we get the result in Fig.~\ref{ddcnormPI5}.
\begin{figure}[!!hbt]
\begin{center}
\includegraphics[width=.4\textwidth]{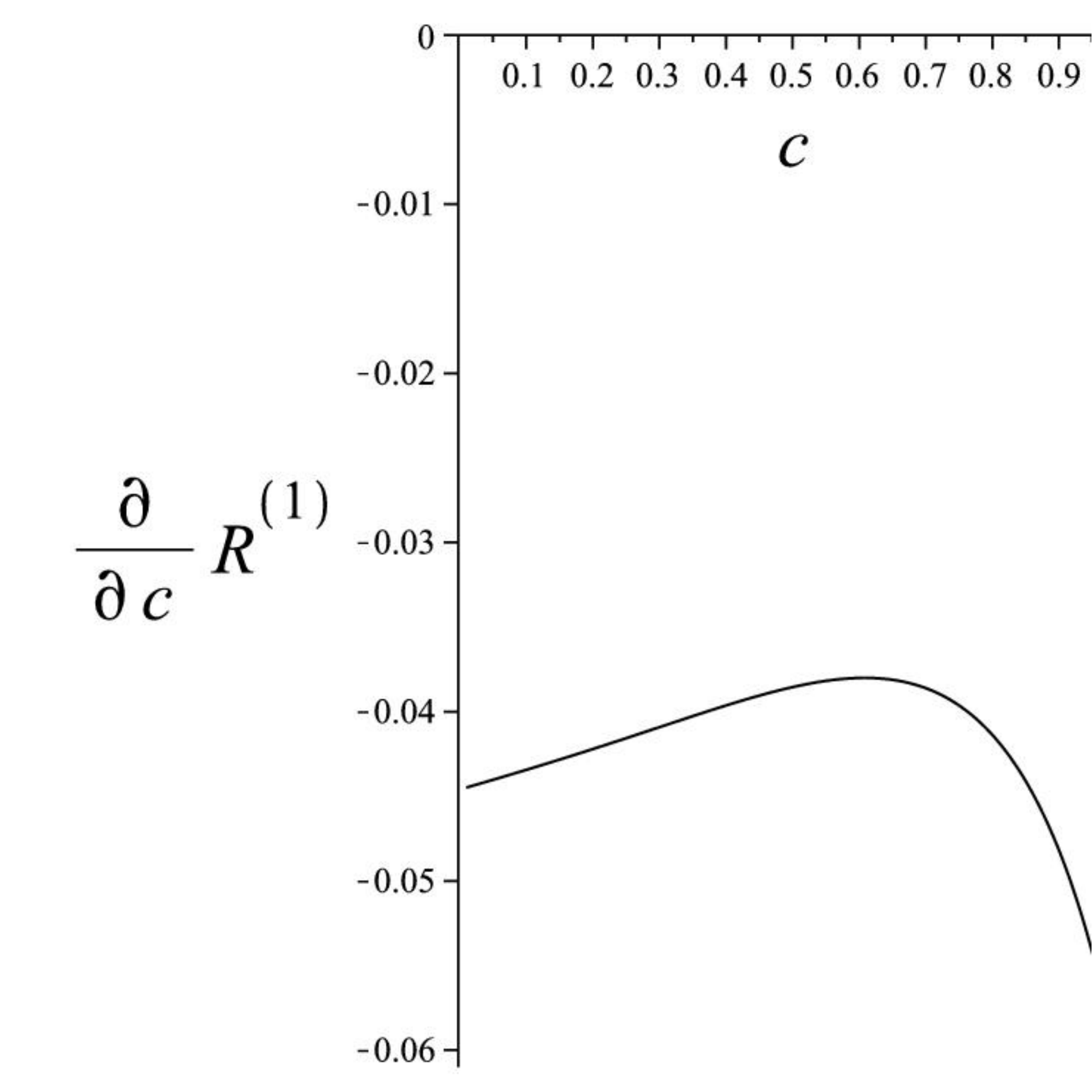}
\end{center}
\caption{Partial derivative with respect to $c$ of normalized PageRank of the node in the line below the complete graph}
\label{ddcnormPI5}
\end{figure} 

Although the derivative is decreasing for all $c$, the derivative have a local maximum at about $c\approx 0.6$.Worth to note is that the axis can be a little misleading, the partial derivative is in fact not that close to $0$ at the local maximum. As before the largest changes are at high $c$. Worth to note that the derivative is decreasing for all $c$. For the nodes in the complete graph not part of the line we get the result found in Fig.~\ref{ddcnormPG}
\begin{figure}[!!hbt]
\begin{center}
\includegraphics[width=.4\textwidth]{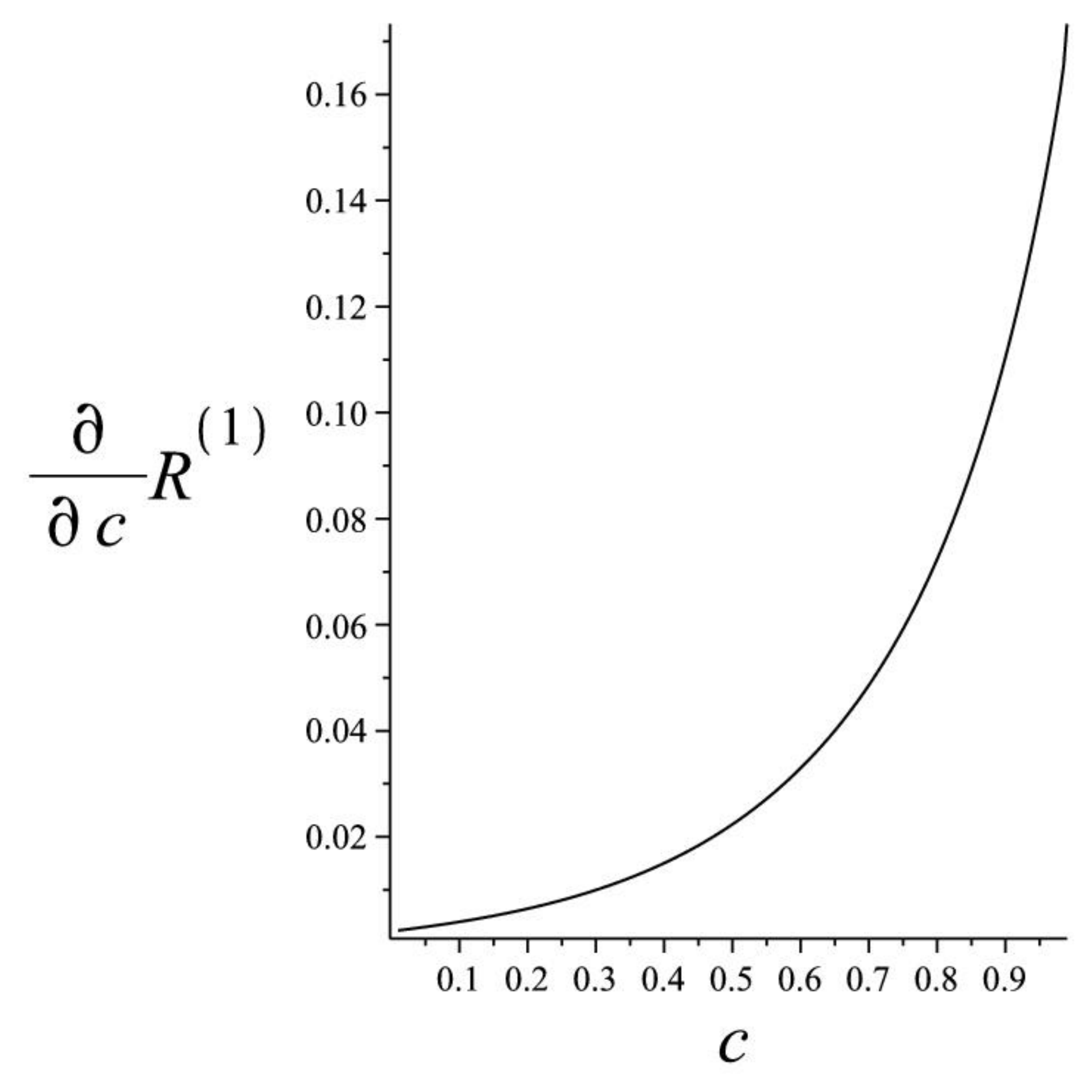}
\end{center}
\caption{Partial derivative with respect to $c$ of normalized PageRank of a node in the complete graph not part of the line}
\label{ddcnormPG}
\end{figure} 

As before the largest changes are found at large $c$. Compared to the node part of both the complete graph and the line the derivative for the ones only in the complete graph continue to increase as $c$ increases, however the PageRank itself is not actually ever higher for the ones not part of the line. We have seen that although it is possible to find symbolic expressions for the PageRank and derivative for some simple graphs, as the complexity of the graph increases it becomes very hard to do. Already for these simple examples the partial derivatives a rather large and complicated expressions.  Finding more general symbolic expressions for when the derivative is $zero$ should be possible although problematic given the constraints and size of the problem. 

\subsection{The effect of changing the weight vector V}\label{diffV}
From the equation system we see that the inverse matrix in the solution $\vec{R}^{(2)}$ does not depend on $\vec{V}$. While there is usually the system matrix $\tens{A}$ that rapidly changes making calculating PageRank in this way unpractical since we need to calculate the inverse of a huge matrix when doing changes (in the case of Internet pages). If we instead would have a mostly static system but with varying weight vector $V$ it might be useful to use this representation instead since calculating the new PageRank would then be a simple matrix-vector multiplication. We can also see that changing an element in $\vec{V}$ to zero from the uniform case $1/n_S$ has two effects. First of all we know that it would change the nodes PageRank by at least a constant amount ($1$ from uniform $\vec{V}$) the rest of the probability $c/1-c \approx 5.67 , c=0.85$ might be lost from other nodes in the vicinity. If the node has no outgoing links the PageRank of all other nodes will be unaffected. In the other case where the node in question is not a dangling node nor is it possible to reach any dangling nodes from it, all of it will be removed somewhere. We find the maximum that can be lost by setting a nodes weight to zero assuming a previous weight of $1$ as $1$ plus what we get if all nodes it link to link directly back to it and do not link to anything else as: 
\begin{equation}\sum_{k=0}^{\infty}{c^{2k}} = \frac{1}{1-c^2} \approx 3.6 ,\quad c=0.85 \end{equation}
In the same way doubling $V_i$ for one node increases the PageRank of those same nodes by the same amount as they would otherwise lose hade we instead set it to zero.

Especially effective it seems to simply change $\vec{V}$ for nodes in a complete graph if they are believed to be cheating, since the complete graph is so effective in keeping its probability to itself changing $\vec{V}$ to zero for those nodes should have a very little effect in surrounding nodes apart from possibly scaling the PageRank for all the nodes in the system with a different constant in the case of the normalized PageRank $\vec{R}^{(1)}$.   

\subsection{A comparison of normalized and non-normalized PageRank}
Here we will take a short look at the difference between normalized $(\vec{R}^{(1)})$ and non normalized $(\vec{R}^{(2)})$ PageRank in order to get a bigger understanding of the differences between them. We already know that $\vec{R}^{2} \propto \vec{R}^{(1)}$ so there will always be the same relation between the PageRank of two nodes. Here we will take a look at how the absolute difference between nodes and the two types of PageRank differ instead. 

Since the PageRank is normalized to one in $\vec{R}^{(1)}$ we obviously get that the PageRank will decrease as the number of nodes increases, potentially making for problems with number-representation for extremely large graphs unless it is taken into account when making the implementation. This problem is not as large a problem for $\vec{R}^{(2)}$ since most nodes will have approximately the same size regardless of the size of the graph. However the possible huge relative difference between nodes is still needed to take into consideration.. We note however that with the current way to calculate $\vec{R}^{(2)}$ by solving the equation system such large systems that could potentially be a problem in $\vec{R}^{(1)}$ is simply to large for us to solve in a timely manner. 

We also have one other main difference between the normalized and non-normalized PageRank and that is with dangling nodes and how they effect the global PageRank. In $\vec{R}^{(2)}$ a dangling nodes means some of the "probability" escape the graph resulting in a lower total PageRank (but still proportional to $\vec{R}^{(1)}$). In $\vec{R}^{(1)}$ however dangling nodes can be seen as linking to all nodes and in fact behaves exactly as if they did. We illustrate the difference in a rather extreme example with a graph composed of only four dangling nodes as well as a complete graph composed of four nodes. 

An image of the systems can be seen in Fig.~\ref{CompNormNNorm} below.
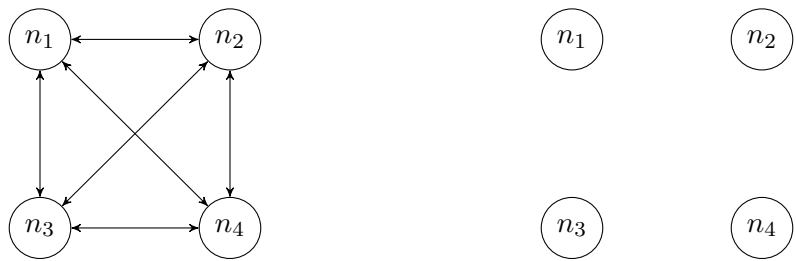
\begin{figure} [!hbt]
\begin{center}
 \begin{tikzpicture}[->,shorten >=0pt,auto,node distance=2.5cm,on grid,>=stealth',
every state/.style={circle,,draw=black,,minimum size=17pt}]
\node[state] (A)  {$n_1$};
\node[state] (B) [right=of A] {$n_2$};
\node[state] (C) [below=of A] {$n_3$};
\node[state] (D) [below =of B] {$n_4$};
\foreach \from/\to in {A/B,A/C,A/D,B/C,B/D,C/D}
\draw [<->] (\from) -- (\to);
\node[state] (E) [right =of B, xshift = 2cm] {$n_1$};
\node[state] (F) [right=of E] {$n_2$};
\node[state] (G) [below=of E] {$n_3$};
\node[state] (H) [below =of F] {$n_4$};
\end{tikzpicture}
\end{center}
\caption{A complete graph (left) and a system made of four dangling nodes (right)} 
\label{CompNormNNorm}
\end{figure}
When computing $\vec{R}^{(1)}$ of both systems assuming uniform weightvector $\vec{u}$ they are both obviously equal with PageRank $\vec{R}^{(1)} = [1/4,1/4,1/4,1/4]$, it does not even matter what $c$ we chose as long as it is between zero and one for convergence. However for the non normalized PageRank we get a large difference between the PageRank of the two systems where we for the complete graph get the PageRank $\vec{R}^{(2)}_a = [1/1-c, 1/1-c, 1/1-c, 1/1-c ]$ as seen in Sect.~\ref{CCompGraph}. However for the graph made up of only dangling nodes we get the PageRank $\vec{R}^{(2)}_b = [1,1,1,1]$ regardless of $c$. We see that while they might be proportional to each other, the non normalized version behaves differently for dangling nodes making a distinction between dangling nodes and nodes that link to all nodes (including itself which we normally do not allow). While this distinction might seem unnecessary since nodes that link to all nodes do not normally exist or similar nodes such as a node that links to all or most other nodes should either be extremely uncommon or plain do not exist as well, this might not be the case if working with smaller link structures where such a distinction might be useful. It is also this distinction that makes it possible to make comparisons of PageRank between different systems in $\vec{R}^{(2)}$ while not generally possible in $\vec{R}^{(1)}$.

\section{Conclusions}
We have seen that we can solve the resulting equation system instead of using the definition directly or using the Power method. While this method is significantly slower it has made it possible to get a bigger understanding of the different roles of the link matrix $\tens{A}$ and the weight vector $\vec{u}$. We have seen how PageRank changes when doing some small changes in a couple of simple systems and when connecting said systems. For these systems we also found explicit expressions for the PageRank and in particular two ways to find these. Either by solving the equation system itself or by calculating:
\begin{displaymath}
\left( \sum_{e_i \in S,\  e_i \neq e_g}{P(e_i\rightarrow e_g)}+1\right) \left(\sum_{k=0}^{\infty}{(P(e_g \rightarrow e_g))^k}\right)\end{displaymath}
where $P(e_i\rightarrow e_g)$ is the sum of probability of all paths from node $e_i$ to node $e_g$ and the weight vector $\vec{u}$ is uniform.

Given the expressions for PageRank we looked at the results when changing some parameters. While it is hard to say anything specific, two things seem to be true overall: The most dramatic changes happens as $c$ get large, usually somewhere where $c>0.8$ some nodes get dramatically larger PageRank compared to the other. We also see that complete graphs, while not gaining a larger rank if the graph is larger, it becomes a lot more reliable (as in not as effected in changes of individual nodes) in keeping its large PageRank as the structure get larger. 

We saw that if using uniform $\vec{V}$ it is possible to split a large system $S$ into multiple disjoint systems $S_1,S_2,\ldots S_N$ it is possible to calculate $\vec{R}^{(2)}$ for every subsystem itself and they will not differ from $\vec{R}^{(1)}$ apart from a normalizing constant that is the same across all subsystems. This is a property we would like to if possible have when using the power method as well. This since it could potentially greatly reduce the work needed primary when doing updates in the system. 

For the last part we looked at what happens in $\vec{R}^{(2)}$ when changing the weight vector $\vec{V}$. Especially we could see some guaranteed change in the constant change and we could find an upper bound in how much total difference the change can have overall. Especially effective it seems to be in lowering the PageRank of nodes in complete graphs since they keep most of their probability to themselves.

\section{Acknowledgments}
This research was supported in part by the Swedish Research Council (621-
2007-6338), Swedish Foundation for International Cooperation in Research
and Higher Education (STINT), Royal Swedish  Academy of Sciences, Royal Physiographic Society in Lund and Crafoord
Foundation.

\bibliographystyle{abbrv}      


\bibliography{ce-biblio}

%
%

\end{document}